\journalname{}
\date{~~}
\setlist{itemsep=0pt,parsep=0pt}             
\def\url@leostyle{\@ifundefined{selectfont}{\def\UrlFont{\sf}}{\def\UrlFont{\scriptsize\ttfamily}}} \makeatother
\lstdefinelanguage{MyPseudoCode}{
  columns=flexible,
  keepspaces=true,
  basicstyle=\small\ttfamily,
  morekeywords={and,each,else,false,for,fun,if,in,not,return,true,while},
  keywordstyle=\bfseries,
  identifierstyle=\slshape,
  numbers=left,
  numberstyle=\sffamily\scriptsize\color{gray},
  numbersep=5pt,
  morecomment=[l]\#,
  commentstyle=\color{gray},
  mathescape=true,
  escapeinside={(*@}{@*)},
}
\newcommand*{\inv}[0]{\ensuremath{\textit{inv}}}
\newcommand*{\insertOp}[0]{\procName{insert}\xspace}
\newcommand*{\evictOp}[0]{\procName{evict}\xspace}
\newcommand*{\myparagraph}[1]{\medskip\noindent\textbf{#1}}
\newcommand*{\procName}[1]{\mbox{\small\textit{\ttfamily#1}\xspace}}
\newcommand*{\typeName}[1]{\mbox{\small\normalfont\textsf{#1}}\xspace}
\newcommand*{\parAgg}[1]{\ensuremath{\Pi_{#1}^{\otimes}}}
\newcommand*{\identE}[0]{\ensuremath{\textbf{1}}}
\begin{document}

\title{In-Order Sliding-Window Aggregation in Worst-Case Constant Time}


\author{Kanat Tangwongsan \and
  Martin Hirzel \and
  Scott Schneider}

\authorrunning{Kanat Tangwongsan, Martin Hirzel, and Scott Schneider}

\institute{K. Tangwongsan,
  Mahidol University, 
  \email{kanat.tan@mahidol.edu}\\
  M. Hirzel,
  IBM Research,
  \email{hirzel@us.ibm.com}\\
  S. Schneider,
  IBM Research,
  \email{scott.a.s@us.ibm.com}
}


\maketitle

\begin{abstract}
  Sliding-window aggregation is a widely-used approach for extracting insights
from the most recent portion of a data stream. The aggregations of interest can
usually be expressed as binary operators that are associative but not
necessarily commutative nor invertible. Non-invertible operators, however, are
difficult to support efficiently.
In a 2017 conference paper, we introduced DABA, the first algorithm for
sliding-window aggregation with worst-case constant time. Before DABA, if a
window had size $n$, the best published algorithms would require $O(\log n)$
aggregation steps per window operation---and while for strictly in-order
streams, this bound could be improved to $O(1)$ aggregation steps on average, it
was not known how to achieve an $O(1)$ bound for the worst-case, which is
critical for latency-sensitive applications.
This article is an extended version of our 2017 paper. Besides describing DABA
in more detail, this article introduces a new variant, DABA Lite, which achieves
the same time bounds in less memory. Whereas DABA requires space for storing
$2n$ partial aggregates, DABA Lite only requires space for $n+2$ partial
aggregates.
Our experiments on synthetic and real data support the theoretical findings.

\keywords{Real-time, aggregation, continuous analytics, (de-)amortization}


\end{abstract}

\section{Introduction}\label{sec_intro}

\begin{table}
  \caption{\label{tab_operator_properties}Aggregation operators with their algebraic properties.}
  \centering
  \small
  \begin{tabular}[c]{@{}m{35mm}c@{\hspace*{1.2mm}}c@{\hspace*{1.2mm}}c@{}}
    \toprule
    & Invertible & Associative & Commutative\\
    \midrule
    \textsf{Sum-like}: sum, count, mean, geomean, stddev, ...
    & \checkmark & \checkmark & \checkmark\\
    \midrule[0.3pt]
    \textsf{Max-like}: max, min, argMax, maxCount, M4~\cite{jugel_et_al_2014}, ...
    & $\times$ & \checkmark & $\times$\\
    \midrule[0.3pt]
    \textsf{Mergeable sketch \cite{agarwal_et_al_2012}}:
    Bloom, CountMin, HyperLogLog,
    \mbox{algebraic} classifiers~\cite{izbicki_2013}, ...
    & $\times$ & \checkmark & \checkmark\\
    \hline
  \end{tabular}\par
\end{table}

Stream processing is a now-standard paradigm for handling high-speed continuous
data, spurring the development of many stream-processing
engines~\cite{apache_flink_2016,akidau_et_al_2013,ali_et_al_2011,boykin_et_al_2014,cranor_et_al_2003,hirzel_et_al_2013,kulkarni_et_al_2015,murray_et_al_2013,toshniwal_et_al_2014,zaharia_et_al_2013}.
Since stream processing is often subject to strict quality-of-service or
real-time requirements, it requires low-latency responses. As a mainstay of
stream processing, aggregation (e.g., computing the maximum, geometric mean, or
more elaborate summaries such as Bloom filters~\cite{bloom_1970}) is one of the
most common computations in streaming applications, used both standalone and as
a building block for other analytics. Unfortunately, existing techniques for
sliding-window aggregation cannot consistently guarantee low latency.

Because the newest data is often deemed more pertinent or valuable
than older data, streaming aggregation is typically performed on a
sliding window (e.g., the last hour's worth of data).  This not only
provides intuitive semantics to the end users but also helps bound the
amount of data the system has to keep around.
Following Boykin et al.~\cite{boykin_et_al_2014}, we use the term
aggregation broadly, to include both classical relational aggregation
operators such as sum, geometric mean, and maximum, as well as a more
general class of associative operators.
Table~\ref{tab_operator_properties} lists several such operators and
characterizes them by their algebraic properties.  While some
operators are invertible or commutative, many are not. This paper
focuses on algorithms that work with all associative operators,
including non-invertible and non-commutative ones.

An algorithm for sliding-window aggregation supports three operations
(formally described in Section~\ref{sec_problemdef}):
\procName{insert} for a data item's arrival, \procName{query} for
requesting the current aggregation outcome, and \procName{evict} for a
data item's departure~\cite{hirzel_schneider_tangwongsan_2017}.
This paper presents the \emph{De-Amortized Banker's Aggregator (DABA)},
a novel general-purpose sliding-window aggregation
algorithm that guarantees low-latency response on every operation---in
the worst case, not just on average.  The algorithm is simple and
supports both fixed-sized and variable-sized windows.  It works as
long as (i) the aggregation operator, denoted by $\otimes$ in this
paper, is an associative binary operator and (ii) the window has
first-in first-out (FIFO) semantics.  DABA does not require any other
properties from the $\otimes$ operator. In particular, DABA works
equally well whether $\otimes$ is invertible or non-invertible, commutative
or non-commutative.  DABA supports each of the \procName{query},
\procName{insert}, and \procName{evict} operations by making at most a
constant number of calls to the $\otimes$ operator in the worst case.
This is independent of the window size, denoted by $n$ in this
paper. If each invocation of $\otimes$ takes constant time, then the
DABA algorithm takes worst-case constant time.

We first published the DABA algorithm in a 2015 technical
report~\cite{tangwongsan_hirzel_schneider_2015} and later in a 2017
conference paper~\cite{tangwongsan_hirzel_schneider_2017}.  Prior to
the publication of DABA, the algorithms with the best
algorithmic complexity for this problem in the
published literature took $O(\log n)$
time~\cite{arasu_widom_2004,tangwongsan_et_al_2015}, i.e., not $O(1)$
time like DABA.  After the publication of DABA, there have been
other papers with algorithms that take amortized $O(1)$ time for FIFO
sliding-window aggregation
\cite{shein_chrysanthis_labrinidis_2017,tangwongsan_hirzel_schneider_2019,theodorakis_et_al_2018,villalba_berral_carrera_2019}.
However, these algorithms maintain the $O(1)$ time complexity only in
the amortized sense, i.e., not in the worst case like DABA.  In terms
of space complexity, for a window of size~$n$, DABA stores $2n$
partial aggregates. This journal version of the DABA paper also
introduces a new, previously unpublished algorithm called DABA Lite
that reduces the memory requirements to $n+2$
partial aggregates. Furthermore, this journal version has more
extensive examples and visualizations for our algorithms.

The core idea behind DABA is to start from an algorithm that has amortized
$O(1)$ time complexity and to de-amortize it. The algorithm that serves as a
starting point is Two-Stacks. It uses an old trick from functional programming
for representing a FIFO queue with two stacks, which we augment with
aggregation. A special case of this algorithm was first mentioned in a Stack
Overflow thread~\cite{adamax_2011}. The Two-Stacks algorithm has one rare---but
expensive---operation called \emph{flip} that transfers all data from one stack
to the other. The flip operation causes a latency spike, which can be
undesirable for low-latency streaming. De-amortization turns the average-case
$O(1)$ behavior of Two-Stacks into the worst-case $O(1)$ behavior of DABA by
spreading out the expensive flip operation, thus eliminating the latency spike.

Two-Stacks, DABA, and DABA Lite only work for FIFO windows, i.e., for sliding
windows over in-order streams. Handling out-of-order streams is beyond the scope
of this paper. Indeed, it has been shown that the lower bound on the time
complexity for aggregating out-of-order streams is worse than $O(1)$ unless
disorder in a stream is bounded by a
constant~\cite{tangwongsan_hirzel_schneider_2019}.
Before the emergence of constant-time sliding window aggregation
algorithms, a popular approach for achieving low latency was to use
coarse-grained windows where evictions occur in batches.  This
approach reduces the effective window size $n$, thus making algorithms
whose time complexity depends on $n$ feasible.  Pre-aggregating
inside each batch reduces the cost of aggregating across
batches~\cite{carbone_et_al_2016,krishnamurthy_wu_franklin_2006,li_et_al_2005}.
But coarse-grained windows are an approximation that does not always
satisfy application requirements.
Another popular research topic in sliding-window aggregation is
\emph{window sharing}, where aggregations for multiple different
window sizes are computed on a single data structure. While some
sliding window aggregation algorithms support window sharing in
amortized $O(1)$ time, none of them achieve worst-case $O(1)$
time~\cite{shein_chrysanthis_labrinidis_2017,tangwongsan_hirzel_schneider_2019}.
DABA and DABA Lite achieve worst-case $O(1)$ time but do not support
window sharing.

Experiments show that DABA and DABA Lite perform well in practice.  We have
implemented our new algorithms in C++ and benchmarked them against average-case
$O(1)$ algorithms. True to being worst-case $O(1)$, our results show that DABA
and DABA Lite have lower latency and competitive throughput as we increase the
window size. When the aggregation operation is cheap, the low latency and high
throughput are due to constant-time updates to a lightweight data structure.
When the aggregation operation is expensive, they are due to a low-constant number
of calls to the costly aggregation operator.

Our implementations of DABA, DABA Lite, and all other algorithms used in this
paper are available on GitHub from the open source project Sliding Window
Aggregators\footnote{\url{https://github.com/IBM/sliding-window-aggregators}}.

\section{Problem Definition}\label{sec_problemdef}

This section formalizes the problem of maintaining aggregation in a
first-in first-out (FIFO) sliding window and discusses the kinds of
aggregations supported in this work.

\subsection{Sliding-Window Aggregation Data Type}

This paper is concerned with sliding-window aggregation on in-order
streams with a first-in first-out (FIFO)
window.  In this type of window, the earliest data item to arrive is also the
earliest data item to leave the window.  Hence, the sliding window is
a queue that supports aggregation of its data.
The front of the queue contains the earliest data, the back of the
queue holds the latest data, and the aggregation is
from the earliest to the latest.  As a queue, the window is only affected by two
kinds of changes:
\begin{description}[leftmargin=1em]
\item[\emph{Data Arrival:}] The arrival of a window data item results in a new data item at
  the end of the window. This is often triggered by the arrival of a data item
  in a relevant stream.
\item[\emph{Data Eviction:}] An eviction causes the data item at the front of the
  window to be removed from the window. The choice of when this happens is
  typically controlled by the window policy (e.g., a time-based window evicts
  the earliest data item when it falls out of the time frame of interest and a
  count-based window evicts the earliest data item to keep the size fixed~\cite{gedik_2013}).
  Window eviction policies are orthogonal to the algorithms in this paper.
\end{description}

\noindent
We will model the problem of maintaining aggregation in a FIFO sliding window as an
abstract data type (ADT) with an interface similar to that of a queue.  To
begin, we review an algebraic structure called a monoid:

\myparagraph{Definition:}
  A \emph{monoid} is a triple $\mathcal{M} = (S, \otimes, \identE)$ with
  a binary operator \mbox{$\otimes\!:S \times S \to S$} on $S$ such that
  \begin{itemize}[topsep=2pt, leftmargin=1em]
  \item[--]\emph{Associativity:} For $a, b, c \in S$,
    $a \otimes (b \otimes c) = (a \otimes b) \otimes c$; and
  \item[--]\emph{Identity:} $\identE \in S$ is the identity:
    $\identE \otimes a = a = a \otimes \identE$ for all $a \in S$.
  \end{itemize}
\vspace*{1mm}

In comparison to real-number arithmetic, the $\otimes$ operator can be
seen as a generalization of arithmetic multiplication where the
identity element $\identE$ is a generalization of the number~$1$.
Some of our earlier papers instead used an analogy to arithmetic
addition with an identity element of zero. While that works equally
well, here we adopt the multiplication analogy, because it makes it
natural to adopt a shorthand notation of $abc$ for $a\otimes b\otimes c$.
That shorthand makes it easier to write detailed examples.

A monoid is \emph{commutative} if $a \otimes b = b \otimes a$ for all
$a, b \in S$.  A monoid has a \emph{left inverse} if there exists
a (known and reasonably cheap) function $\inv(\cdot)$ such that
$a \otimes b \otimes \inv(a) = b$ for all $a, b \in S$.  In general, a monoid may
not be commutative nor invertible.

In the context of aggregation, monoids strike a good balance between generality
and efficiency as was demonstrated
before~\cite{boykin_et_al_2014,tangwongsan_et_al_2015,yu_gunda_isard_2009}.  For
this reason, we focus our attention on supporting monoidal aggregation,
formulating the abstract data type as follows:

\myparagraph{Definition:}
  The first-in first-out \emph{sliding-window aggregation} (SWAG) abstract data type
  maintains a collection of window data and supports the following
  operations:
  \begin{itemize}[topsep=2pt,leftmargin=1.25em]
  \item $\procName{query}()$ returns the ordered monoidal product of the
    window data.  That is, if the sliding window contains values   
    $v_0, v_1, \dots, v_{n-1}$ in their arrival order, $\procName{query}$ 
    returns $v_0 \otimes v_1 \otimes \dots \otimes v_{n-1}$. If the window is
    empty, it returns $\identE$.

  \item $\procName{insert}(v)$ adds $v$ to the end of the
    sliding window.  That is, if the sliding window contains values
    $v_0, v_1, \dots, v_{n-1}$ in their arrival order, then
    $\procName{insert}(v)$ updates the collection to
    $v'_0, v'_1, \dots, v'_{n}$, where $v'_i = v_i$ for $i = 0, 1, \dots, n - 1$
    and $v'_{n} = v$.

  \item $\procName{evict}()$ removes the oldest item from the front of the
    sliding window. That is to say, if the sliding window contains values
    $v_0, v_1, \dots, v_{n-1}$ in their arrival order, then
    $\procName{evict}()$ updates the collection to
    $v'_0, v'_1, \dots, v'_{n-2}$, where $v'_i = v_{i+1}$ for
    $i = 0, 1, 2, \dots, n - 2$.

  \end{itemize}

\noindent
Throughout, $n$ will denote the size of the current sliding window and $v_0,
v_1, \dots, v_{n-1}$ will denote the contents of the sliding window in their
arrival order, where $v_0$ is the oldest element. SWAG itself is not a concrete
algorithm; it is merely an abstract data type, defining a set of operations with
their expected behavior. The algorithms introduced in this paper (including
Two-Stacks, DABA, and DABA Lite) are all concrete instantiations for the SWAG
abstract data type.

\subsection{Aggregation on Monoids}

Despite their simplicity, monoids are expressive enough to capture most basic
aggregations~\cite{boykin_et_al_2014,tangwongsan_et_al_2015}, as well
as more sophisticated aggregations such as maintaining approximate membership via
a Bloom filter~\cite{bloom_1970}, maintaining an approximate count of distinct
elements~\cite{flajolet_et_al_2007}, maintaining the versatile
count-min sketch~\cite{cormode_muthukrishnan_2005}, and indeed all 
operators in Table~\ref{tab_operator_properties}.

However, many aggregations (e.g., standard deviation) are not
themselves monoids but can be couched as operations on a monoid with the help of
two extra steps.  To accomplish this, prior work~\cite{tangwongsan_et_al_2015}
gives a framework for the developer to provide three types \typeName{In},
\typeName{Agg}, and \typeName{Out}, and write three functions as follows:
\begin{itemize}[topsep=2pt,leftmargin=1.25em]
\item $\procName{lift}(e: \typeName{In}): \typeName{Agg}$ takes an element of
  the input type and ``lifts'' it to an aggregation type that will
  be monoid operable.
\item $\procName{combine}(v_1: \typeName{Agg}, v_2: \typeName{Agg}): \typeName{Agg}$
  is a binary operator operating on the aggregation type.  In our paper's
  terminology, \procName{combine} is the monoidal binary operator $\otimes$.
\item $\procName{lower}(v: \typeName{Agg}): \typeName{Out}$ turns an element of
  the aggregation type into an element of the output type.
\end{itemize}

\noindent
Consider the example of maintaining the \textsf{maxcount}, which yields the
number of times the maximal value occurs in the window. Define the type
\typeName{Agg} as a pair \mbox{$\langle m,c\rangle$} comprising the maximum $m$
and its count $c$. Then, define the three functions \procName{lift},
\procName{combine}, and \procName{lower} as:

\begin{align*}
  \procName{lift}(e) &=\langle m\mapsto e, c\mapsto 1\rangle\\
  \procName{combine}(v_1,v_2) &=\left\{
    \begin{array}{l@{\quad\textrm{if}\quad}l}
      v_1 & v_1.m > v_2.m\\
      v_2 & v_1.m < v_2.m\\
      \left\langle
        \begin{array}{l@{\,\mapsto\,}l}
          m & v_1.m,\\
          c & v_1.c+v_2.c
        \end{array}\right\rangle
          & v_1.m = v_2.m\\
    \end{array}\right.\\
  \procName{lower}(v)& =v.c
\end{align*}

It is easy to show that the $\procName{combine}$ function is an
associative binary operator with identify element
\mbox{$\identE=\langle-\infty,0\rangle$}. Consequently,
$\mathcal{M}_{\textsf{maxcount}}=(\typeName{Agg},\procName{combine},\langle-\infty,0\rangle)$
is a monoid.

In this framework, a query is \emph{conceptually} answered as follows.  If the
sliding window currently contains the elements $e_0, e_1, \dots, e_{n-1}$, from
the earliest to the latest, then \procName{lift} derives
$v_i = \procName{lift}(e_i)$ for $i = 0, 1, 2, \dots, n-1$.
Then, \procName{combine}, rendered as infix $\otimes$, computes
$v = v_0 \otimes v_1 \otimes \dots \otimes v_{n-1}$. Finally,
\procName{lower} produces the final answer as $\procName{lower}(v)$.

Note that $\procName{lift}$ only needs to be applied to each element when it
first arrives and $\procName{lower}$ to query results at the
end.  Therefore, the present paper focuses exclusively on the issue of
maintaining the monoidal product---i.e., how to make as few invocations
of \procName{combine} as possible.

\subsection{Example Trace}

Using the \textsf{maxcount} monoid mentioned previously as a running example for
the following sections, we will look at a trace of window operations and their
effects on aggregations. Consider a window with the following contents, with the
oldest element on the left and the youngest on the right.

\newcommand*{\wel}[1]{\parbox{4mm}{#1}}

\noindent\hspace*{9mm}\wel{4,}\wel{5,}\wel{3,}\wel{4,}\wel{0,}\wel{4,}\wel{4,}\wel{~~}\wel{~~}\wel{~~}\textsf{max}=5, \textsf{maxcount}=1

The largest number in the window is 5, and it occurs only once, so the
maxcount is~1. The oldest element on the left is 4, which is smaller
than the current maximum~5, so evicting it does not affect the maximum
or the maxcount.

\noindent\hspace*{9mm}\wel{~~}\wel{5,}\wel{3,}\wel{4,}\wel{0,}\wel{4,}\wel{4,}\wel{~~}\wel{~~}\wel{~~}\textsf{max}=5, \textsf{maxcount}=1

If we again evict the oldest element from the left, the maximum
remaining window element becomes~4. Since the number~4 occurs thrice
in the window, the maxcount is~3. The monoid is not invertible: this
update could not have been accomplished by ``subtracting out''
information from the previous partial aggregate.

\noindent\hspace*{9mm}\wel{~~}\wel{~~}\wel{3,}\wel{4,}\wel{0,}\wel{4,}\wel{4,}\wel{~~}\wel{~~}\wel{~~}\textsf{max}=4, \textsf{maxcount}=3

Inserting~2 does not affect the maximum, and hence, it also does not
affect the maxcount.

\noindent\hspace*{9mm}\wel{~~}\wel{~~}\wel{3,}\wel{4,}\wel{0,}\wel{4,}\wel{4,}\wel{2,}\wel{~~}\wel{~~}\textsf{max}=4, \textsf{maxcount}=3

Finally, inserting~6 changes the maximum. Since the newly inserted element is
the only~6 in the window, the maxcount becomes~1.

\noindent\hspace*{9mm}\wel{~~}\wel{~~}\wel{3,}\wel{4,}\wel{0,}\wel{4,}\wel{4,}\wel{2,}\wel{6,}\wel{~~}\textsf{max}=6, \textsf{maxcount}=1

Notice that in this trace, \insertOp and \evictOp do not strictly alternate. In
general, the SWAG data type, as well as all our algorithms, places no
restrictions on how \insertOp and \evictOp may be called. They can be
arbitrarily interleaved, allowing for dynamically-sized windows.


\section{Two-Stacks}\label{sec:twostacks}

\begin{figure*}
  \begin{minipage}{\columnwidth}
    \includegraphics[width=\textwidth]{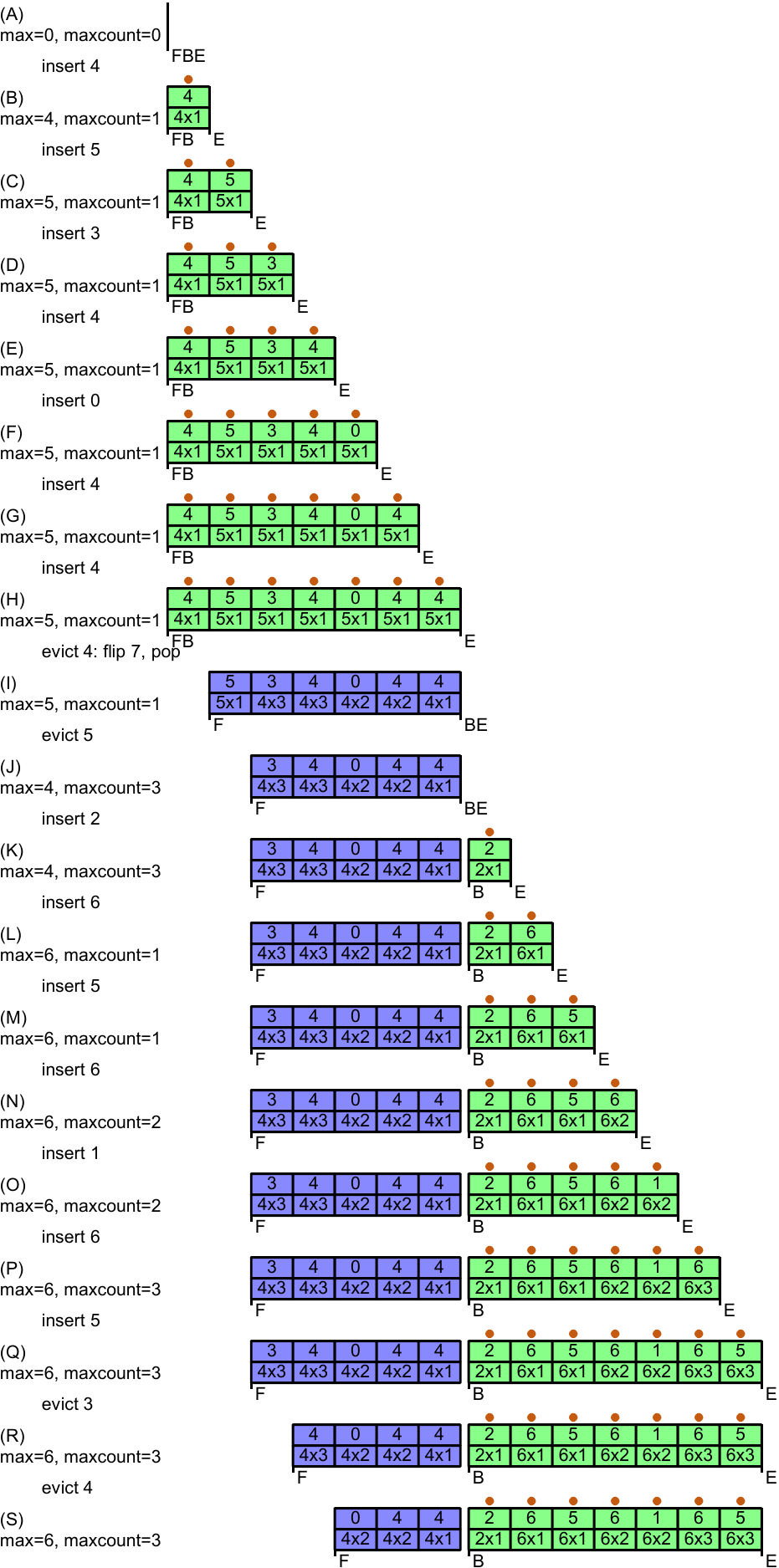}
    \caption{\label{fig_trace_twostacks_maxcount}Two-Stacks example trace for maxcount aggregation. The notation $m\times c$ is shorthand for \textsf{max=}$m$, \textsf{maxcount=}$c$.}
  \end{minipage}\hspace*{\columnsep}\begin{minipage}{\columnwidth}
    \includegraphics[width=\textwidth]{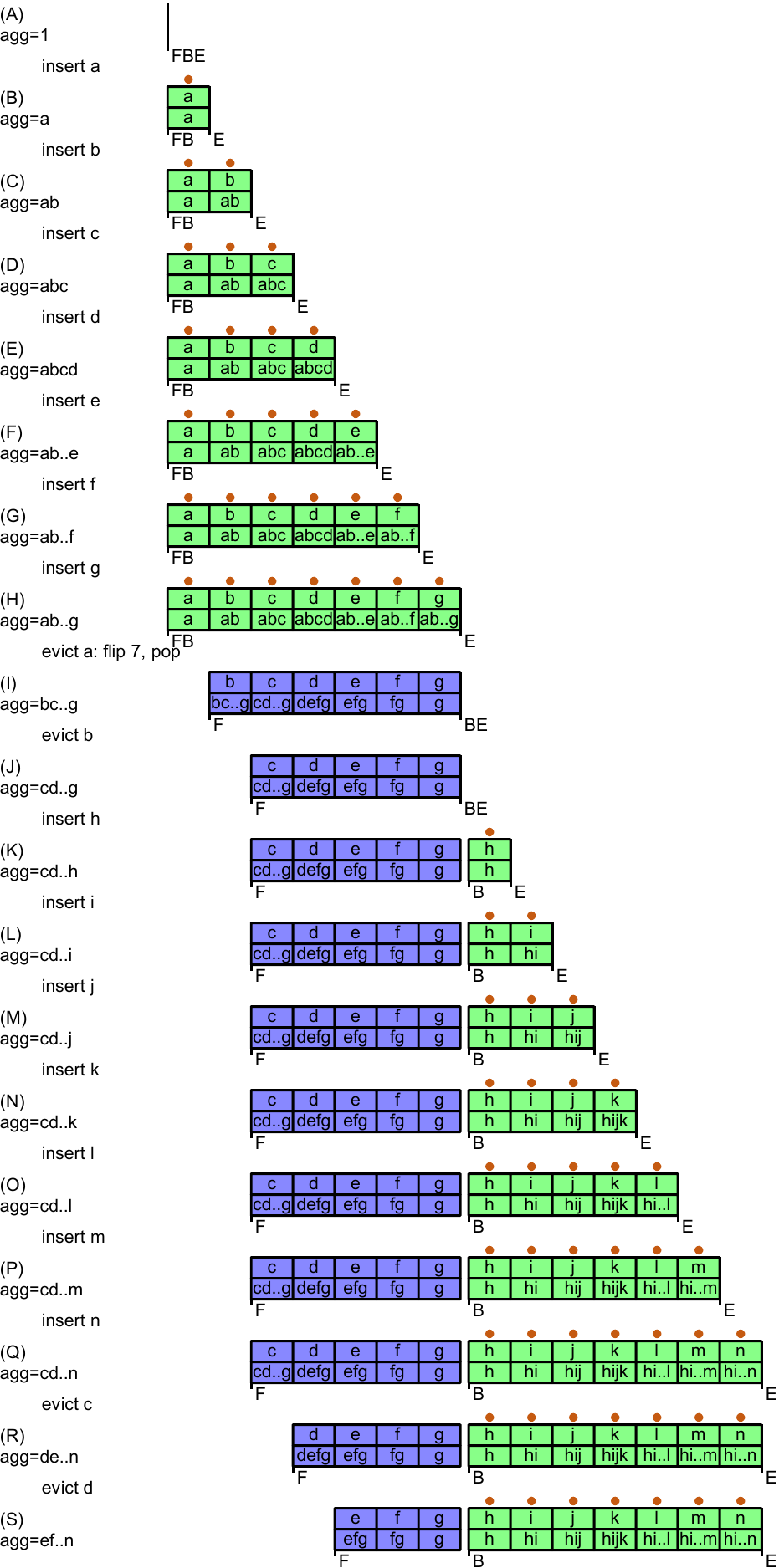}
    \caption{\label{fig_trace_twostacks_concat}Two-Stacks example trace for any aggregation. The notation for aggregates omits $\otimes$, e.g., \textsf{bc..g} is shorthand for $\textsf{b}\otimes\textsf{c}\otimes\textsf{..}\otimes\textsf{g}$.}
  \end{minipage}
\end{figure*}

Two-Stacks is a simple algorithm for in-order sliding window aggregation (SWAG).
For a window size $n$, it stores a total of $2n$ partial aggregates and
implements each SWAG operation with amortized $O(1)$ and worst-case $O(n)$
invocations of~$\otimes$.

The text of this section embeds several data-structure visualizations, which are
all taken from concrete and complete example traces shown in
Figs.~\ref{fig_trace_twostacks_maxcount} and~\ref{fig_trace_twostacks_concat}.

\paragraph{Two-Stacks Data Structure.}

As the name implies, the data structure for Two-Stacks comprises two stacks. We
refer to them as the front stack $F$ and the back stack~$B$. Here is an example
of these two stacks for \textsf{maxcount} aggregation:

\centerline{\includegraphics{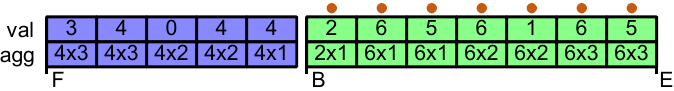}}

\noindent
The front stack is shown in blue rotated $90^\circ$ left, with its top
marked $F$ and its bottom marked~$B$.  The back stack is shown in
green rotated $90^\circ$ right, with its bottom marked $B$ and its top
marked~$E$. To avoid clutter, the visualization only shows $B$ once
to mark the bottoms of both the front stack and the back stack.
As the window slides, evictions pop elements from the
front stack on the left (at~$F$) and insertions push elements on the
back stack on the right (at~$E$). Each stack element is a struct with
two partial aggregates, \procName{val} shown on top
and \procName{agg} shown on the bottom.  The notation \mbox{$4\times
  3$} is shorthand for \textsf{max=}$4$, \textsf{maxcount=}$3$.

For amortized analysis, we use the banker's (aka.~accounting)
method~\cite{clrs_3rded}, which keeps an imaginary savings account. In this
method, every operation is amortized $O(1)$ if we can show that by billing the
user a constant amount for every operation invoked, there is enough money at all
time, without taking out a loan, to pay for the the actual work being done. We
visualize the savings as small golden ``coins'' above the elements; they are not
actual manifest in the data structure.

\paragraph{Two-Stacks Invariants.}

An invariant for a data structure that implements in-order SWAG is a property
that holds before and after every \procName{query}, \procName{insert}, or
\procName{evict}. Let $v_0,\ldots,v_{n-1}$ be the lifted partial aggregates of
the current window contents. Each \procName{val} field stores the corresponding
$v_i$. Each \procName{agg} field stores the partial aggregate of the
corresponding $v_i$ and all other values below it in the same stack. Formally,
if $F[i]$ and $B[i]$ denote the $i^\textrm{th}$ element of $F$ and $B$
indexed from the left starting at index~$0$:\\[-3mm]

\centerline{$\begin{array}{l@{\;}l}
  & \forall i\in 0\ldots |F|-1:\;
    F[i].\procName{val} = v_i\\
  \textrm{and}
  & \forall i\in 0\ldots |F|-1:\;
    F[i].\procName{agg} = v_i\otimes\ldots\otimes v_{|F|-1}\\
  \textrm{and}
  & \forall i\in |F|\ldots |F|+|B|-1:
    B[i-|F|].\procName{val} = v_i\\
  \textrm{and}
  & \forall i\in |F|\ldots |F|+|B|-1:
    B[i-|F|].\procName{agg} = v_{|F|}\otimes\ldots\otimes v_i\\
\end{array}$}

\noindent
The front stack aggregates to the right (easy eviction from the left). The
back stack aggregates to the left (easy insertion from the right).
Here is a visual example of the invariants:

\centerline{\includegraphics{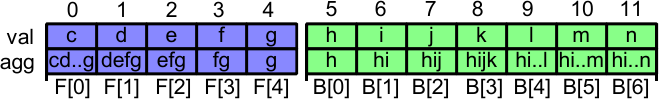}}

\noindent
The notation \typeName{cd..g} is shorthand for
\mbox{$\typeName{c}\otimes\typeName{d}\otimes\ldots\otimes\typeName{g}$}. To
work correctly regardless of commutativity, the aggregation in both stacks is
ordered from the older elements on the left to newer elements on the right. For
example, we take care to aggregate $\typeName{f}\otimes\typeName{g}$ instead of
$\typeName{g}\otimes\typeName{f}$ because \typeName{f} is older than
\typeName{g}.

\paragraph{Two-Stacks Algorithm.}

Being an algorithm that implements in-order SWAG, Two-Stacks needs to
define the functions \procName{query}, \procName{insert}, and
\procName{evict}. But first, we will define two private helper
functions that retrieve the partial aggregate of the entire front
stack $F$ and back stack $B$, respectively.

\begin{lstlisting}[xleftmargin=4mm]
fun $\parAgg{F}$
  if F.isEmpty() return $\identE$ else return F.top().agg
fun $\parAgg{B}$
  if B.isEmpty() return $\identE$ else return B.top().agg
\end{lstlisting}

\noindent
Recall that $\identE$ is the identity element of the monoid. These
helpers return the correct values in constant time, thanks to the
invariants discussed previously. Function \procName{query} combines
the results of both helpers, using a single invocation of~$\otimes$.

\begin{lstlisting}[xleftmargin=4mm,firstnumber=last]
fun query()
  return $\parAgg{F}$ $\otimes$ $\parAgg{B}$
\end{lstlisting}

\noindent
As an example, given the following data structure state, $\parAgg{F}$
is $4\times 3$ and $\parAgg{B}$ is $6\times 2$, so \procName{query} returns
the maximum~$6\times 2$.

\noindent\includegraphics{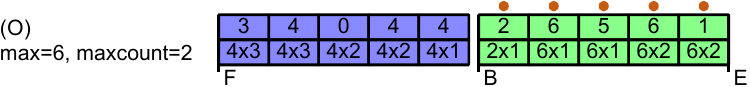}

\noindent
Next, to define \procName{insert} and \procName{evict}, we can assume
that the invariants hold before function calls and must guarantee that
they hold afterwards. The \procName{insert} function just pushes
onto~$B$, taking constant time. Assuming the invariants hold for the
old top of~$B$, \procName{insert} guarantees the invariants for the
new top of~$B$ by setting its partial aggregate \procName{agg} to
$\parAgg{B}\otimes v$.

\begin{lstlisting}[xleftmargin=4mm,firstnumber=last]
fun insert($v$)
  B.push($v$, $\parAgg{B}$ $\otimes$ $v$)
\end{lstlisting}

\noindent
The following example illustrates \procName{insert}.

\noindent\includegraphics{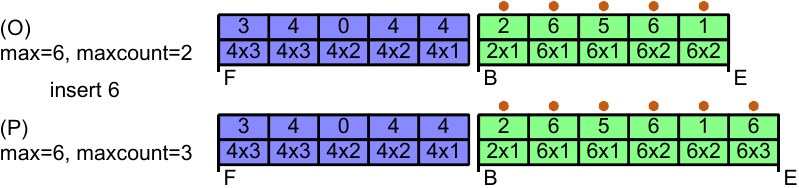}

\noindent
Finally, \procName{evict} pops from $F$ after first ensuring
that $F$ is nonempty.

\begin{lstlisting}[xleftmargin=4mm,firstnumber=last]
fun evict()
  if F.isEmpty()    # Flip
    while not B.isEmpty()
      F.push(B.top().val, B.top().val $\otimes$ $\parAgg{F}$)
      B.pop()
  F.pop()
\end{lstlisting}

\noindent
If $F$ is nonempty, then \procName{evict} is trivial, for example:

\noindent\includegraphics[width=\columnwidth]{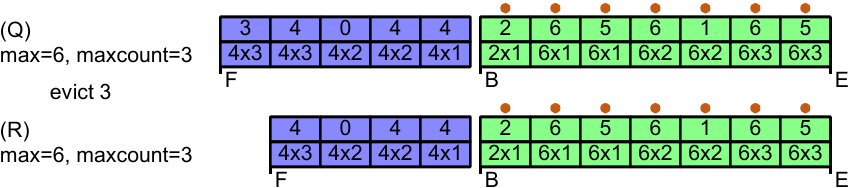}

\noindent
On the other hand, if $F$ is empty, then \procName{evict} must first do a
\emph{flip}. The \emph{flip} operation pushes all values from $B$ onto $F$ and
reverses the direction of the aggregation. In other words, it establishes that
the \procName{agg} fields satisfy the invariant for~$F$. After the flip,
\procName{evict} simply does a \procName{pop} as before.

\noindent\includegraphics[width=\columnwidth]{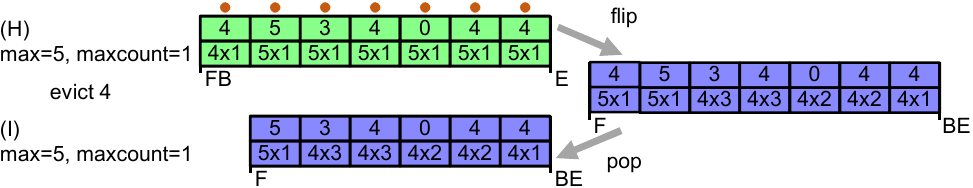}

\noindent
Because of the reversal loop, \procName{flip} takes $O(n)$ time, where $n$ is
the current number of elements. Nevertheless, \procName{evict} only takes
amortized $O(1)$ time, as we will see below.

\paragraph{Two-Stacks Theorems.}

\begin{lemma}\label{trm_twostacks_invariants}
  Two-Stacks maintains the invariants listed above.
\end{lemma}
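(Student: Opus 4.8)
The plan is to induct on the sequence of operations applied to an initially empty data structure, proving that each of \procName{query}, \procName{insert}, and \procName{evict} turns a state satisfying the four invariants into another such state. The base case is the empty structure, where $|F|=|B|=0$ and the window is empty, so all four universally quantified statements hold vacuously. For the inductive step I assume the invariants hold before the call and check them afterward. \procName{query} is read-only---it alters neither stack nor window---so it preserves everything trivially. For \procName{insert}$(v)$ the window grows from $v_0,\ldots,v_{n-1}$ to $v_0,\ldots,v_{n-1},v$ (so $v_n=v$) while $|F|$ is unchanged, hence the two front-stack invariants carry over verbatim and only the freshly pushed top of $B$ must be checked: its \procName{val} is $v=v_n$, and its \procName{agg} is $\parAgg{B}\otimes v$, which by the old back-stack invariant equals $(v_{|F|}\otimes\ldots\otimes v_{n-1})\otimes v_n=v_{|F|}\otimes\ldots\otimes v_n$, exactly what the back-stack invariant demands at the new index $n$ (with $\parAgg{B}=\identE$ covering an empty $B$).

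The substantive operation is \procName{evict}. When $F$ is nonempty it merely pops the top of $F$, which holds $v_0$; the window shifts to $v'_i=v_{i+1}$ and $|F|$ drops by one. The key point is that the front stack aggregates \emph{to the right}, so for each surviving element the old $F[i+1].\procName{agg}=v_{i+1}\otimes\ldots\otimes v_{|F|-1}$ is, after re-indexing to $F'[i]$, precisely the required $v'_i\otimes\ldots\otimes v'_{|F'|-1}$; no \procName{agg} field is recomputed. The back stack is physically untouched, but its indices shift because $|F|$ decreased. A direct substitution confirms consistency: writing $f=|F|$ and using physical index $k$, the stored $B[k].\procName{val}=v_{f+k}$ and $B[k].\procName{agg}=v_f\otimes\ldots\otimes v_{f+k}$ coincide with the new requirements $v'_{f-1+k}$ and $v'_{f-1}\otimes\ldots\otimes v'_{f-1+k}$, since $v'_{j}=v_{j+1}$.

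When $F$ is empty, \procName{evict} first performs a \emph{flip} and then pops. Here $|F|=0$, so $B$ holds the entire window with $B[i].\procName{agg}=v_0\otimes\ldots\otimes v_i$, and the flip loop repeatedly moves the top of $B$ onto the top of $F$ while reversing the aggregation direction. I would prove its correctness with an inner loop invariant: after $k$ iterations, $F$ holds the suffix $v_{n-k},\ldots,v_{n-1}$ correctly right-aggregated---its top element stores $v_{n-k}$ with \procName{agg} equal to $v_{n-k}\otimes\ldots\otimes v_{n-1}$---while $B$ still holds the prefix $v_0,\ldots,v_{n-k-1}$ with its original left-aggregated \procName{agg} fields. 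The inductive step follows because the iteration pushes $B.\procName{top}().\procName{val}=v_{n-k-1}$ with \procName{agg} $=v_{n-k-1}\otimes\parAgg{F}=v_{n-k-1}\otimes(v_{n-k}\otimes\ldots\otimes v_{n-1})$, extending the right-aggregated suffix by one element. On termination $B$ is empty and $F$ holds all $n$ elements satisfying the front-stack invariants, after which the trailing pop reduces to the nonempty-$F$ case already settled.

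The main obstacle is the flip. One must track the stack orientation with care---that index $0$ in the left-to-right numbering of the invariants is the physical \emph{top} of each stack, so the element popped last from $B$ (namely $v_0$) becomes $F[0]$---and must verify that turning a left-aggregated stack into a right-aggregated one is correct element by element rather than merely at the endpoints. The other delicate point is the index bookkeeping for the back stack in the nonempty-$F$ eviction, where a decremented $|F|$ must be reconciled against a physically unchanged $B$; an off-by-one there is easy to miss. I would therefore write those two calculations out in full while dispatching \procName{query} and \procName{insert} quickly.
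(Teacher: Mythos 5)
Your proposal is correct and follows the same route as the paper's proof, which simply asserts that \procName{query} changes nothing, \procName{insert} sets \procName{agg} correctly for the new top of $B$, and \procName{evict} sets \procName{agg} correctly for all of $F$ during flip; you have merely carried out in full the index bookkeeping and the flip loop invariant that the paper leaves implicit. Your added care about stack orientation (index $0$ being the physical top) and the re-indexing of $B$ after a pop is exactly the right place to spend the effort.
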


\begin{proof}
  The \procName{query} function does not modify the stacks and thus
  does not change the invariants. The \procName{insert} function
  maintains the invariants by correctly setting \procName{agg} for the
  newly pushed element. The \procName{evict} function maintains the
  invariants by correctly setting \procName{agg} for all elements of
  $F$ during flip.~\hfill\qed
\end{proof}

\begin{theorem}\label{trm_twostacks_correctness}
  If the window currently contains \mbox{$v_0,\ldots,v_{n-1}$}, then
  \lstinline{query} returns \mbox{$v_0\otimes\ldots\otimes v_{n-1}$}.
\end{theorem}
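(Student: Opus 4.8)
The plan is to reduce the claim to a single application of associativity, using the structural invariants of Lemma~\ref{trm_twostacks_invariants} as the workhorse. Since \lstinline{query} does not itself modify the stacks, I may take the invariants to hold at the moment of the call, so the contents of $F$ and $B$ faithfully encode the window $v_0,\ldots,v_{n-1}$ with $n=|F|+|B|$.

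First I would pin down what the two helpers return by identifying the top of each stack. The top of the front stack $F$ is its leftmost element $F[0]$, so when $F$ is nonempty the \procName{agg} invariant instantiated at $i=0$ gives $\parAgg{F}=F[0].\procName{agg}=v_0\otimes\ldots\otimes v_{|F|-1}$. Symmetrically, the top of the back stack $B$ is its rightmost element $B[|B|-1]$, and the \procName{agg} invariant instantiated at $i=|F|+|B|-1$ gives $\parAgg{B}=B[|B|-1].\procName{agg}=v_{|F|}\otimes\ldots\otimes v_{n-1}$.

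Next I would combine the two. When both stacks are nonempty, \lstinline{query} returns
\[
\parAgg{F}\otimes\parAgg{B}
=(v_0\otimes\ldots\otimes v_{|F|-1})\otimes(v_{|F|}\otimes\ldots\otimes v_{n-1})
=v_0\otimes\ldots\otimes v_{n-1},
\]
where the last equality is exactly associativity of $\otimes$. The remaining care lies in the boundary cases, which I expect to be the only delicate point, since the invariants constrain only the nonempty index ranges: when $F$ is empty the helper returns $\identE$ and the window is carried entirely by $B$, and likewise when $B$ is empty. In each such case the identity law $\identE\otimes a=a=a\otimes\identE$ absorbs the empty side and recovers the desired product, and when both stacks are empty ($n=0$) the function returns $\identE\otimes\identE=\identE$, matching the specified output for an empty window. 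Assembling these cases completes the proof.
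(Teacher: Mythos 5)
Your proposal is correct and follows essentially the same route as the paper's proof: invoke Lemma~\ref{trm_twostacks_invariants} to identify $\parAgg{F}$ and $\parAgg{B}$ as the prefix and suffix products split at index $|F|$, then join them with one application of associativity. The only difference is that you spell out the empty-stack boundary cases explicitly, which the paper leaves implicit in the definition of the helpers returning $\identE$; this is a harmless (and slightly more careful) elaboration, not a different argument.
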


\begin{proof}
  Using Lemma~\ref{trm_twostacks_invariants},\\[1mm]
\centerline{$\begin{array}{l@{\;}l@{\,\otimes\ldots\otimes\,}l@{\,\otimes\,}l@{\,\otimes\ldots\otimes\,}l}
    \multicolumn{5}{l}{\procName{query}()}\\
  = & \multicolumn{2}{l@{\,\otimes\,}}{\parAgg{F}} & \multicolumn{2}{l}{\parAgg{B}}\\
  = & v_0 & v_{|F|-1} & v_{|F|} & v_{|F|+|B|-1}\\
  = & \multicolumn{4}{l}{v_0\otimes\ldots\otimes v_{n-1}}
\end{array}$}
~\hfill\qed
\end{proof}

\begin{theorem}\label{trm_twostacks_complexity}
  Two-Stacks requires space to store $2n$ partial aggregates.
  Each call to \procName{query} and
  \procName{insert} invokes $\otimes$ exactly one time. Each call to
  \procName{evict} invokes $\otimes$ at most $n$ times and amortized
  $1$ time.
\end{theorem}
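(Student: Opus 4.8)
The plan is to verify the theorem's four claims---the space bound, the per-call cost of \procName{query}, the per-call cost of \procName{insert}, and the worst-case and amortized costs of \procName{evict}---one at a time, since the first three and the worst-case evict bound follow directly from the code and the invariants, leaving only the amortized evict bound to require real work. For space, the invariants of Lemma~\ref{trm_twostacks_invariants} partition the window so that $F$ holds $v_0,\ldots,v_{|F|-1}$ and $B$ holds the rest, giving $|F|+|B|=n$; since each stack entry stores two partial aggregates (\procName{val} and \procName{agg}), the total is $2n$. For the per-call counts, I would note that the helpers $\parAgg{F}$ and $\parAgg{B}$ merely read a stored \procName{agg} field (or return $\identE$) and invoke no $\otimes$, so \procName{query} uses only the single $\otimes$ in its return expression and \procName{insert} uses only the single $\otimes$ in $\parAgg{B}\otimes v$. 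For the worst-case evict bound, the sole $\otimes$ calls lie in the flip loop, which executes once per element of $B$; because a flip is triggered only when $F$ is empty---so that all $n$ elements sit in $B$---the loop runs at most $n$ times, while the trailing \procName{pop} invokes no $\otimes$.

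The substantive part is the amortized bound on \procName{evict}, which I would establish via the banker's method described earlier. The guiding observation is that each element follows one fixed path: it is pushed onto $B$ by a single \procName{insert} (one $\otimes$), later moved to $F$ by exactly one flip (one $\otimes$), and finally removed by a single \procName{evict} (no $\otimes$), so every element is flipped at most once. Concretely, I would deposit one coin on each element as it enters $B$ and let that coin pre-pay the lone $\otimes$ of its eventual flip; equivalently, using the potential $\Phi=|B|$, the actual flip cost $|B|$ incurred by a triggering \procName{evict} is exactly cancelled by the drop $\Delta\Phi=-|B|$ as $B$ empties, whereas a non-flipping \procName{evict} leaves $\Phi$ unchanged. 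Either formulation shows the amortized $\otimes$-cost of \procName{evict} is constant, matching the claimed bound, since the one-time flip of each element is paid for exactly once out of savings.

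The main obstacle I anticipate is exactly this last step, because a single \procName{evict} may perform as many as $n$ invocations of $\otimes$ inside one flip, so the argument must smooth that spike across many cheap operations without ever letting the savings account go negative. The crux is the invariant that the savings---equivalently the potential $|B|$---always equals the number of not-yet-performed flips, which holds because \procName{insert} is the only operation that grows $B$ and the flip is the only operation that shrinks it; once this invariant is secured, the accounting closes and the remaining arithmetic is routine.
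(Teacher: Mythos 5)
Your proposal is correct and follows essentially the same route as the paper: the first three claims and the worst-case evict bound are read off the code, and the amortized bound is established by the banker's method, depositing a coin on each element when it is pushed onto $B$ and spending it to pay for that element's single flip (your potential-function restatement $\Phi=|B|$ is just the standard equivalent formulation). The only addition you make is spelling out the space and per-operation counts that the paper dismisses as immediately obvious, which is fine but not a different argument.
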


\begin{proof}
  The only part of the theorem that is not immediately obvious is the amortized
  complexity of \procName{evict}. To see this, we bill each call to
  \procName{insert} two imaginary coins: one for pushing an element onto~$B$ and
  one to go into the savings. Hence, every element in $B$, as visualized, has a
  golden coin on top of it. When \procName{flip} happens, it invokes $\otimes$
  once for every element of $B$, which is completely paid for by spending the
  coin on that element. Because billing a constant amount per operation covers
  the total cost, each operation is amortized $O(1)$. ~\hfill\qed
\end{proof}

To summarize the workings of Two-Stacks, we will have another look at
Figs.~\ref{fig_trace_twostacks_maxcount} and~\ref{fig_trace_twostacks_concat},
which show complete example traces. Insertions push to the right of the back
stack, visualized in green. Evictions pop from the left of the front stack,
visualized in blue. Given an empty front stack, \procName{evict} first performs
a flip, as show in in \mbox{Step (H)$\to$(I)}. The flip keeps values unchanged
but reverses the associated partial aggregates. In this example, there are 7
partial aggregates to flip, coming from the preceding 7 \procName{insert}
operations, which have deposited 7 coins to the savings to pay for the flip.


\section{Two-Stacks Lite}\label{sec:twostackslite}

\begin{figure*}
  \begin{minipage}{\columnwidth}
    \includegraphics[width=\textwidth]{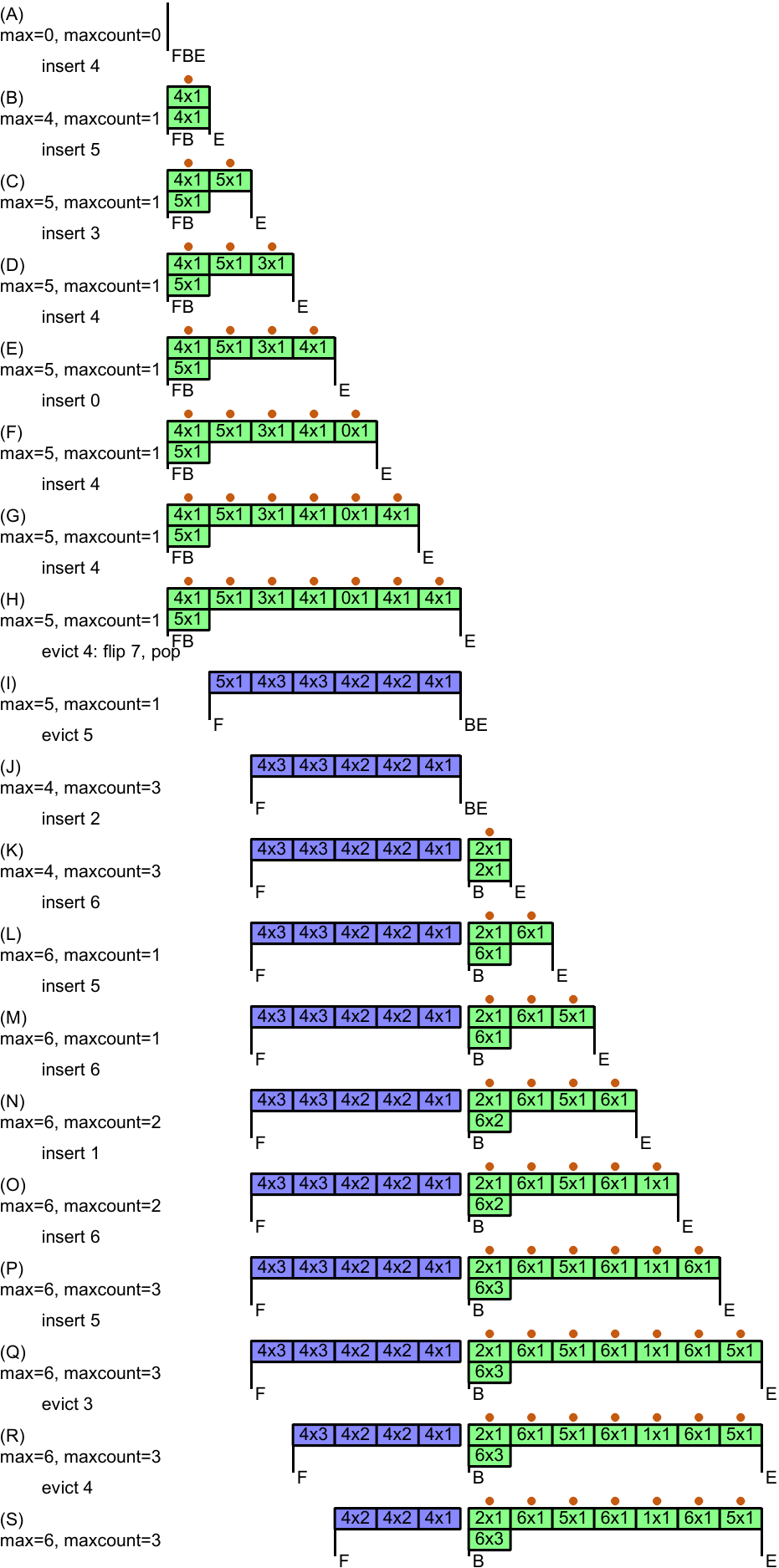}
    \caption{\label{fig_trace_2slite_maxcount}Two-Stacks Lite example trace for maxcount aggregation. The notation $m\times c$ is shorthand for \textsf{max=}$m$, \textsf{maxcount=}$c$.}
  \end{minipage}\hspace*{\columnsep}\begin{minipage}{\columnwidth}
    \includegraphics[width=\textwidth]{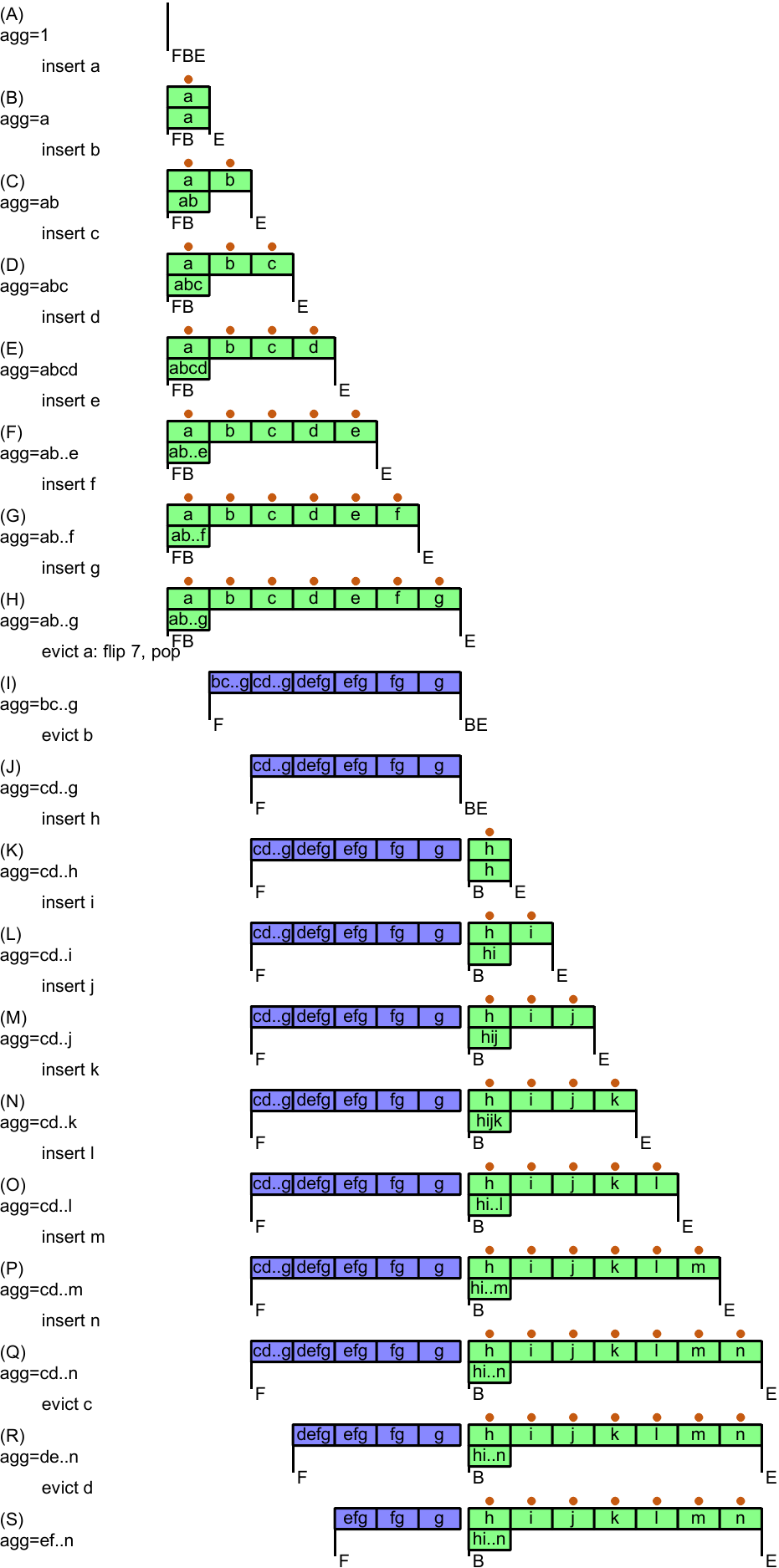}
    \caption{\label{fig_trace_2slite_concat}Two-Stacks Lite example trace for any aggregation. The notation for aggregates omits $\otimes$, e.g., \textsf{bc..f} is shorthand for $\textsf{b}\otimes\textsf{c}\otimes\textsf{..}\otimes\textsf{f}$.}
  \end{minipage}
\end{figure*}

Two-Stacks Lite improves upon Two-Stacks by reducing the space complexity from
$2n$ down to $n+1$ stored partial aggregates. It does this by exploiting the
insight that the Two-Stacks algorithm reads none of the \procName{val} fields of
the front stack and reads only the last \procName{agg} field of the back stack.
This idea comes from the Hammer Slide paper by Theodorakis et
al.~\cite{theodorakis_et_al_2018}. Another improvement is that instead of
physically maintaining two separate stacks, Two-Stacks Lite maintains a single
double-ended queue with an internal pointer $B$ to track the virtual stack
boundary. The time complexity is unchanged, at amortized $O(1)$ and worst-case
$O(n)$ invocations of $\otimes$ per SWAG operation. The data-structure
visualizations in this section are taken from the concrete example
traces shown in Figs.~\ref{fig_trace_2slite_maxcount}
and~\ref{fig_trace_2slite_concat}.

\paragraph{Two-Stacks Lite Data Structure.}

The data structure for Two-Stacks Lite comprises a double-ended queue
\procName{deque} of partial aggregates, one additional partial aggregate
\procName{aggB}, and three pointers $F$, $B$, and $E$. Pointer $F$ points to the
start of \procName{deque}, $B$ points to a location between start and end, and
$E$ points to the end. Here is an example with a max-count aggregation (the
golden ``coins'' visualizing the savings serve the same purpose as in
Two-Stacks):

\centerline{\includegraphics{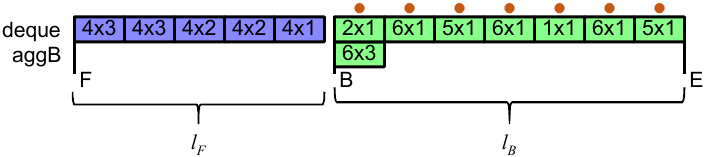}}

\begin{definition}[Pointers]\label{def:pointers}
  In this paper, a \emph{pointer} is an iterator into a resizable
  double-ended queue that supports the following basic data structure
  operations:
  \begin{itemize}
    \item dereference and read or write contents:
      $v\gets\texttt{*}p$, $\texttt{*}p\gets v$
    \item pointer comparison: $p=q$, $p\neq q$
    \item pointer increment and decrement: $p+1$, $p-1$
    \item pointer assignment: $p\gets q$
  \end{itemize}
  Our algorithms only use the above-listed pointer operations. These pointer
  operations can be implemented in worst-case $O(1)$ time over a variable-sized
  double-ended queue by implementing that queue using chunked
  arrays~\cite{tangwongsan_hirzel_schneider_2015}. For stating invariants, we
  will also use a few additional pointer operations such as $p+i$, $p-q$,
  \mbox{or $p<q$}. These additional operations do not need to be $O(1)$ since
  they are used only by invariants and not directly by our
  algorithms.~\hfill\qed
\end{definition}

\noindent
Though physically the data structure uses a single deque, the pointers
demarcate two virtual sublists, $l_F$ and~$l_B$.

\paragraph{Two-Stacks Lite Invariants.}

Let $v_0,\ldots,v_{n-1}$ be the current window contents from the oldest to the
youngest. Then, each deque element in the front sublist $l_F$
\mbox{($F\le p<B$)} stores the partial aggregate starting from the corresponding
$v_i$ up to the element before~$B$. Each deque element in the
back sublist $l_B$ \mbox{($B\le p<E$)} stores the corresponding~$v_i$. In
addition, \procName{aggB} stores the partial aggregate of all elements in $l_B$.
Formally:\\[-3mm]

\centerline{$\begin{array}{l@{\;}l}
  & \forall i\in 0\ldots B-F-1:\;
    \texttt{*}(F+i) = v_i\otimes\ldots\otimes v_{B-F-1}\\
  \textrm{and}~~
  & \forall i\in B-F\ldots E-F-1:\;
    \texttt{*}(F+i) = v_i\\
  \textrm{and}~~
  & \procName{aggB} =
    \texttt{*}B\otimes\ldots\otimes\texttt{*}(E-1)
\end{array}$}

\noindent
As an example, assume that the current window contains the values
$v_0=\typeName{c}$, $v_1=\typeName{d}$, $v_2=\typeName{e}$, etc.\ up
to $v_{10}=\typeName{m}$ and $v_{11}=\typeName{n}$. Then, if there are five
elements in $l_F$ and seven in $l_B$, the data structure looks
as follows:

\centerline{\includegraphics{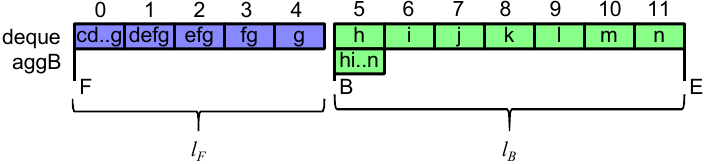}}

\noindent
The notation \typeName{cd..g} is shorthand for
\mbox{$\typeName{c}\otimes\typeName{d}\otimes\ldots\otimes\typeName{g}$}.

\paragraph{Two-Stacks Lite Algorithm.}

The private helper function $\parAgg{F}$ retrieves the partial
aggregate of the entire front sublist $l_F$, where $\identE$ is the
identity element of the monoid.

\begin{lstlisting}[xleftmargin=4mm]
fun $\parAgg{F}$
  if (F = B) return $\identE$ else return *F
\end{lstlisting}

\noindent
Function \procName{query} obtains the partial aggregates of $l_F$
(by calling $\parAgg{F}$) and of $l_B$ (by reading
\procName{aggB}) and combines them in constant time.

\begin{lstlisting}[xleftmargin=4mm,firstnumber=last]
fun query()
  return $\parAgg{F}$ $\otimes$ aggB
\end{lstlisting}

\noindent
As an example, given the following data structure state, $\parAgg{F}$
is $4\times 3$ and \procName{aggB} is $6\times 2$, so \procName{query} returns
the maximum~$6\times 2$.

\noindent\includegraphics{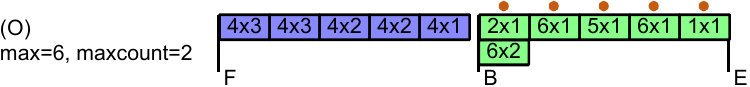}

\noindent
Both \procName{insert} and \procName{evict} can assume that the
invariants hold before they are called and must guarantee that they
hold afterwards. Function \procName{insert}$(v)$ accomplishes this by
pushing $v$ and updating \procName{aggB} accordingly.

\begin{lstlisting}[xleftmargin=4mm,firstnumber=last]
fun insert($v$)
  deque.pushBack($v$)
  E $\gets$ E + 1
  aggB $\gets$ aggB $\otimes$ $v$
\end{lstlisting}

\noindent
The following example illustrates \procName{insert}.

\noindent\includegraphics{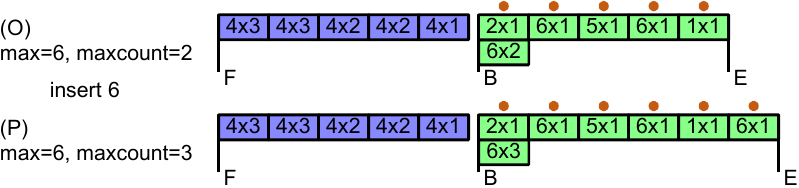}

\noindent
Finally, \procName{evict} pops from the front, after first ensuring
that the front sublist $l_F$ is nonempty.

\begin{lstlisting}[xleftmargin=4mm,firstnumber=last]
fun evict()
  if F = B and B $\neq$ E   # Flip
    I $\gets$ E - 1
    while I $\neq$ F
      I $\gets$ I - 1
      *I $\gets$ *I $\otimes$ *(I + 1)
    B $\gets$ E
    aggB $\gets$ $\identE$
  F $\gets$ F + 1
  deque.popFront()
\end{lstlisting}

\noindent
If $l_F$ is nonempty, then \procName{evict} is trivial, for example:

\noindent\includegraphics[width=\columnwidth]{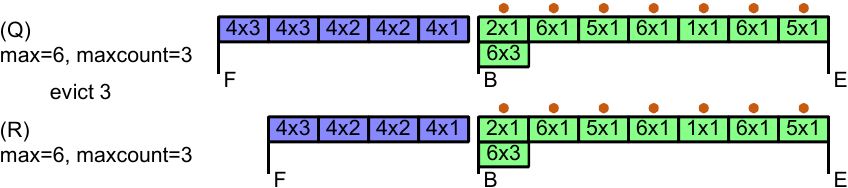}

\noindent
If $l_F$ is empty, then \procName{evict} does a \emph{flip}.
Lines 11--14 rewrite the contents of \procName{deque} in-place to
contain partial aggregates from the corresponding element to the end.
Line~15 updates pointer $B$ to indicate that the front sublist $l_F$
now occupies the entire data structure and the back sublist $l_B$ is
empty. Then, Line~16 resets \procName{aggB} to the monoid identity
element. Here is a visualization of flip followed by pop:

\noindent\includegraphics[width=\columnwidth]{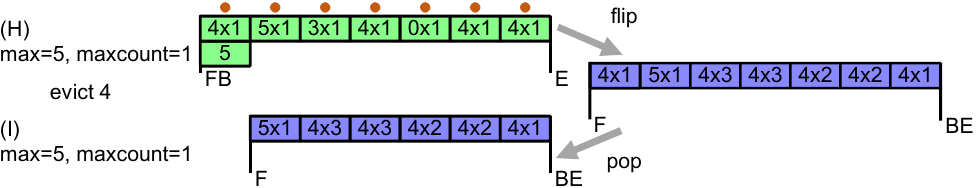}

\noindent
The loop takes time $O(n)$ but can be amortized over the preceding $n$
insertions, where $n$ is the current window size. 

\paragraph{Two-Stacks Lite Theorems.}

\begin{lemma}\label{trm_2slite_invariants}
  The Two-Stacks Lite algorithm maintains the Two-Stacks Lite
  invariants.
\end{lemma}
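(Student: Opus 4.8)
The plan is to prove the lemma by induction on the sequence of SWAG operations, mirroring the proof of Lemma~\ref{trm_twostacks_invariants}: assuming the three invariant clauses hold before an operation, I show they still hold afterward. The base case is the empty data structure, where $F=B=E$; there both universally-quantified clauses range over empty index sets and so hold vacuously, while the third holds because $\procName{aggB}=\identE$ equals the empty product $\texttt{*}B\otimes\ldots\otimes\texttt{*}(E-1)$. Since \procName{query} modifies neither \procName{deque}, nor \procName{aggB}, nor any pointer, it trivially preserves all three clauses, so only \procName{insert} and \procName{evict} require work.

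For \procName{insert}$(v)$, I would check the three clauses against the relabeled window $v'_0,\ldots,v'_n$ with $v'_n=v$ and $v'_i=v_i$ otherwise. The front sublist and the pointer difference $B-F$ are untouched, so the first clause is inherited verbatim. The second clause is preserved for the previously-present back elements and is freshly established for the pushed element, since it lands at the old position $E$ and stores $v=v'_n$. For the third clause, the update $\procName{aggB}\gets\procName{aggB}\otimes v$ appends exactly the new rightmost factor, matching the extended product after the increment of $E$.

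The \procName{evict} case splits on whether the front sublist is empty. When $F\neq B$, no flip occurs and the operation reduces to incrementing $F$ and popping the front. Here the argument is a re-indexing computation: under the relabeling $v'_i=v_{i+1}$, both the suffix products recorded by the first clause and the raw values recorded by the second clause shift by one position, so both clauses---and the untouched third clause---carry over. When $F=B$ but $B\neq E$, a flip runs first; I would argue that the flip re-establishes all three invariants \emph{for the same window contents}, after which the now-nonempty front is handled by the preceding re-indexing argument (this is also where the precondition that the window is nonempty guarantees the final \procName{popFront} is legal).

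I expect the flip to be the crux. Before it, the back sublist occupies positions $F,\ldots,E-1$ holding the raw values $v_0,\ldots,v_{n-1}$, and the loop rewrites them in place via $\texttt{*}I\gets\texttt{*}I\otimes\texttt{*}(I+1)$, descending from $I=E-2$ down to $I=F$. The key is a loop invariant: at each evaluation of the guard, every position $j$ with $I<j\le E-1$ already holds the suffix product $v_{j-F}\otimes\ldots\otimes v_{n-1}$, with the rightmost cell at $E-1$ left untouched as the bare value $v_{n-1}$. This justifies that reading the already-updated $\texttt{*}(I+1)$ yields the correct suffix while $\texttt{*}I$ still holds the raw $v_{I-F}$, so the new $\texttt{*}I$ becomes $v_{I-F}\otimes\ldots\otimes v_{n-1}$. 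Establishing this in-place correctness---being careful that each cell is read as a suffix product exactly after it is updated---is the main obstacle. Once it holds, setting $B\gets E$ makes every position a front-sublist suffix product (the first clause with $B-F-1=n-1$), empties the back sublist (second clause vacuous), and $\procName{aggB}\gets\identE$ restores the third clause as an empty product, completing the step.
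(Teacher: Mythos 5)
Your proof is correct and follows essentially the same route as the paper, which disposes of the lemma in three sentences ``by inspection'': \procName{query} changes nothing, \procName{insert} pushes and updates \procName{aggB}, and \procName{evict} optionally flips and then pops. Your version simply supplies the details the paper elides---the base case, the re-indexing under $v'_i=v_{i+1}$, and in particular the loop invariant justifying the in-place suffix-product rewrite during the flip---all of which check out against the code.
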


\begin{proof}
  By inspection, function \procName{query} preserves the data and thus the
  invariants. Function \procName{insert} maintains the invariants by pushing $v$
  and updating \procName{aggB}. Function \procName{evict} optionally does a
  flip, which reestablishes the invariants, then always does a pop, which also
  maintains the invariants.~\hfill\qed
\end{proof}

\begin{theorem}\label{trm_2slite_correctness}
  If the window currently contains \mbox{$v_0,\ldots,v_{n-1}$}, then
  \lstinline{query} returns \mbox{$v_0\otimes\ldots\otimes v_{n-1}$}.
\end{theorem}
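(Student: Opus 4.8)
The plan is to mirror the proof of Theorem~\ref{trm_twostacks_correctness} for ordinary Two-Stacks, relying on Lemma~\ref{trm_2slite_invariants} to license use of the Two-Stacks Lite invariants. The goal is to unfold what \procName{query} computes and show that the two halves it combines exactly partition the window contents $v_0,\ldots,v_{n-1}$ into the front sublist followed by the back sublist.

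First I would expand \procName{query} as $\parAgg{F}\otimes\procName{aggB}$. For the front half, I would apply the helper $\parAgg{F}$: if $F=B$ (empty front sublist) it returns $\identE$, and otherwise it returns $\texttt{*}F$, which by the first invariant instantiated at $i=0$ equals $v_0\otimes\ldots\otimes v_{B-F-1}$. In both cases $\parAgg{F}$ equals the ordered product of the front-sublist values, with the empty product interpreted as $\identE$. For the back half, the third invariant gives $\procName{aggB}=\texttt{*}B\otimes\ldots\otimes\texttt{*}(E-1)$, and the second invariant rewrites each $\texttt{*}(F+i)$ for $i\in B-F\ldots E-F-1$ as $v_i$, so $\procName{aggB}=v_{B-F}\otimes\ldots\otimes v_{E-F-1}$.

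Next I would concatenate the two products. Since $n=E-F$, the front product spans indices $0$ through $B-F-1$ and the back product spans indices $B-F$ through $n-1$, so by associativity their combination is exactly $v_0\otimes\ldots\otimes v_{n-1}$, as desired. This step is a routine index-splicing calculation, essentially identical in form to the display in the Two-Stacks correctness proof.

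The main thing to be careful about is the boundary handling when a sublist is empty. When $F=B$ the front product must collapse to $\identE$ so that the entire window resides in the back sublist, and this is precisely why the helper tests $F=B$; likewise the empty-window case $F=B=E$ must yield $\identE$ from both halves, giving $\procName{query}()=\identE\otimes\identE=\identE$. These degenerate cases are the only place the argument branches, so I would treat them explicitly before appealing to associativity in the generic case.
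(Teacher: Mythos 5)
Your proposal is correct and follows essentially the same route as the paper's proof: both unfold $\procName{query}()$ as $\parAgg{F}\otimes\procName{aggB}$, invoke Lemma~\ref{trm_2slite_invariants} to identify these with $v_0\otimes\ldots\otimes v_{B-F-1}$ and $v_{B-F}\otimes\ldots\otimes v_{E-F-1}$ respectively, and splice the indices by associativity. Your explicit treatment of the empty-sublist and empty-window boundary cases is a welcome addition of detail that the paper leaves implicit.
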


\begin{proof}
  Using Lemma~\ref{trm_2slite_invariants},\\[1mm]
\centerline{$\begin{array}{l@{\;}l@{\,\otimes\ldots\otimes\,}l@{\,\otimes\,}l@{\,\otimes\ldots\otimes\,}l}
     \multicolumn{5}{l}{\procName{query}()}\\
  = & \multicolumn{2}{l@{\,\otimes\,}}{\parAgg{F}} & \multicolumn{2}{l}{\procName{aggB}}\\
  = & v_0 & v_{B-F-1} & v_{B-F} & v_{E-F-1}\\
  = & \multicolumn{4}{l}{v_0\otimes\ldots\otimes v_{n-1}}
\end{array}$}
~\hfill\qed
\end{proof}

\begin{theorem}\label{trm_2slite_complexity}
  Two-Stacks Lite requires space to store $n+1$ partial aggregates.
  Each call to \procName{query} and \procName{insert} invokes
  $\otimes$ exactly one time. Each call to \procName{evict} invokes
  $\otimes$ at most $n$ times and amortized one time.
\end{theorem}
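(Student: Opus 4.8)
The plan is to follow the structure of the Two-Stacks complexity proof (Theorem~\ref{trm_twostacks_complexity}), since the claims are analogous, adapting it to the single-deque representation and the cached \procName{aggB}. I would first dispatch the space bound: when the window holds $n$ elements we have $E-F=n$, the \procName{deque} stores exactly one partial aggregate per slot (an \emph{agg} for each element of $l_F$ and a \emph{val} for each element of $l_B$), for $n$ in total, and \procName{aggB} holds one more, giving $n+1$. The three pointers $F$, $B$, and $E$ are iterators, not partial aggregates, so they do not contribute to this count.

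Next I would bound the worst-case number of $\otimes$ invocations for each operation by inspecting the code. For \procName{query}, the helper $\parAgg{F}$ performs only a pointer comparison and at most a dereference (no $\otimes$), so the lone $\otimes$ in $\parAgg{F}\otimes\procName{aggB}$ is the only call; hence exactly one. For \procName{insert}, the \procName{pushBack} and the pointer increment use no $\otimes$, and the update $\procName{aggB}\gets\procName{aggB}\otimes v$ supplies exactly one. For \procName{evict}, when $l_F$ is nonempty the body is just a pointer increment and a \procName{popFront}, with no $\otimes$. When $l_F$ is empty (the flip branch, guarded by $F=B$ and $B\neq E$), the entire window sits in $l_B$, so $E-F=n$; the loop then runs for $I$ from $E-2$ down to $F$, one $\otimes$ per iteration, for $n-1$ iterations, since the rightmost element is left unchanged (its partial-aggregate-to-the-end is itself). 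Resetting \procName{aggB} to $\identE$ uses no $\otimes$. Thus \procName{evict} invokes $\otimes$ at most $n-1\le n$ times.

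The remaining claim—that \procName{evict} is amortized one—is where I would invoke the banker's method, just as for Two-Stacks. I would bill each \procName{insert} one extra imaginary coin beyond its own work and place that coin on the element it pushes onto $l_B$, maintaining the accounting invariant that every element currently in $l_B$ carries exactly one coin. An \procName{evict} that does not flip pops from $l_F$, whose elements carry no coins, and spends nothing. An \procName{evict} that flips occurs only when $l_F$ is empty, so all $n$ elements lie in $l_B$ and collectively hold $n$ coins; the flip's $n-1$ calls to $\otimes$ are then paid one-for-one out of these coins, after which the flipped elements move to $l_F$ without coins. Consequently the savings never go negative, the $\otimes$ work of every flip is prepaid by the preceding insertions, and each \procName{evict} contributes only $O(1)$—which the theorem records as amortized one—of its own.

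The main obstacle I anticipate is stating the amortized bound cleanly rather than discovering it: one must verify that the flip branch is reachable only when $l_F$ is empty (so the needed coins are guaranteed present) and that a flipped element is never flipped again before eviction (so no coin is double-spent). Both follow from the guard $F=B$ together with the fact that flip sets $B\gets E$, emptying $l_B$; thereafter an element re-enters $l_B$ only through a fresh \procName{insert}, which re-deposits its coin. Pinning down these two facts is the crux, and the rest is routine bookkeeping identical in spirit to Theorem~\ref{trm_twostacks_complexity}.
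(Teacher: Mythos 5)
Your proposal is correct and takes essentially the same route as the paper: the space and worst-case counts by direct inspection, and the banker's method with one coin deposited per \procName{insert} on each element of $l_B$, spent one-for-one during the flip. Your write-up is merely more detailed (e.g., noting the flip loop performs only $n-1$ invocations of $\otimes$ and explicitly checking that coins are never double-spent), which the paper leaves implicit.
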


\begin{proof}
  Most of the theorem is obvious. To prove the amortized complexity of
  \procName{evict}, bill each call to \procName{insert} two imaginary coins for
  pushing an element onto the back and for the savings (visualized as
  small golden ``coin'' above the elements). Hence, every element in $l_B$ has a
  golden coin on top of it. When \procName{flip} happens, it invokes $\otimes$
  once for every element of $l_B$, which is completely paid for by spending the
  coin on that element.~\hfill\qed
\end{proof}

\section{DABA}\label{sec:daba}

\begin{figure*}
  \begin{minipage}{\columnwidth}
    \includegraphics[width=\textwidth]{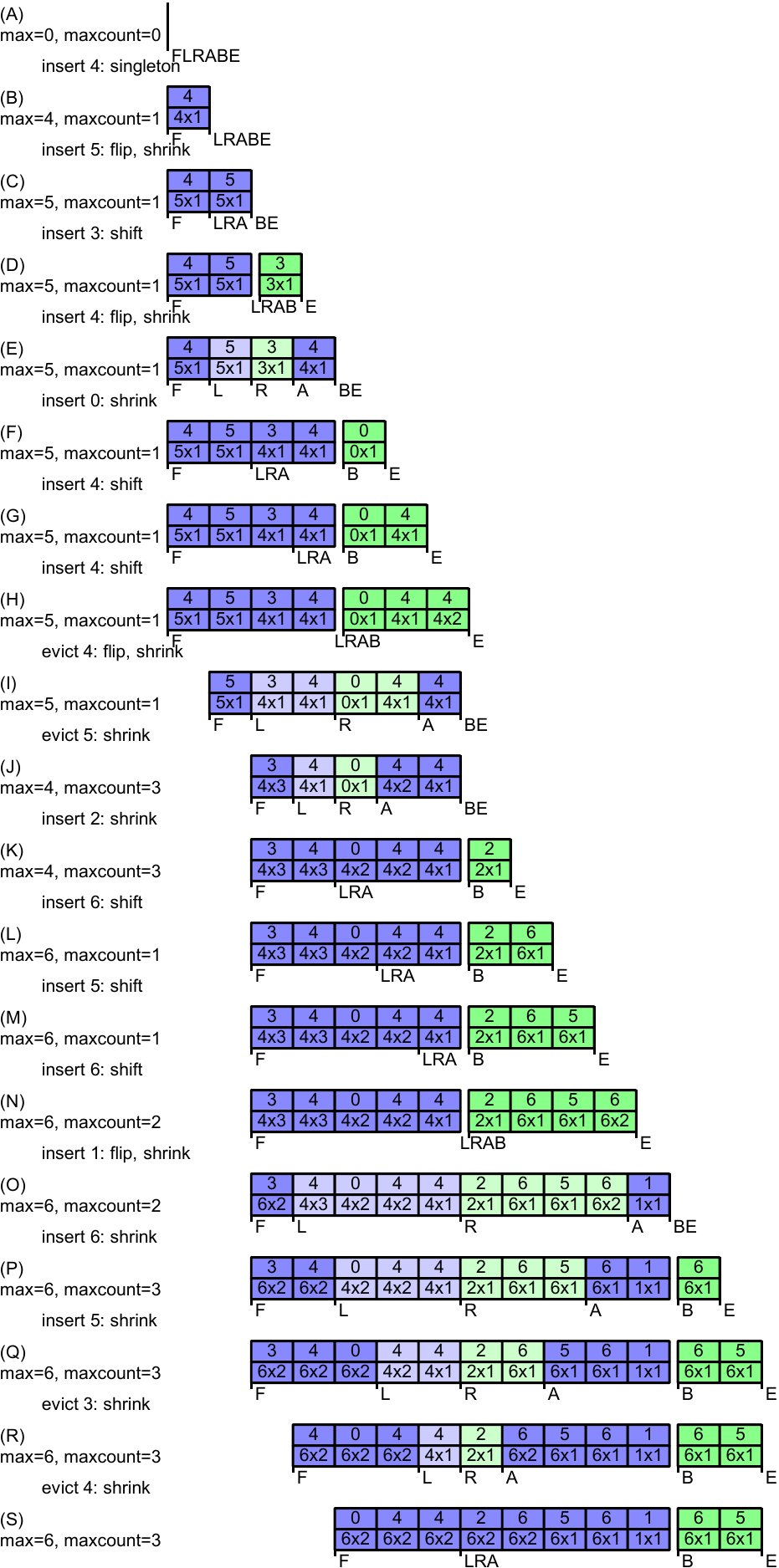}
    \caption{\label{fig_trace_daba_maxcount}DABA example trace for maxcount aggregation. The notation $m\times c$ is shorthand for \textsf{max=}$m$, \textsf{maxcount=}$c$.}
  \end{minipage}\hspace*{\columnsep}\begin{minipage}{\columnwidth}
    \includegraphics[width=\textwidth]{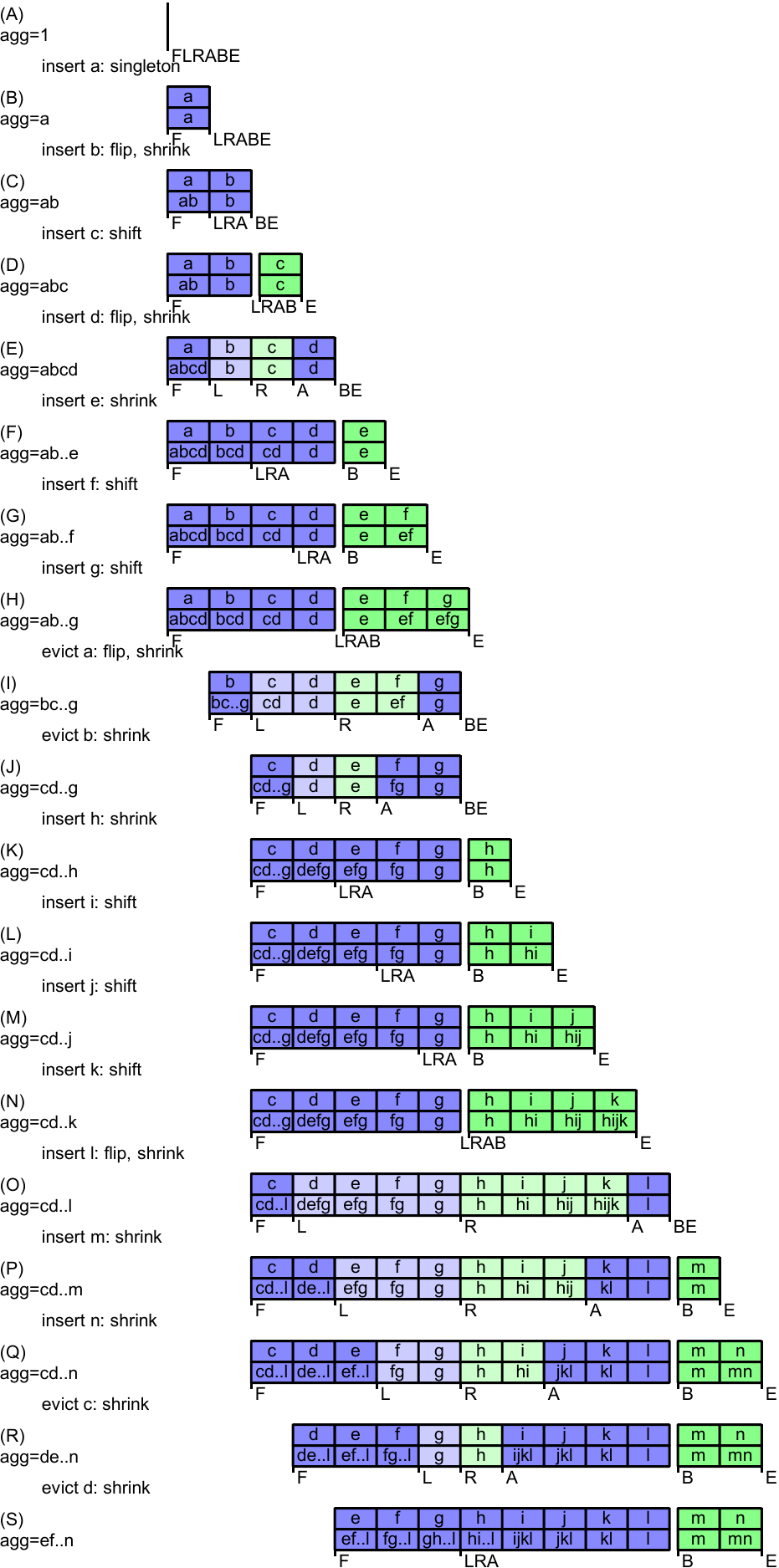}
    \caption{\label{fig_trace_daba_concat}DABA example trace for any aggregation. The notation for aggregates omits $\otimes$, e.g., \textsf{bc..f} is shorthand for $\textsf{b}\otimes\textsf{c}\otimes\textsf{..}\otimes\textsf{f}$.}
  \end{minipage}
\end{figure*}

DABA is a low-latency algorithm for in-order sliding window aggregation (SWAG).
When the window has size $n$, DABA requires space to store $2n$ partial
aggregates and supports each SWAG operation using worst-case $O(1)$ invocations
of~$\otimes$. For a brief explanation of the name, DABA stands for De-Amortized
Banker's Aggregator: Amortization looks at the average cost of an operation over
a long period of time. The banker's method conceptualizes amortization as moving
imaginary coins between the algorithm and a fictitious bank. Deamortization is a
method that turns the average-case behavior into the worst-case behavior,
usually by carefully spreading out expensive operations. In this spirit, notice
that the expensive operation in the Two-Stacks algorithm is the loop for
reversing the direction of aggregation during flip, paid for by imaginary coins
deposited on preceding insertions. Whereas Two-Stacks does the flip late when
the front stack becomes empty, DABA does the flip earlier, when the front and
back stack reach the same length. Furthermore, instead of doing a reversal loop
at the time of the flip, DABA spreads out the steps for reversing the direction
of aggregation. The text of this section embeds several data-structure
visualizations taken from concrete example traces shown in
Figs.~\ref{fig_trace_daba_maxcount} and~\ref{fig_trace_daba_concat}.

\paragraph{DABA Data Structure.}

The DABA data structure comprises a double-ended queue, \procName{deque}, and
six pointers, $F$, $L$, $R$, $A$, $B$, and~$E$, into that queue. Each queue
element is a struct with two partial aggregates: \procName{val} (top row) and
\procName{agg} (bottom row). The basic pointer operations are the same as
in Definition~\ref{def:pointers} and are easy to implement in
$O(1)$ time. The pointers are always ordered as follows:

\centerline{$F \leq L \leq R  \leq A \leq B \leq E$}

\noindent
Here is an example with a max-count aggregation:

\centerline{\includegraphics{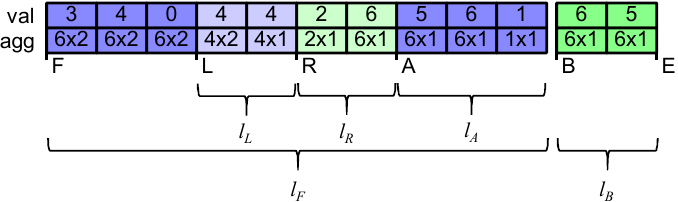}}

\noindent
The \procName{val} fields store the window contents, with the
$i^\mathrm{th}$ oldest value in FIFO order stored at
\mbox{$v_i=*(F+i).\procName{val}$}.
The \procName{agg} fields store partial
aggregations over subranges of the window. Conceptually, each
pointer $p$ corresponds to a sublist~$l_p$. For example, pointer $F$
corresponds to sublist~$l_F$. Each sublist is either aggregated to the
left or to the right. The direction is carefully chosen to enable the
DABA operations.  The leftmost portion of the front list $l_F$ is aggregated to the left to
facilitate eviction. The back list $l_B$ is aggregated to the right to
facilitate insertion. The inner sublists $l_L$, $l_R$, and $l_A$ are
designed to facilitate incremental reversal.  Incremental
reversal happens by adjusting the pointers demarcating sublist
boundaries one step at a time. When a pointer moves, a deque
element changes membership from one sublist to another and its
\procName{agg} field may need to be updated accordingly.

\paragraph{DABA Invariants.}

DABA maintains three groups of invariants: values invariants, partial 
aggregate invariants, and size invariants.
DABA's \textit{values invariants} specify that the \procName{val} field
of each element stores a singleton partial aggregate~$v_i$ obtained by
lifting the corresponding single stream element.

\centerline{$\forall i\in 0\ldots E-F-1 : *(F+i).\procName{val} = v_i$}

\noindent
DABA's \textit{partial aggregate invariants} specify the contents of the
\procName{agg} fields before and after each SWAG operation, based on
sublists. In the visualizations, blue indicates aggregation to the
right and green indicates aggregation to the left.
In the left-most portion of sublist $l_F$ (the front sublist, in dark
blue), each \procName{agg} field holds an aggregate starting from
that element to the right end of~$l_F$.
In sublist $l_L$ (the left sublist, in light blue), each
\procName{agg} field holds an aggregate starting from that element to
the right end of~$l_L$.
In sublist $l_R$ (the right sublist, in light green), each
\procName{agg} field holds an aggregate starting from the left end of
$l_R$ to that element.
In sublist $l_A$ (the accumulator sublist, in dark blue), each
\procName{agg} field holds an aggregate starting from that element to
the right end of~$l_A$, which coincides with the right end of $l_F$.
Finally, in sublist $l_B$ (the back sublist, in dark green), each
\procName{agg} field holds an aggregate starting from the left end of
$l_B$ to that element.
Formally:

\centerline{$\begin{array}{@{}ll@{:\,}l@{}}
           & \forall i\in   0\ldots L\!-\!F\!-\!1 & *(F+i).\procName{agg} = v_i\otimes\ldots\otimes v_{B-F-1}\\
\text{and} & \forall i\in L\!-\!F\ldots R\!-\!F\!-\!1 & *(F+i).\procName{agg} = v_i\otimes\ldots\otimes v_{R-F-1}\\
\text{and} & \forall i\in R\!-\!F\ldots A\!-\!F\!-\!1 & *(F+i).\procName{agg} = v_{R-F}\otimes\ldots\otimes v_i\\
\text{and} & \forall i\in A\!-\!F\ldots B\!-\!F\!-\!1 & *(F+i).\procName{agg} = v_i\otimes\ldots\otimes v_{B-F-1}\\
\text{and} & \forall i\in B\!-\!F\ldots E\!-\!F\!-\!1 & *(F+i).\procName{agg} = v_{B-F}\otimes\ldots\otimes v_i\\
\end{array}$}

Here is a visual example of the invariants:

\centerline{\includegraphics{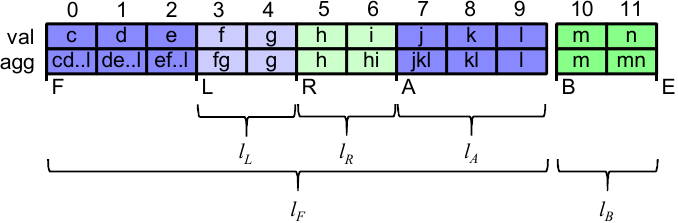}}

\noindent
The notation \typeName{cd..l} is shorthand for
\mbox{$\typeName{c}\otimes\typeName{d}\otimes\ldots\otimes\typeName{l}$}.
To work correctly irrespective of whether the monoid is commutative or
not, in all sublists, the operands of $\otimes$ are always ordered from
older on the left to newer on the right.

DABA's \textit{size invariants} specify constraints on the sizes of
sublists.  Given a pointer $p$, we use the notation $|l_p|$ to
indicate the size of sublist $l_p$.  For example, the size of sublist
$l_F$ is \mbox{$|l_F|=B-F$} and the size of sublist $l_L$ is
\mbox{$|l_L|=R-L$}. Formally, the size invariants are

\vspace{1mm}
\centerline{\small $\Big(|l_F|=0 \; \wedge \; |l_B|=0\Big)
  \vee \Big(\textcolor{blue}{|l_L|+|l_R|+|l_A|+1=|l_F|-|l_B|}
  \; \wedge \; \textcolor{purple}{|l_L|=|l_R|}\Big)$}
\vspace{1mm}

\noindent
This says that the window is either empty ($|l_F|=0$ and $|l_B|=0$) or the
following two conditions hold:

\begin{itemize}[label={-},leftmargin=1em,topsep=0pt]
\item \emph{First}, $\textcolor{blue}{|l_L|+|l_R|+|l_A|+1 \; = \; |l_F|-|l_B|}$.
  The size of the
  front list $l_F$ exceeds the size of the back list $l_B$ by the
  total size of the sublists $l_L$, $l_R$, and $l_A$ plus one.  The
  sublists $l_L$, $l_R$, and $l_A$ are used for incremental reversal,
  and the algorithm, shown below, shrinks their total size by one on
  each insertion or eviction. When the sublists $l_L$, $l_R$, and
  $l_A$ are empty, the algorithm can make just one more insertion or
  eviction before $l_F$ and $l_B$ reach the same size. At that point,
  the algorithm does a \textit{flip}, which relabels $l_F$ and $l_B$
  into $l_L$ and $l_R$, respectively.
\item \emph{Second}, $\textcolor{purple}{|l_L|=|l_R|}$.
  After each \textit{flip}, $l_L$
  and $l_R$ start out with the same size and then shrink at the same
  pace.
\end{itemize}

\noindent
Below, we will see explanations for how the algorithm maintains these
invariants, using color-coding to recognize corresponding
subequations.

\paragraph{DABA Algorithm.}

For each sublist $l_p$, a private helper function $\parAgg{p}$ retrieves the
corresponding partial aggregate or returns the monoid's identity element
$\identE$ if the sublist is empty. Note that for a given sublist, we retrieve
the partial aggregate in the left-most element if that sublist aggregates to
the right, and the partial aggregate in the right-most element if that sublist
aggregates to the left.

\begin{lstlisting}[xleftmargin=4mm]
fun $\parAgg{F}$
  if (F $=$ B) return $\identE$ else return *F.agg
fun $\parAgg{B}$
  if (B $=$ E) return $\identE$ else return *(E-1).agg
fun $\parAgg{L}$
  if (L $=$ R) return $\identE$ else return *L.agg
fun $\parAgg{R}$
  if (R $=$ A) return $\identE$ else return *(A-1).agg
fun $\parAgg{A}$
  if (A $=$ B) return $\identE$ else return *A.agg
\end{lstlisting}

\noindent
These helpers return the correct values in constant time, thanks to the
invariants defined previously. Function \procName{query} combines the
aggregate of $l_F$ and $l_B$, taking only a single invocation
of~$\otimes$.

\begin{lstlisting}[xleftmargin=4mm,firstnumber=last]
fun query()
  return $\parAgg{F}$ $\otimes$ $\parAgg{B}$
\end{lstlisting}

\noindent
Function \procName{insert} pushes a value $v$ with corresponding
partial aggregate to the back of the deque, then calls a function
\procName{fixup}, defined below, for doing one step of incremental
reversal.

\begin{lstlisting}[xleftmargin=4mm,firstnumber=last]
fun insert($v$)
  deque.pushBack($v$, $\parAgg{B}$ $\otimes$ $v$)
  E $\gets$ E + 1
  fixup()
\end{lstlisting}

\noindent
In our running maxcount example, if $\parAgg{B}=0$ and $v=4$, then the
newly pushed deque element has $\procName{val}=4$ and
$\procName{agg}=4$.

\noindent\includegraphics[width=\columnwidth]{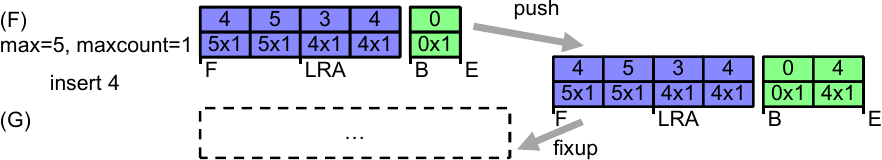}

\noindent
Similarly, \procName{evict} pops an element from the front of the
deque, then calls \procName{fixup} for one step of incremental
reversal.

\begin{lstlisting}[xleftmargin=4mm,firstnumber=last]
fun evict()
  F $\gets$ F + 1
  deque.popFront()
  fixup()
\end{lstlisting}

\noindent
In our running maxcount example, the following picture illustrates
eviction:

\noindent\includegraphics[width=\columnwidth]{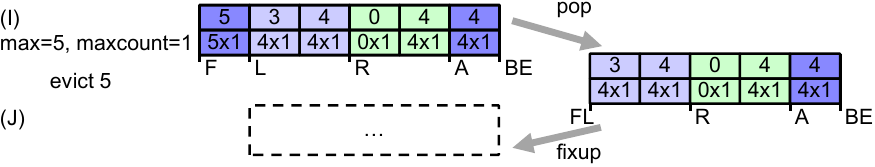}

\noindent
The \procName{fixup} function is responsible for restoring the
invariants. Recall that we can assume that the following size
invariants hold before each call to \procName{insert} or
\procName{evict}:

\vspace{1mm}
\centerline{\small $\Big(|l_F|=0 \; \wedge \; |l_B|=0\Big)
  \vee \Big(\textcolor{blue}{|l_L|+|l_R|+|l_A|+1=|l_F|-|l_B|}
  \; \wedge \; \textcolor{purple}{|l_L|=|l_R|}\Big)$}
\vspace{1mm}

\noindent
Function \procName{insert} grows $l_B$ by one element and function
\procName{evict} shrinks $l_F$ by one element. The impact on the
invariants is the same in both cases: they both decrease the
difference \mbox{$|l_F|-|l_B|$} by one.  Thanks to
the extra $+1$ element in $l_F$, neither \procName{insert} nor
\procName{evict} affects the inner sublists $l_L$, $l_R$,
and~$l_A$. This means that upon entry to \procName{fixup}, the
following is true:

\vspace{1mm}
\centerline{\small $\Big(|l_F|=0 \; \wedge \; |l_B|=1\Big)
  \vee \Big(\textcolor{blue}{|l_L|+|l_R|+|l_A|=|l_F|-|l_B|}
  \; \wedge \; \textcolor{purple}{|l_L|=|l_R|}\Big)$}
\vspace{1mm}

\noindent
Using the above as a precondition, the postcondition of
\procName{fixup} is to reestablish the original size invariants.  The
\procName{fixup} function does this via four cases \emph{singleton},
\emph{flip}, \emph{shift}, and \emph{shrink}.

\begin{lstlisting}[xleftmargin=4mm,firstnumber=last]
fun fixup()
  if F $=$ B        # Singleton case
    B $\gets$ E, A $\gets$ E, R $\gets$ E, L $\gets$ E
  else
    if L $=$ B      # Flip
      L $\gets$ F, A $\gets$ E, B $\gets$ E
    if L $=$ R      # Shift
      A $\gets$ A + 1, R $\gets$ R + 1, L $\gets$ L + 1
    else           # Shrink
      *L.agg $\gets$ $\parAgg{L}$ $\otimes$ $\parAgg{R}$ $\otimes$ $\parAgg{A}$
      L $\gets$ L + 1
      *(A-1).agg $\gets$ *(A-1).val $\otimes$ $\parAgg{A}$
      A $\gets$ A - 1
\end{lstlisting}

The \emph{singleton} case happens when \mbox{$|l_F|=0$}.  Given the
precondition, this can only hold when $|l_B|=1$. Then, without having
to modify the deque, the pointer assignments

\begin{lstlisting}[xleftmargin=4mm,numbers=none]
    B $\gets$ E, A $\gets$ E, R $\gets$ E, L $\gets$ E
\end{lstlisting}

\noindent
change the size of $l_F$ to $|l_F|=1$ while making all the other
sublists $l_L$, $l_R$, $l_A$, and $l_B$ empty, as illustrated below.

\noindent\includegraphics{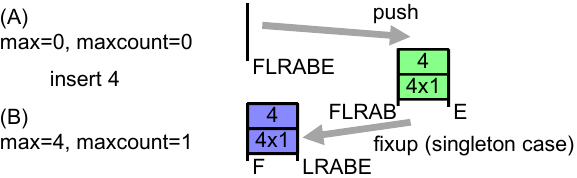}

\noindent
The singleton case code thus establishes

\vspace{1mm}
\centerline{\small
  $\Big(\textcolor{blue}{|l_L|+|l_R|+|l_A|+1=|l_F|-|l_B|=1}
    \; \wedge \; \textcolor{purple}{|l_L|=|l_R|=0}\Big)$}
\vspace{1mm}

\noindent
which implies the original size invariants.

The \emph{flip} case happens when $|l_F|>0$ and the sublists for
incremental reversal are empty:
\mbox{$|l_L|+|l_R|+|l_A|=0$}. Together with the precondition, this
implies that

\vspace{1mm}
\centerline{\small $\Big(\textcolor{blue}{|l_L|+|l_R|+|l_A|=0 \wedge |l_F|=|l_B|}
  \; \wedge \; \textcolor{purple}{|l_L|=0 \wedge |l_R|=0 \wedge |l_A|=0}\Big)$}
\vspace{1mm}

\noindent
Then, the pointer assignments

\begin{lstlisting}[xleftmargin=4mm,numbers=none]
     L $\gets$ F, A $\gets$ E, B $\gets$ E
\end{lstlisting}

\noindent
turn the old outer sublists $l_F$ and $l_B$ into the new inner
sublists $l_L$ and $l_R$, respectively.  No updates to \procName{agg}
fields are required because the corresponding sublists already have
the correct aggregation direction.

\noindent\includegraphics[width=\columnwidth]{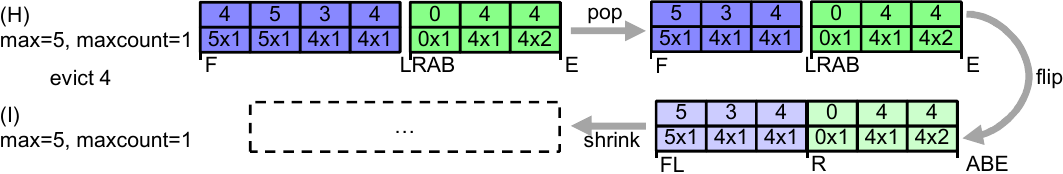}

\noindent
After flip, the following holds:

\vspace{1mm}
\centerline{\small $\Big(\textcolor{blue}{|l_L|+|l_R|+|l_A|=|l_F| \wedge |l_B|=0}
  \; \wedge \; \textcolor{purple}{|l_L|=|l_R|>0}\Big)$}
\vspace{1mm}

\noindent
At this point, we still need to execute the shrink case
to repair the size invariants.

The \emph{shift} case happens when $|l_F|>0$ and $|l_L|=0$. Together
with the precondition, this implies that

\vspace{1mm}
\centerline{\small $\Big(\textcolor{blue}{|l_L|+|l_R|+|l_A|=|l_F|-|l_B|}
  \; \wedge \; \textcolor{purple}{|l_L|=0 \wedge |l_R|=0 \wedge |l_A|>0}\Big)$}
\vspace{1mm}

\noindent
Then, the pointer assignments

\begin{lstlisting}[xleftmargin=4mm,numbers=none]
     A $\gets$ A + 1, R $\gets$ R + 1, L $\gets$ L + 1
\end{lstlisting}

\noindent
increment the pointers separating the left-most portion of $l_F$ from
$l_A$ by one.  No updates to \procName{agg} fields are required
because both the left-most portion of $l_F$ and $l_A$ are governed by the
same \emph{aggs} invariants.

\noindent\includegraphics[width=\columnwidth]{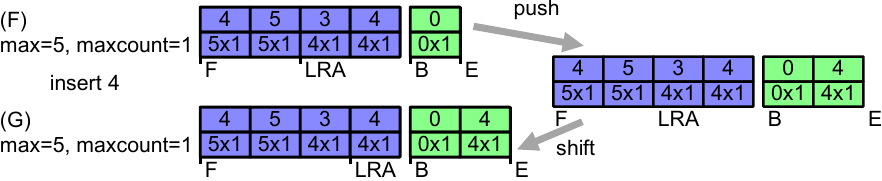}

\noindent
After shift, the following holds:

\vspace{1mm}
\centerline{\small $\Big(\textcolor{blue}{|l_L|+|l_R|+|l_A|+1=|l_F|-|l_B|}
  \; \wedge \; \textcolor{purple}{|l_L|=0 \wedge |l_R|=0}\Big)$}
\vspace{1mm}

\noindent
which implies the original size invariants.

The \emph{shrink} case happens when $|l_F|>0$ and $|l_L|>0$. There are two
scenarios: with or without a flip from the same \procName{fixup}. Either way,
shrink starts with the following precondition:

\vspace{1mm}
\centerline{\small $\Big(\textcolor{blue}{|l_L|+|l_R|+|l_A|=|l_F|-|l_B|}
  \; \wedge \; \textcolor{purple}{|l_L|=|l_R| \wedge |l_L|>0}\Big)$}
\vspace{1mm}

\noindent
The shrink case is the only part of \procName{fixup} that modifies not
just pointers but also \procName{agg} fields. It reduces the sizes of
both $l_L$ and $l_R$ by one each. The top element of $l_L$ becomes
part of the left-most portion of $l_F$, so its \procName{agg} field must
be updated to \mbox{$v_{L-F}\otimes\ldots\otimes
  v_{B-F}=\parAgg{L}\otimes\parAgg{R}\otimes\parAgg{A}$}. The top
element of $l_R$ becomes part of the accumulator sublist $l_A$, so its
\procName{agg} field must be updated to
\mbox{$v_{A-F-1}\otimes\ldots\otimes
  v_{B-F}=v_{A-F-1}\otimes\parAgg{A}$}.

\begin{lstlisting}[xleftmargin=4mm,numbers=none]
      *L.agg $\gets$ $\parAgg{L}$ $\otimes$ $\parAgg{R}$ $\otimes$ $\parAgg{A}$
      L $\gets$ L + 1
      *(A-1).agg $\gets$ *(A-1).val $\otimes$ $\parAgg{A}$
      A $\gets$ A - 1
\end{lstlisting}

\noindent
The following is a typical example of shrink, reducing the sizes of
$l_L$ and $l_R$ from 2 to~1.

\noindent\includegraphics[width=\columnwidth]{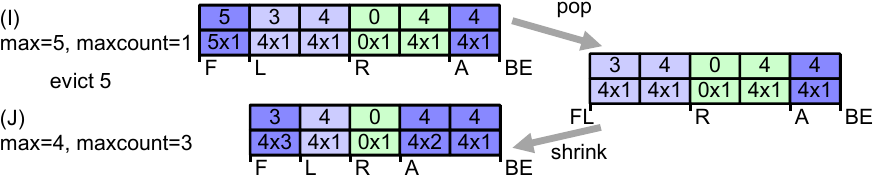}

\noindent
The following is an example of flip followed by shrink. After the
flip, $l_L$ and $l_R$ both have size~3. After the shrink, $l_L$ and
$l_R$ both have size~2.

\noindent\includegraphics[width=\columnwidth]{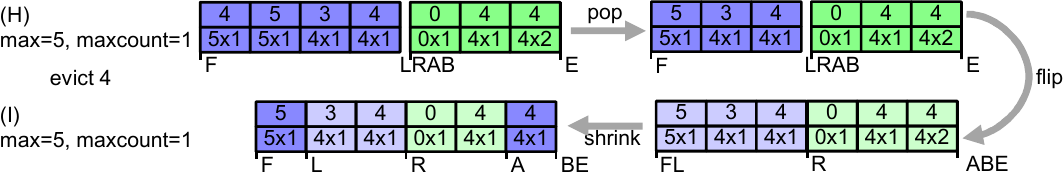}

\noindent
Given the precondition of shrink, it establishes the following
postcondition:

\vspace{1mm}
\centerline{\small $\Big(\textcolor{blue}{|l_L|+|l_R|+|l_A|+1=|l_F|-|l_B|}
  \; \wedge \; \textcolor{purple}{|l_L|=|l_R|}\Big)$}
\vspace{1mm}

\noindent
which implies the original size invariants.

\paragraph{DABA Intuitive View.}
Now that we have seen all the small steps that make up DABA, let us
look at their interplay to help understand the algorithm more
holistically. Figs.~\ref{fig_trace_daba_maxcount}
and~\ref{fig_trace_daba_concat} show two variants of the same example
trace, differing only in their aggregation monoid.  The sequence of
cases starting at (D) comprises $[$flip shrink shrink shift
  shift$]$. A similar pattern starts at (H), comprising $[$flip shrink
  shrink shrink shift shift shift$]$. More generally, each flip is
followed by an equal number of shrink$^m$ and shift$^m$ cases.  For
instance, the sequence starting at (B) is $[$flip shrink shift$]$,
which is a special case \mbox{where $m=1$}. Visually, during the
shrink$^m$ phase of the algorithm, the light blue and light green
sublists narrow to a point, looking like an upside-down step
pyramid. This corresponds to the incremental reversal of $l_R$, which
of course was $l_B$ before the flip. During the shrink$^m$ phase,
pointer $R$ does not change.  Afterwards, during the shift$^m$ phase,
the $LRA$ pointers (which are now all the same) shift to the right one
element at a time until they hit pointer~$B$. When they reach $B$, the
next insert or evict would cause $l_F$ and $l_B$ to have the same
length, triggering the next flip and thus the next cycle. Within each
cycle, pointer $B$ always stays the same; it only moves when a flip
happens.

\paragraph{DABA Theorems.}

\begin{lemma}\label{trm_daba_invariants}
  DABA maintains the invariants listed above, including the values
  invariants, the partial aggregate invariants, and the size invariants.
\end{lemma}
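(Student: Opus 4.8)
The plan is to argue by induction on the sequence of SWAG operations, assuming all three groups of invariants hold before an operation and showing they hold afterward. The base case is the empty window, where every sublist is empty: the universally-quantified values and partial-aggregate invariants hold vacuously, and the size invariants hold via the $|l_F|=0 \wedge |l_B|=0$ disjunct. Since \procName{query} is read-only, it preserves everything, so all the work is in \procName{insert} and \procName{evict}. Each of these modifies the deque minimally (a \procName{pushBack} at $E$, or a \procName{popFront} with $F\gets F+1$) and then calls \procName{fixup}, so I would reduce the inductive step to (i) the effect of the push/pop and (ii) a case analysis of the four branches of \procName{fixup}. I would handle the three invariant groups in increasing order of difficulty.

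The values invariants are easiest: no line of \procName{fixup} ever writes a \procName{val} field (shrink reads $*(A-1).\procName{val}$ but does not overwrite it), so only the push in \procName{insert} and the relabeling in \procName{evict} are relevant. I would check that pushing at the back installs the newest value as $*(E-F-1).\procName{val}$, and that advancing $F$ on eviction shifts all window indices down by one consistently, so that $*(F+i).\procName{val}=v_i$ continues to hold for the current contents.

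For the size invariants, most of the reasoning is already laid out in the algorithm description, and I would merely formalize it. Starting from the size invariants, observe that \procName{insert} grows $l_B$ and \procName{evict} shrinks $l_F$, both decreasing $|l_F|-|l_B|$ by one without touching the inner sublists $l_L$, $l_R$, $l_A$; this yields exactly the stated \procName{fixup} precondition. I would then case-split on \emph{singleton}, \emph{flip}, \emph{shift}, and \emph{shrink}, verifying via the color-coded subequation transitions that each branch re-establishes the original size invariants as its postcondition---noting that \emph{flip} always falls through into \emph{shrink} (after \emph{flip} one has $L=F<R$, so $L\neq R$ and the \emph{shift} guard fails), so the intermediate post-\emph{flip} state is repaired by the subsequent \emph{shrink}.

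The partial aggregate invariants are the main obstacle and demand the most care, so I would again case-split on \procName{fixup}. First I would note that \procName{insert} sets the new back element's \procName{agg} to $\parAgg{B}\otimes v$, which by the pre-insert $l_B$ invariant equals the desired $v_{B-F}\otimes\ldots\otimes v_{E-F-1}\otimes v$. For \procName{fixup}, the \emph{singleton}, \emph{flip}, and \emph{shift} cases move only pointers and install no new \procName{agg} values; the key observation is that whenever a sublist is relabeled (old $l_F$ becomes $l_L$ and old $l_B$ becomes $l_R$ in \emph{flip}; an element passes from $l_A$ into the left-most portion of $l_F$ in \emph{shift}), the source and target share the endpoint \emph{toward} which they aggregate and share the aggregation direction, so the stored \procName{agg} fields already satisfy the target invariant verbatim. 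The \emph{shrink} case is the only one that rewrites \procName{agg} fields and is the crux: I would confirm that $*L.\procName{agg}\gets\parAgg{L}\otimes\parAgg{R}\otimes\parAgg{A}$ equals the product from the absorbed element to the right end of $l_F$, and that $*(A-1).\procName{agg}\gets *(A-1).\procName{val}\otimes\parAgg{A}$ equals the product from the absorbed element to the right end of $l_A$, by expanding the helper functions $\parAgg{L}$, $\parAgg{R}$, $\parAgg{A}$ against the pre-shrink invariants and invoking associativity (using the values invariant to rewrite $*(A-1).\procName{val}$ as $v_{A-F-1}$). The hard part will be the bookkeeping that accompanies this: because the invariants are stated relative to $F$ and the moving boundaries, I must verify that advancing $F$ on \procName{evict} and the decrement of $B-F$ relabel indices so that the \emph{unmodified} \procName{agg} fields of the surviving front elements still denote the correct products. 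I expect the two \emph{shrink} identities together with this index-relabeling check to carry essentially all the content; the remaining cases reduce to the endpoint-sharing observation above.
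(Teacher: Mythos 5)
Your proposal is correct and follows essentially the same route as the paper: the paper's own proof is a three-sentence summary that defers to the in-text case analysis of \procName{fixup} (query is read-only; \procName{insert} and \procName{evict} establish the \procName{fixup} precondition; each of the four cases re-establishes the invariants), which is exactly the structure you lay out in more explicit detail. Your added observations---that no case writes a \procName{val} field, that \emph{flip} necessarily falls through to \emph{shrink} because $L=F<R$ afterward, and the expansion of the two \emph{shrink} assignments against the helper functions---all match the reasoning the paper distributes across the ``DABA Algorithm'' paragraphs.
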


\begin{proof}
  The \procName{query} function does not modify the data structure and
  thus does not change the invariants. Functions \procName{insert} and
  \procName{evict} both establish the same precondition for
  \procName{fixup}, as stated above. Finally, given that precondition, all
  cases of \procName{fixup} reestablish the original invariants as a
  postcondition, as shown above.~\hfill\qed
\end{proof}

\begin{theorem}\label{trm_daba_correctness}
  If the window currently contains \mbox{$v_0,\ldots,v_{n-1}$}, then
  \lstinline{query} returns \mbox{$v_0\otimes\ldots\otimes v_{n-1}$}.
\end{theorem}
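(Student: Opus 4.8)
The plan is to mirror the correctness proofs for Two-Stacks (Theorem~\ref{trm_twostacks_correctness}) and Two-Stacks Lite (Theorem~\ref{trm_2slite_correctness}): invoke the invariant lemma (Lemma~\ref{trm_daba_invariants}) and then expand $\parAgg{F}\otimes\parAgg{B}$, which is what \procName{query} returns. The only nontrivial content is establishing that the two helper functions read the aggregates their names suggest, namely $\parAgg{F}=v_0\otimes\ldots\otimes v_{B-F-1}$ and $\parAgg{B}=v_{B-F}\otimes\ldots\otimes v_{E-F-1}$.

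For $\parAgg{B}$ this is immediate. When $B=E$ the back list is empty and the helper returns $\identE$, which matches the empty product; otherwise the helper returns $*(E-1).\procName{agg}$, and the last partial-aggregate invariant, instantiated at $i=E-F-1$, gives $*(E-1).\procName{agg}=v_{B-F}\otimes\ldots\otimes v_{E-F-1}$, as required.

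For $\parAgg{F}$ the subtlety is that $l_F$ is internally partitioned into its leftmost (dark-blue) portion together with $l_L$, $l_R$, and $l_A$, which aggregate in different directions, yet the helper reads only $*F.\procName{agg}$. I would first argue that, whenever the window is nonempty, position $F$ sits in the leftmost portion and that portion is nonempty. This follows from the size invariants: in the nonempty branch, $|l_F|-|l_B|=|l_L|+|l_R|+|l_A|+1$, so the leftmost portion of $l_F$ has size $|l_F|-\bigl(|l_L|+|l_R|+|l_A|\bigr)=|l_B|+1\ge 1$. Hence $F<L$, and the first partial-aggregate invariant at $i=0$ yields $*F.\procName{agg}=v_0\otimes\ldots\otimes v_{B-F-1}$. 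The empty-window case, where $F=B=E$ (the size invariant forces $|l_B|=0$ once $|l_F|=0$), is covered by the $\identE$ branch of $\parAgg{F}$ and matches the convention that the empty product is $\identE$.

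The main obstacle I anticipate is exactly this bookkeeping for $\parAgg{F}$: unlike Two-Stacks, the leftmost element of the front region carries the aggregate of the \emph{entire} front list only because the $+1$ slack element guaranteed by the size invariant keeps that element in the dark-blue leftmost portion—which aggregates all the way to $v_{B-F-1}$—rather than letting it slide into $l_L$, which aggregates only up to $v_{R-F-1}$. Once the two helper identities are in hand, the conclusion is a single associativity step: $\parAgg{F}\otimes\parAgg{B}=\bigl(v_0\otimes\ldots\otimes v_{B-F-1}\bigr)\otimes\bigl(v_{B-F}\otimes\ldots\otimes v_{E-F-1}\bigr)=v_0\otimes\ldots\otimes v_{n-1}$, using $n=E-F$ and the values invariant $v_i=*(F+i).\procName{val}$.
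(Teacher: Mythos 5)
Your proposal is correct and follows essentially the same route as the paper's proof: invoke Lemma~\ref{trm_daba_invariants} and expand $\parAgg{F}\otimes\parAgg{B}$ into $v_0\otimes\ldots\otimes v_{B-F-1}$ and $v_{B-F}\otimes\ldots\otimes v_{E-F-1}$. The paper's version is terser and leaves implicit the point you carefully justify---that the size invariant forces $F<L$ so the helper reads the dark-blue aggregate reaching all the way to $v_{B-F-1}$---but that is a detail filled in, not a different argument.
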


\begin{proof}
  Using Lemma~\ref{trm_daba_invariants},\\[1mm]
\centerline{$\begin{array}{l@{\;}l@{\,\otimes\ldots\otimes\,}l@{\,\otimes\,}l@{\,\otimes\ldots\otimes\,}l}
     \multicolumn{5}{l}{\procName{query}()}\\
  = & \multicolumn{2}{l@{\,\otimes\,}}{\parAgg{F}} & \multicolumn{2}{l}{\parAgg{B}}\\
  = & v_0 & v_{B-F-1} & v_{B-F} & v_{E-F-1}\\
  = & \multicolumn{4}{l}{v_0\otimes\ldots\otimes v_{n-1}}
\end{array}$}
~\hfill\qed
\end{proof}

\begin{theorem}\label{trm_daba_complexity}
  DABA requires space to store $2n$ partial aggregates.  DABA invokes
  $\otimes$ at most one time per \lstinline{query}, four times per
  \lstinline{insert}, and three times per \lstinline{evict}.
  Furthermore, for nonempty windows, DABA invokes $\otimes$ on average
  2.5 times per \lstinline{insert} and 1.5 times per
  \lstinline{evict}.
\end{theorem}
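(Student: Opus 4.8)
The plan is to dispatch the three claims in the order they are stated, with the space bound and worst-case counts following by direct inspection and the average-case bounds resting on a cycle-structure argument for \procName{fixup}. The space bound is immediate: a window of size $n$ occupies $n$ deque structs, each holding two partial aggregates (\procName{val} and \procName{agg}), for $2n$ in total. For the worst-case counts I would simply read off each routine. The helpers $\parAgg{F},\ldots,\parAgg{B}$ invoke $\otimes$ zero times, since they only return a stored \procName{agg} field or $\identE$, so \procName{query} uses exactly one $\otimes$. The key observation is that \procName{fixup} contains no loop: each call performs at most a flip (pure pointer moves) and then either a shift (pure pointer moves) or a shrink, and only the shrink touches \procName{agg} fields, using exactly three $\otimes$ (two to form $\parAgg{L}\otimes\parAgg{R}\otimes\parAgg{A}$ and one for $\texttt{*}(A-1).\procName{val}\otimes\parAgg{A}$). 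Hence \procName{fixup} costs at most three $\otimes$, so \procName{insert} costs at most $1+3=4$ (the one being $\parAgg{B}\otimes v$) and \procName{evict} at most $0+3=3$.

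The heart of the average-case claim is a lemma describing how \procName{fixup} behaves across a run. Using the size invariants from Lemma~\ref{trm_daba_invariants}, I would track the triple $(|l_L|,|l_R|,|l_A|)$, which maintains $|l_L|=|l_R|$ after every operation. First I would show that a flip is always immediately followed by a shrink: a flip sets $L\gets F$ while leaving $R$ at the old value of $B$, so $L=R$ would force $|l_F|=0$, contradicting the flip precondition $|l_F|=|l_B|>0$. Thus right after the flip-plus-shrink one has $|l_L|=|l_R|=m-1$ and $|l_A|=1$, where $m=|l_F|=|l_B|$ at the flip. Since a shrink sends $(\ell,\ell,a)\mapsto(\ell-1,\ell-1,a+1)$ and a shift sends $(0,0,a)\mapsto(0,0,a-1)$, the cycle following a flip consists of $m$ shrinks (driving $|l_L|=|l_R|$ to $0$ while $|l_A|$ rises to $m$) and then $m$ shifts (draining $|l_A|$ back to $0$), after which the inner sublists are empty and the next operation triggers the following flip. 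Each cycle thus comprises exactly $2m$ \procName{fixup} calls, of which $m$ cost three $\otimes$ and $m$ cost zero.

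Given this structure, the average \procName{fixup} cost over a cycle is $\frac{3m+0}{2m}=\tfrac{3}{2}$. To make this split cleanly into ``$2.5$ per insert'' and ``$1.5$ per evict'' for an \emph{arbitrary} interleaving of operations, I would recast it as an amortized bound using the potential $\Phi=-\tfrac{3}{2}|l_A|$. Because neither \procName{insert} nor \procName{evict} touches $l_A$ outside \procName{fixup}, each operation's potential change equals that of its \procName{fixup}: a shrink gives $\Delta\Phi=-\tfrac{3}{2}$ and a shift gives $\Delta\Phi=+\tfrac{3}{2}$, so in every non-singleton case $\Delta\Phi=\tfrac{3}{2}-(\text{fixup cost})$. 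The amortized cost of \procName{evict} is then $(\text{fixup})+\Delta\Phi=\tfrac{3}{2}$ and of \procName{insert} is $1+(\text{fixup})+\Delta\Phi=\tfrac{5}{2}$, regardless of how the operations are interleaved. Since $0\le|l_A|\le n$, the potential is bounded, so the telescoping sum yields the stated averages up to an $O(n)$ additive term that is negligible over a long run.

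The routine parts are the space bound and the worst-case counts; the hard part will be the cycle-structure lemma, namely proving that shrinks and shifts occur in exactly equal numbers between consecutive flips. This requires carefully following the pointer dynamics through the flip relabeling and verifying that $|l_L|=|l_R|$ is preserved throughout. A secondary subtlety is the \emph{singleton} case, which arises only as the window transitions to or from empty; restricting to nonempty windows (as the theorem does) excludes it from the steady-state cycle, and in any case its cost lies strictly below the stated averages, so it cannot violate the amortized bound.
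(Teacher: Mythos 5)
Your proposal is correct and follows essentially the same route as the paper: the space bound and worst-case counts are read directly off the loop-free code, and the average-case bound comes from observing that between consecutive flips the \procName{fixup} calls split into an equal number of shrinks (three $\otimes$ each) and shifts (zero $\otimes$), averaging $1.5$ invocations per \procName{fixup}. Your additional potential-function formalization with $\Phi=-\tfrac{3}{2}|l_A|$ is a sound refinement that the paper omits---it makes the ``on average'' claim precise for arbitrary interleavings of \insertOp and \evictOp---but it rests on exactly the same cycle-structure observation that the paper argues informally.
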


\begin{proof}
  The worst-case numbers can be seen directly from the code and by
  noting that the algorithm contains no loops or recursion.  To see
  the average-case numbers, consider the sequence of fixup cases from a
  flip to the next. Immediately following flip, $l_R$ is nonempty and
  $l_A$ is empty. As long as $l_R$ is nonempty, each subsequent
  \procName{insert} or \procName{evict} executes a shrink, invoking
  $\otimes$ three times. When $l_R$ becomes empty, $l_A$ has exactly the
  size that $l_R$ had at the previous flip. As long as $l_A$ is
  nonempty, each subsequent \procName{insert} or \procName{evict}
  executes a shift, without invoking~$\otimes$. The next flip happens
  when $l_A$ is empty. That means that there was an equal number of
  shrink steps as shift steps, and thus, an equal number of
  \procName{fixup} calls with three invocations of $\otimes$ and with
  zero invocations of~$\otimes$. This averages out to 1.5
  $\otimes$-invocations per \procName{fixup}, and thus, 2.5
  $\otimes$-invocations per \procName{insert} and 1.5
  per \procName{evict}.~\hfill\qed
\end{proof}

A corollary of Theorem~\ref{trm_daba_complexity} is that DABA implements all
SWAG operations with worst-case $O(1)$ invocations of~$\otimes$. Unlike previous
algorithms in the paper, DABA involves no costly steps that would require an
amortization argument.


\section{DABA Lite}\label{sec:dabalite}

\begin{figure*}
  \begin{minipage}{\columnwidth}
    \includegraphics[width=\textwidth]{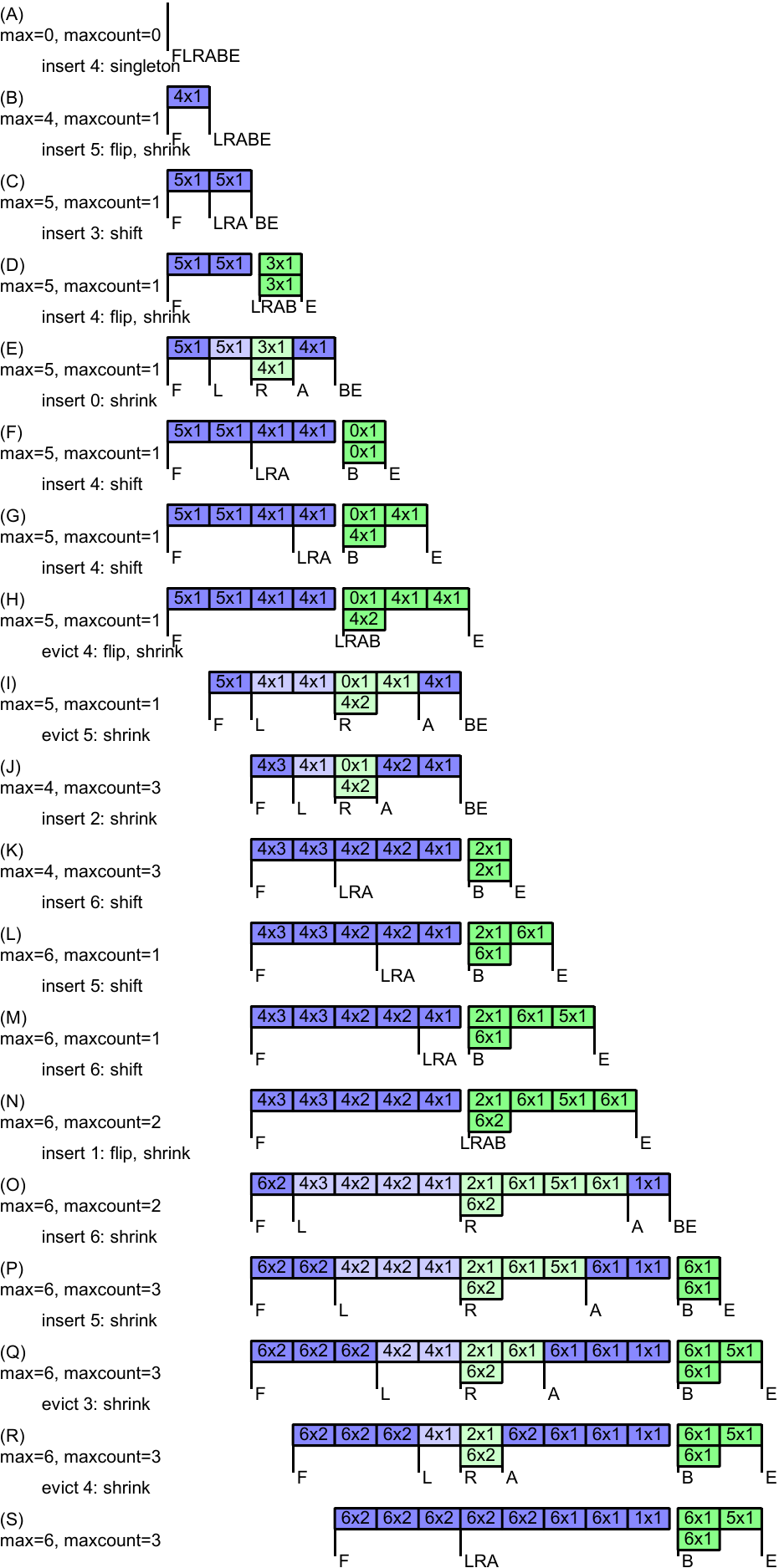}
    \caption{\label{fig_trace_dlite_maxcount}DABA Lite example trace for maxcount aggregation. The notation $m\times c$ is shorthand for \textsf{max=}$m$, \textsf{maxcount=}$c$.}
  \end{minipage}\hspace*{\columnsep}\begin{minipage}{\columnwidth}
    \includegraphics[width=\textwidth]{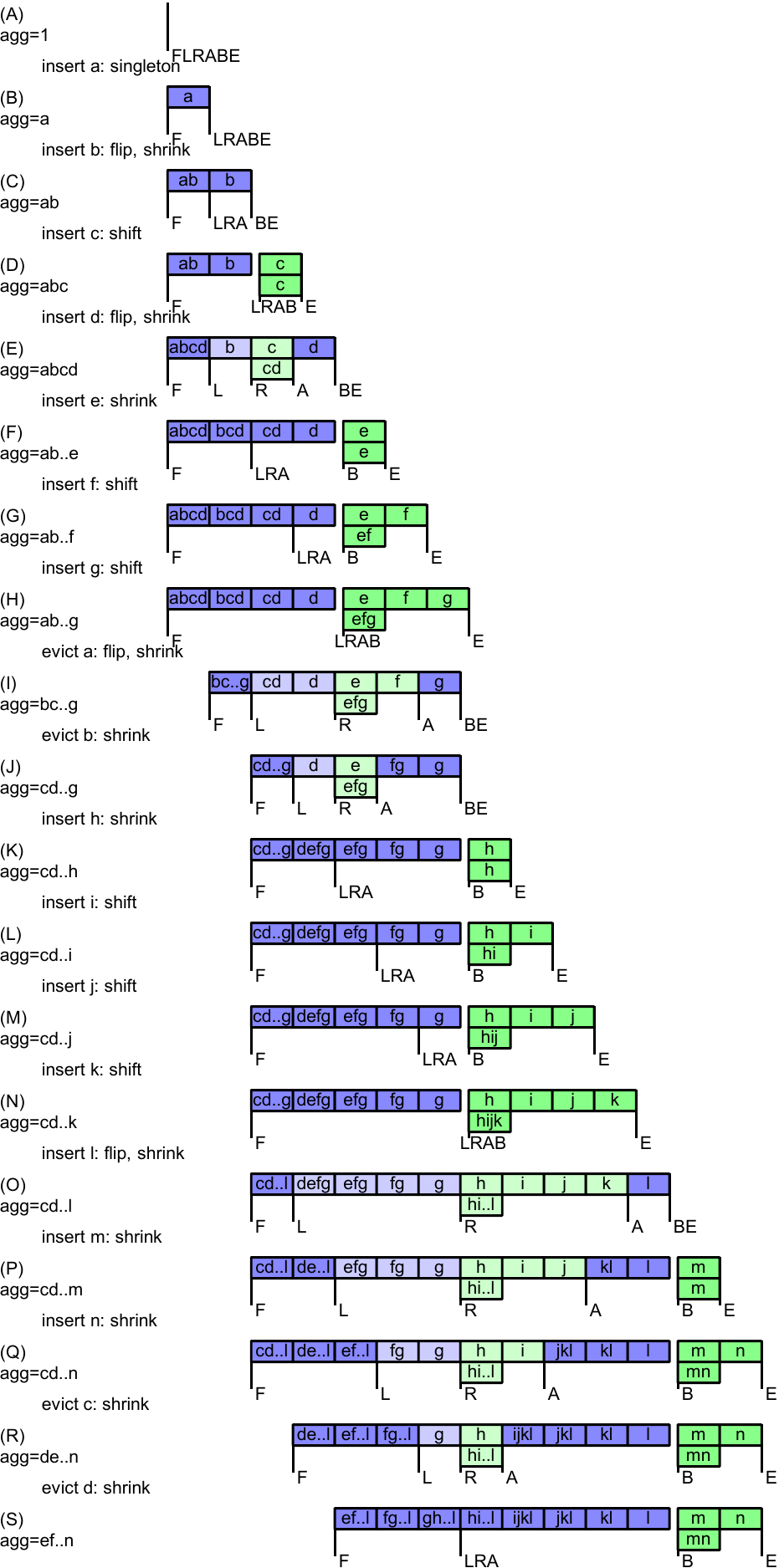}
    \caption{\label{fig_trace_dlite_concat}DABA Lite example trace for any aggregation. The notation for aggregates omits $\otimes$, e.g., \textsf{bc..f} is shorthand for $\textsf{b}\otimes\textsf{c}\otimes\textsf{..}\otimes\textsf{f}$.}
  \end{minipage}
\end{figure*}

DABA Lite improves upon the space complexity of DABA without increasing its
running time, storing only $n+2$ partial aggregates, compared to $2n$ in DABA.
It saves space by exploiting the insights that the DABA algorithm reads none of
the \procName{val} fields of the sublists that are aggregated to the left and
only the last \procName{agg} fields of sublists that are aggregated to the
right. The time complexity is still worst-case $O(1)$ invocations of $\otimes$
per SWAG operation. The data-structure visualizations in this section are all
taken from concrete example traces shown in Figs.~\ref{fig_trace_dlite_maxcount}
and~\ref{fig_trace_dlite_concat}.

\paragraph{DABA Lite Data Structure.}

The DABA Lite data structure comprises a double-ended queue
\procName{deque} of partial aggregates, two additional partial
aggregates \procName{aggRA} and \procName{aggB}, and six pointers $F$,
$L$, $R$, $A$, $B$, and~$E$ into the queue, see
Definition~\ref{def:pointers}. The pointers are always
ordered as follows:

\centerline{$F \leq L \leq R  \leq A \leq B \leq E$}

\noindent
Here is an example with a max-count aggregation:

\centerline{\includegraphics{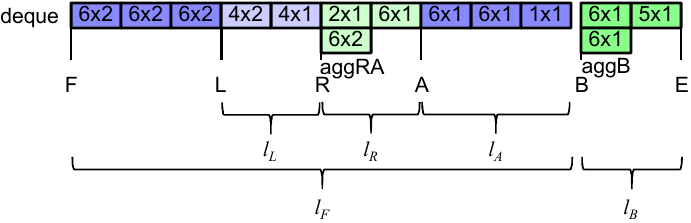}}

\noindent
Conceptually, each pointer $p$ corresponds to a sublist~$l_p$.  Blue
sublists are aggregated to the left to facilitate eviction, with each
element containing the partial aggregate starting from that element to
the right end of its sublist. The elements of green sublists simply
contain the corresponding window elements. The aggregates for the
green sublists are included in \procName{aggRA} and \procName{aggB}.

\paragraph{DABA Lite Invariants.}

The \textit{contents invariants} specify the contents of the
\procName{deque} and of \procName{aggRA} and \procName{aggR}.  Let
$v_0,\ldots,v_{n-1}$ be the current window contents.
In the leftmost portion of sublist $l_F$ (the front sublist, in dark
blue), each element holds an aggregate starting from that element to
the right end of~$l_F$.
In sublist $l_L$ (the left sublist, in light blue), each element holds
an aggregate starting from that element to the right end of~$l_L$.
In sublist $l_R$ (the right sublist, in light green), each element
holds the corresponding window element, and if $L\neq R$ then
\procName{aggRA} holds the
combined partial aggregate of $l_R$ and~$l_A$.
In sublist $l_A$ (the accumulator sublist, in dark blue), each element
holds an aggregate starting from that element to the right end
of~$l_A$.
In sublist $l_B$ (the back sublist, in dark green), each element holds
the corresponding window element, and \procName{aggB} holds the
aggregate of~$l_B$.
Formally:

\centerline{$\begin{array}{@{}ll@{:\,}l@{}}
           & \forall i\in   0\ldots L-F-1 & \texttt{*}(F+i) = v_i\otimes\ldots\otimes v_{B-F-1}\\
\text{and} & \forall i\in L-F\ldots R-F-1 & \texttt{*}(F+i) = v_i\otimes\ldots\otimes v_{R-F-1}\\
\text{and} & \forall i\in R-F\ldots A-F-1 & \texttt{*}(F+i) = v_i\\
\text{and} & \multicolumn{2}{l}{(L=R)\;\vee\;(\procName{aggRA} = v_{R-F}\otimes\ldots\otimes v_{B-F-1})}\\
\text{and} & \forall i\in A-F\ldots B-F-1 & \texttt{*}(F+i) = v_i\otimes\ldots\otimes v_{B-F-1}\\
\text{and} & \forall i\in B-F\ldots E-F-1 & \texttt{*}(F+i) = v_i\\
\text{and} & \multicolumn{2}{l}{\procName{aggB} = v_{B-F}\otimes\ldots\otimes v_{E-F-1}}\\
\end{array}$}

\noindent
Here is a visual example of the invariants for a window with contents
\mbox{$v_0=\procName{c},v_1=\procName{d},\ldots,v_{10}=\procName{m},v_{11}=\procName{n}$}:

\centerline{\includegraphics{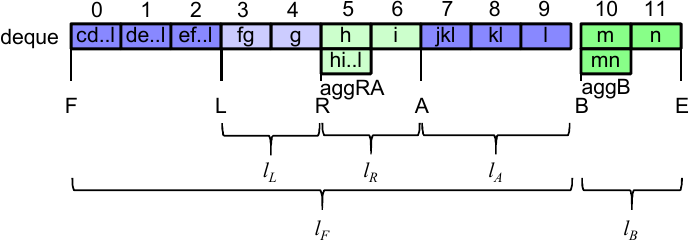}}

\noindent
The notation \typeName{cd..l} is shorthand for
\mbox{$\typeName{c}\otimes\typeName{d}\otimes\ldots\otimes\typeName{l}$}.

The \textit{size invariants} specify constraints on the sizes of
sublists. The size invariants of DABA Lite are the same as those of
DABA:

\vspace{1mm}
\centerline{\small $\Big(|l_F|=0 \; \wedge \; |l_B|=0\Big)
  \vee \Big(\textcolor{blue}{|l_L|+|l_R|+|l_A|+1=|l_F|-|l_B|}
  \; \wedge \; \textcolor{purple}{|l_L|=|l_R|}\Big)$}
\vspace{1mm}

\paragraph{DABA Lite Algorithm.}

For each sublist $l_p$ that is aggregated to the left, a private
helper function $\parAgg{p}$ retrieves the corresponding partial
aggregate or returns the monoid's identity element $\identE$ if the
sublist is empty.

\begin{lstlisting}[xleftmargin=4mm]
fun $\parAgg{F}$
  if (F $=$ B) return $\identE$ else return *F
fun $\parAgg{L}$
  if (L $=$ R) return $\identE$ else return *L
fun $\parAgg{A}$
  if (A $=$ B) return $\identE$ else return *A
\end{lstlisting}

\noindent
These helpers return the correct values in constant time thanks to the
invariants defined previously. Function \procName{query} combines the
aggregate of $l_F$ and $l_B$, taking only a single invocation
of~$\otimes$.

\begin{lstlisting}[xleftmargin=4mm,firstnumber=last]
fun query()
  return $\parAgg{F}$ $\otimes$ aggB
\end{lstlisting}

\noindent
Function \procName{insert} pushes a value $v$ onto $l_B$ and updates
\procName{aggB} accordingly, then calls a function \procName{fixup},
defined below, for doing one step of incremental reversal.

\begin{lstlisting}[xleftmargin=4mm,firstnumber=last]
fun insert($v$)
  deque.pushBack($v$)
  E $\gets$ E + 1
  aggB $\gets$ aggB $\otimes$ $v$
  fixup()
\end{lstlisting}
In our running example, if $procName{aggB}=0\otimes1$ and $v=4$, the
newly pushed deque element is $4\otimes1$ and the updated \procName{aggB} is
also~$4\otimes1$.

\noindent {\includegraphics[width=\columnwidth]{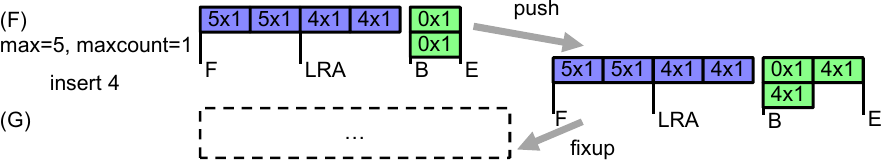}}

Similarly, \procName{evict} pops an element from the front of the
deque, then calls \procName{fixup} for one step of incremental
reversal.

\begin{lstlisting}[xleftmargin=4mm,firstnumber=last]
fun evict()
  F $\gets$ F + 1
  deque.popFront()
  fixup()
\end{lstlisting}
\noindent
For our running example, eviction is illustrated below:

\noindent {\includegraphics[width=\columnwidth]{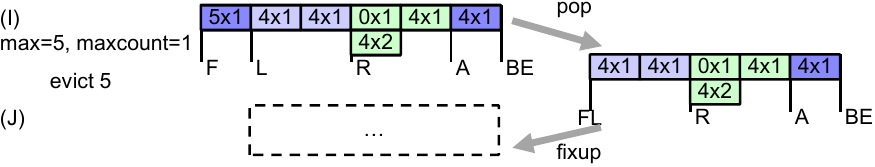}}

\noindent
The \procName{fixup} function repairs the invariants. The effect of
\procName{fixup} on the size invariants is the same for DABA and for
DABA~Lite. Since Section~\ref{sec:daba} has a formal analysis, here we
only have an informal discussion. As before, the \procName{fixup}
function has four cases \emph{singleton}, \emph{flip}, \emph{shift},
and \emph{shrink}.

\begin{lstlisting}[xleftmargin=4mm,firstnumber=last]
fun fixup()
  if F $=$ B    # Singleton case
    B $\gets$ E, A $\gets$ E, R $\gets$ E, L $\gets$ E
    aggRA $\gets$ $\identE$
    aggB $\gets$ $\identE$
  else
    if L $=$ B  # Flip
      L $\gets$ F, A $\gets$ E, B $\gets$ E
      aggRA $\gets$ aggB
      aggB $\gets$ $\identE$
    if L $=$ R  # Shift
      A $\gets$ A + 1, R $\gets$ R + 1, L $\gets$ L + 1
    else       # Shrink
      *L $\gets$ $\parAgg{L}$ $\otimes$ aggRA
      L $\gets$ L + 1
      *(A-1) $\gets$ *(A-1) $\otimes$ $\parAgg{A}$
      A $\gets$ A - 1
\end{lstlisting}

The \emph{singleton} case happens when $|l_F|=0$ and $|l_B|=1$.

\begin{lstlisting}[xleftmargin=4mm,numbers=none]
    B $\gets$ E, A $\gets$ E, R $\gets$ E, L $\gets$ E
\end{lstlisting}

\noindent
The pointer assignments change this to $|l_F|=1$ and $|l_B|=0$.

\noindent\includegraphics{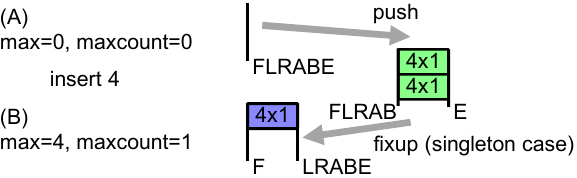}

The \emph{flip} case happens when $|l_F|>0$ and $|l_L|=0$ and
$|l_R|=0$ and $|l_A|=0$. That implies that $|l_F|=|l_B|$, which means
we can simply turn the old outer sublists $l_F$ and $l_B$ into the new
inner sublists $l_L$ and $l_R$.

\begin{lstlisting}[xleftmargin=4mm,numbers=none]
     L $\gets$ F, A $\gets$ E, B $\gets$ E
     aggRA $\gets$ aggB
     aggB $\gets$ $\identE$
\end{lstlisting}

\noindent
The corresponding updates to \procName{aggRA} and \procName{aggB} do
not require any invocations of~$\otimes$.

\noindent\includegraphics[width=\columnwidth]{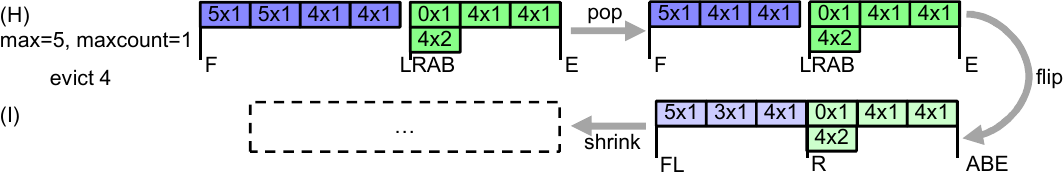}

The \emph{shift} case happens when $|l_F|>0$ and $|l_L|=0$.  That
means the pointers $L=R=A$ are equal, and can simply be moved one
element to the right.

\begin{lstlisting}[xleftmargin=4mm,numbers=none]
     A $\gets$ A + 1, R $\gets$ R + 1, L $\gets$ L + 1
\end{lstlisting}

\noindent
There is no need to update \procName{aggRA}, since it will not be read
anymore until after the next flip. The invariant for \procName{aggRA}
remains satisfied, thanks to~\mbox{$L=R$}.

\noindent\includegraphics[width=\columnwidth]{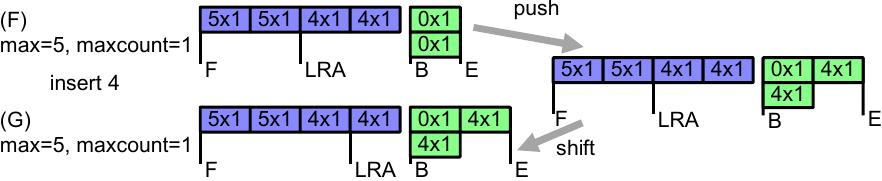}

The \emph{shrink} case happens when $|l_F|>0$ and $|l_L|>0$.  The
sizes of $|l_L|=|l_R|$ are the same, and shrink reduces them by one
each. This requires setting \procName{agg} fields of blue sublists
appropriately for their contents invariants.

\begin{lstlisting}[xleftmargin=4mm,numbers=none]
     *L $\gets$ $\parAgg{L}$ $\otimes$ aggRA
     L $\gets$ L + 1
     *(A-1) $\gets$ *(A-1) $\otimes$ $\parAgg{A}$
     A $\gets$ A - 1
\end{lstlisting}

\noindent
Even though the internal boundary $A$ between $l_R$ and $l_A$ moves,
taken together, these two sublists still occupy the same elements, and
thus, \procName{aggRA} does not change. Consequently, the shrink case
of DABA Lite requires one less $\otimes$-invocation than the shrink
case of DABA.

\noindent\includegraphics[width=\columnwidth]{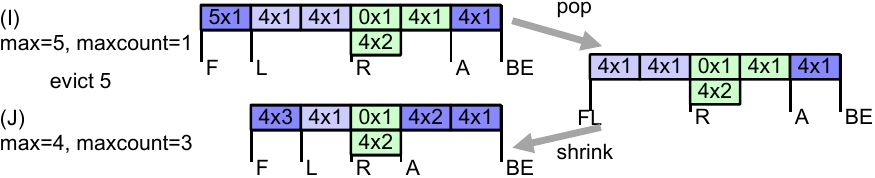}

\noindent
The shrink case also gets triggered right after a flip.

\noindent\includegraphics[width=\columnwidth]{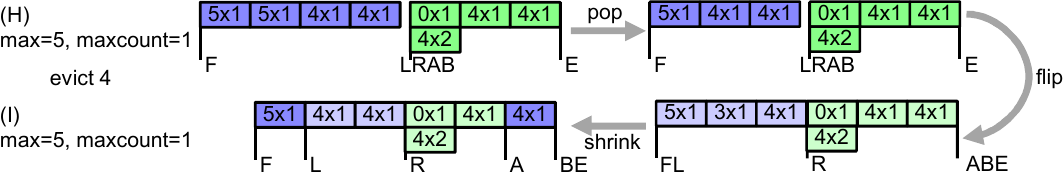}

\paragraph{DABA Lite Theorems.}

\begin{lemma}\label{trm_dabalite_invariants}
  DABA Lite maintains both the contents invariants and the size
  invariants defined above.
\end{lemma}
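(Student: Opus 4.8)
The plan is to mirror the structure of the proof of Lemma~\ref{trm_daba_invariants}: argue that \procName{query} is read-only and hence harmless, that \procName{insert} and \procName{evict} both leave \procName{fixup} in the same precondition, and that every case of \procName{fixup} restores the full set of invariants as a postcondition. Because the size invariants of DABA Lite are textually identical to those of DABA, and Section~\ref{sec:daba} already establishes that \procName{insert}, \procName{evict}, and each \procName{fixup} case affect them in exactly the same way, I would dispatch all size-invariant obligations by appealing directly to that analysis. The genuinely new work is the \emph{contents invariants}, which differ from DABA's partial-aggregate invariants: the green sublists $l_R$ and $l_B$ now store bare window values, while their aggregates live in the scalars \procName{aggRA} and \procName{aggB}.

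First I would handle \procName{insert} and \procName{evict}. For \procName{insert}, the only modified cells are the freshly pushed element, which must hold the bare value $v$ (consistent with the $l_B$ invariant), and \procName{aggB}, which is extended on the right to $\procName{aggB}\otimes v$; since the other sublists and their right-hand reference points are untouched, their invariants carry over verbatim. For \procName{evict}, no stored aggregate changes: incrementing $F$ and popping the front merely relabels $v_i$ as $v'_{i-1}$, and I would check that this relabeling shifts every index in the contents invariants consistently, so the discarded cell is exactly the head of the front portion. Here I would also record the useful fact, derivable from the size invariants, that the front portion has size $|l_B|+1\ge 1$, so that $F<L$ on any nonempty window and $\parAgg{F}$ indeed reads the aggregate of the \emph{entire} $l_F$.

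Next I would walk through the four \procName{fixup} cases, verifying for each that the contents invariants hold afterward. The \emph{singleton} case resets \procName{aggRA} and \procName{aggB} to $\identE$ for the now-empty green sublists and leaves the lone surviving cell as the size-one front portion. The \emph{flip} case reinterprets the old $l_F$ as the new $l_L$ (its stored left-aggregates already point at the correct right end) and the old $l_B$ as the new $l_R$ (bare values), and the single assignment $\procName{aggRA}\gets\procName{aggB}$ is justified because the new $l_A$ is empty, so $l_R\cup l_A$ equals the old $l_B$. The \emph{shift} case moves $L=R=A$ rightward by one; the migrating cell already stores an aggregate to the right end of $l_F$, so it needs no rewrite, and the \procName{aggRA} invariant stays satisfied through its $L=R$ disjunct.

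The main obstacle, and the step I would spend the most care on, is the \emph{shrink} case, since it is the only one that rewrites deque cells and is where DABA Lite's space-and-time savings must be shown correct. I would verify the two writes separately: $\texttt{*}L\gets\parAgg{L}\otimes\procName{aggRA}$ telescopes to $v_{L-F}\otimes\ldots\otimes v_{B-F-1}$, exactly the aggregate-to-right-end required once position $L$ joins the front portion; and $\texttt{*}(A-1)\gets\texttt{*}(A-1)\otimes\parAgg{A}$ telescopes from the bare value $v_{A-F-1}$ to $v_{A-F-1}\otimes\ldots\otimes v_{B-F-1}$, exactly what the new leftmost cell of $l_A$ must hold. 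The crucial, initially non-obvious point is that \procName{aggRA} is \emph{not} touched: although the boundary $A$ between $l_R$ and $l_A$ moves left by one, the union $l_R\cup l_A$ spans the same cells before and after, so \procName{aggRA} remains the valid aggregate of that union; this is precisely the observation that lets shrink use two $\otimes$-invocations instead of three. Finally, I would check the flip-immediately-followed-by-shrink path, confirming that with $\procName{aggRA}$ equal to the old \procName{aggB} and $\parAgg{L}$ equal to the old $l_F$ aggregate (and $\parAgg{A}=\identE$ while $l_A$ is still empty), the first write still yields the aggregate of the entire, now full-width $l_F$. With all cases discharged and the size invariants inherited from Section~\ref{sec:daba}, the lemma follows.
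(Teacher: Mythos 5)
Your proposal is correct and follows essentially the same route as the paper's own (much terser) proof: \procName{query} is read-only, the size invariants are inherited from the DABA analysis, and the contents invariants are checked against the explicit updates to the deque, \procName{aggRA}, and \procName{aggB} in each case of the code. Your detailed verification of the shrink case---in particular that \procName{aggRA} remains valid because $l_R\cup l_A$ spans the same cells---is sound and fills in exactly the detail the paper leaves implicit.
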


\begin{proof}
  The \procName{query} function does not modify the data structure and thus does
  not change the invariants. The effect of \procName{insert}, \procName{evict},
  and \procName{fixup} on the size invariants is the same as for DABA. Whenever
  sublist boundaries change, the code updates the contents of \procName{deque},
  \procName{aggRA}, and \procName{aggB}, if necessary, to uphold the contents
  invariants.~\hfill\qed
\end{proof}

\begin{theorem}\label{trm_dabalite_correctness}
  If the window currently contains \mbox{$v_0,\ldots,v_{n-1}$}, then
  \lstinline{query} returns \mbox{$v_0\otimes\ldots\otimes v_{n-1}$}.
\end{theorem}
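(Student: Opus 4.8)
The plan is to mirror the structure of the earlier correctness proofs (Theorems~\ref{trm_twostacks_correctness}, \ref{trm_2slite_correctness}, and~\ref{trm_daba_correctness}) and reduce everything to the contents invariants established in Lemma~\ref{trm_dabalite_invariants}. First I would invoke that lemma to assert that, immediately before the \procName{query} call, the DABA Lite contents invariants hold. Then I would unfold the definition of \procName{query}, which returns $\parAgg{F}\otimes\procName{aggB}$, and evaluate each of the two factors separately against the invariants.

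For the first factor, I would read $\parAgg{F}$ off its helper: when $F\neq B$ it returns $\texttt{*}F$, and instantiating the first contents invariant at $i=0$ gives $\texttt{*}F=\texttt{*}(F+0)=v_0\otimes\ldots\otimes v_{B-F-1}$, i.e.\ the partial aggregate of the front sublist. For the second factor, the last contents invariant states directly that $\procName{aggB}=v_{B-F}\otimes\ldots\otimes v_{E-F-1}$. Concatenating the two ranges by associativity then yields $v_0\otimes\ldots\otimes v_{E-F-1}$, and since the window size is $n=E-F$ this equals $v_0\otimes\ldots\otimes v_{n-1}$, as claimed. I would present this as the same short equational chain used in the preceding theorems, merely substituting $\parAgg{F}$ and $\procName{aggB}$ in the second line; note that \procName{query} itself invokes $\otimes$ only once, so no reasoning about the inner sublists $l_L$, $l_R$, $l_A$ is needed here.

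The one place that needs care --- and the only real obstacle --- is the empty-front boundary case $F=B$, where $\parAgg{F}$ returns $\identE$ rather than dereferencing. Here I would appeal to the size invariants: $|l_F|=B-F=0$ rules out the second disjunct (whose left-hand side is at least $1$) and hence forces $|l_B|=0$, so $B=E$. Then \procName{aggB} is the empty product $\identE$, the window is empty ($n=0$), and \procName{query} returns $\identE\otimes\identE=\identE$, matching the specified value for an empty window. With this case dispatched, the generic argument above covers all nonempty windows, and the two together complete the proof. Everything reduces to bookkeeping against the invariants, with no amortization or counting argument required, so I expect the proof to be very short.
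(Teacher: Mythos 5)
Your proposal is correct and follows essentially the same route as the paper's proof: invoke Lemma~\ref{trm_dabalite_invariants}, unfold \procName{query} as $\parAgg{F}\otimes\procName{aggB}$, and read both factors off the contents invariants to obtain the equational chain $v_0\otimes\ldots\otimes v_{B-F-1}\otimes v_{B-F}\otimes\ldots\otimes v_{E-F-1}=v_0\otimes\ldots\otimes v_{n-1}$. Your explicit treatment of the empty-window case $F=B$ is a small additional nicety that the paper's two-line proof leaves implicit.
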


\begin{proof}
  Using Lemma~\ref{trm_dabalite_invariants},\\[1mm]
\centerline{$\begin{array}{l@{\;}l@{\,\otimes\ldots\otimes\,}l@{\,\otimes\,}l@{\,\otimes\ldots\otimes\,}l}
\multicolumn{5}{l}{\procName{query}()}\\
  = & \multicolumn{2}{l@{\,\otimes\,}}{\parAgg{F}} & \multicolumn{2}{l}{\procName{aggB}}\\
  = & v_0 & v_{B-F-1} & v_{B-F} & v_{E-F-1}\\
  = & \multicolumn{4}{l}{v_0\otimes\ldots\otimes v_{n-1}}
\end{array}$}
~\hfill\qed
\end{proof}

\begin{theorem}\label{trm_dabalite_complexity}
  DABA Lite requires space to store $n+2$ partial aggregates.  DABA
  Lite invokes $\otimes$ at most one time per \lstinline{query}, three
  times per \lstinline{insert}, and two times per \lstinline{evict}.
  Furthermore, for nonempty windows, DABA Lite invokes $\otimes$ on
  average two times per \lstinline{insert} and one time per
  \lstinline{evict}.
\end{theorem}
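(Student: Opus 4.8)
The plan is to establish the three claims in order, reducing each to direct inspection of the DABA Lite pseudocode and reusing the cycle structure already proved for DABA in Theorem~\ref{trm_daba_complexity}.

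For the space bound, I would first observe that the window holds $v_0,\ldots,v_{n-1}$, so the deque spans exactly $E-F=n$ slots. The essential difference from DABA is that a DABA Lite deque slot stores a \emph{single} partial aggregate rather than a pair of \procName{val} and \procName{agg} fields: by the contents invariants the deque retains only the left-aggregated values, while the two right-aggregated sublists are folded into the scalars \procName{aggRA} and \procName{aggB}. Thus the total is $n$ deque entries plus these two extra aggregates, giving $n+2$.

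For the worst-case counts, since the code has neither loops nor recursion, it suffices to tally $\otimes$-invocations along each straight-line path. Function \procName{query} performs exactly one $\otimes$, combining $\parAgg{F}$ with \procName{aggB} (the helper itself uses none). In \procName{fixup}, the singleton, flip, and shift cases invoke $\otimes$ zero times, as they only reassign pointers and copy the \procName{aggRA} and \procName{aggB} scalars, whereas shrink invokes $\otimes$ exactly twice. Flip and shrink can co-occur in a single call, but flip costs nothing, so \procName{fixup} costs at most $2$. Adding the one $\otimes$ that \procName{insert} spends updating \procName{aggB} yields at most $3$ for \procName{insert}; \procName{evict} spends none of its own, yielding at most $2$.

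For the average-case counts, I would mirror the amortized argument of Theorem~\ref{trm_daba_complexity}, changing only the shrink cost from $3$ to $2$. Within a single flip-to-flip cycle the size invariants force exactly $m$ shrink steps followed by $m$ shift steps, where $m=|l_L|=|l_R|$ right after the flip; moreover the flip shares its \procName{fixup} call with the first shrink, since after the flip $L=F\neq R$. Over these $2m$ calls the shrinks contribute $2m$ invocations and the shifts none, for an average of one $\otimes$ per \procName{fixup}, hence $1+1=2$ per \procName{insert} and $0+1=1$ per \procName{evict}. The only nonroutine step will be the average-case bookkeeping: confirming both that each cycle holds equal numbers of shrink-bearing and shift-only \procName{fixup} calls, and that \procName{aggRA} remains valid throughout a shrink, which is precisely what licenses the one-fewer $\otimes$ relative to DABA. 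Both facts follow from Lemma~\ref{trm_dabalite_invariants} and the size-invariant transitions, so the task is careful accounting rather than a new argument.
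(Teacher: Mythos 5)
Your proposal is correct and follows essentially the same route as the paper's proof: worst-case counts by direct inspection of the loop-free code, and average-case counts by the flip-to-flip cycle argument (equal numbers of shrink and shift steps, with shrink costing two $\otimes$-invocations instead of DABA's three). You simply spell out the space accounting and the per-case tallies in more detail than the paper, which treats them as immediate from the code.
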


\begin{proof}
  The algorithm contains no loops or recursion, so we can directly see
  the worst-case numbers from the code.  The average-case numbers are
  based on the observation that every sequence of shrink steps is
  followed by an equal number of shift steps. Shrink requires two
  $\otimes$-invocations and shift requires zero $\otimes$-invocations,
  so the average \procName{fixup} call has one
  $\otimes$-invocation.~\hfill\qed
\end{proof}

\section{Experimental Evaluation}\label{sec_results}

%
%

\begin{figure*}[!t]
\center
\includegraphics[width=0.33\textwidth]{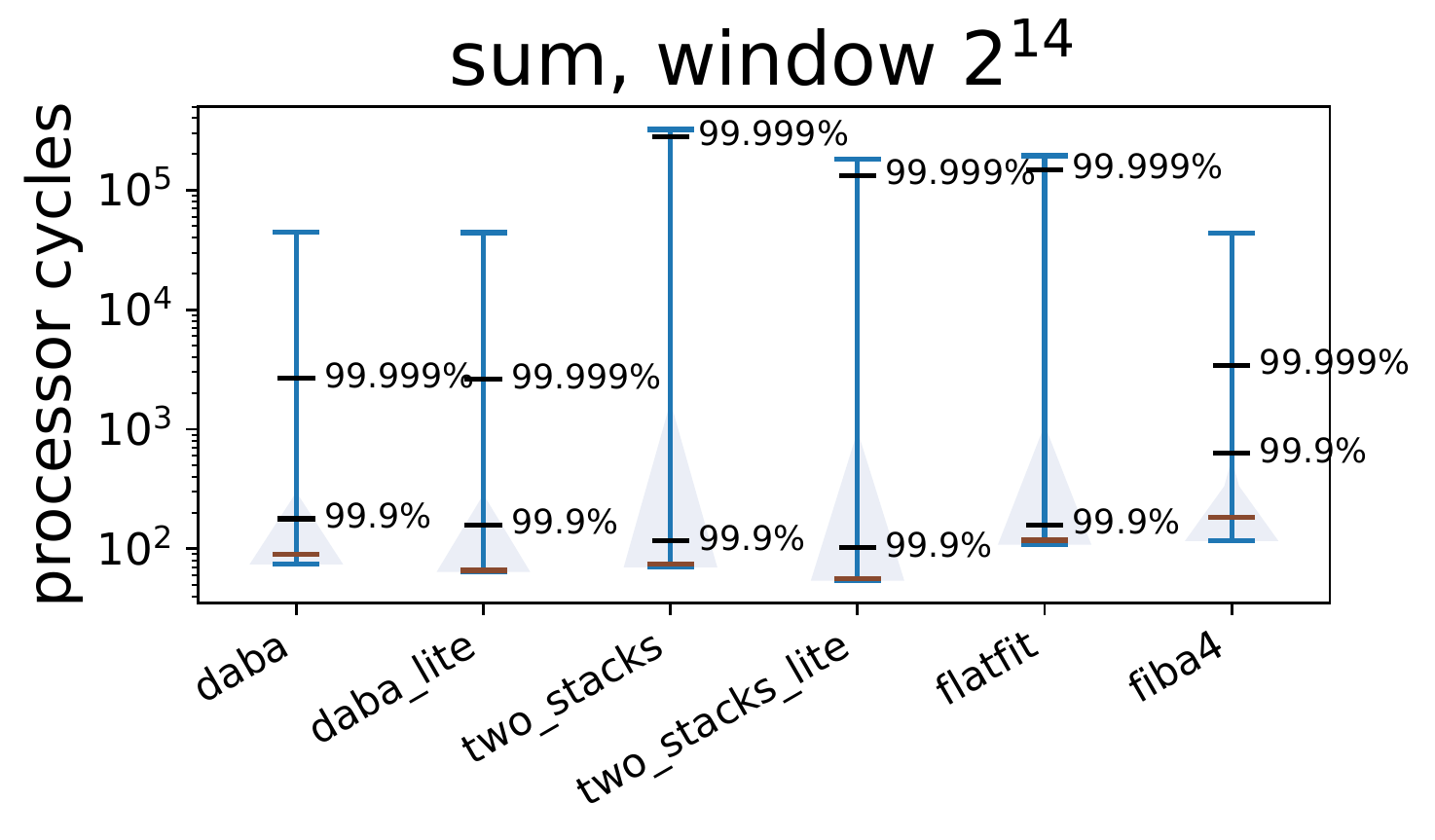}
\includegraphics[width=0.33\textwidth]{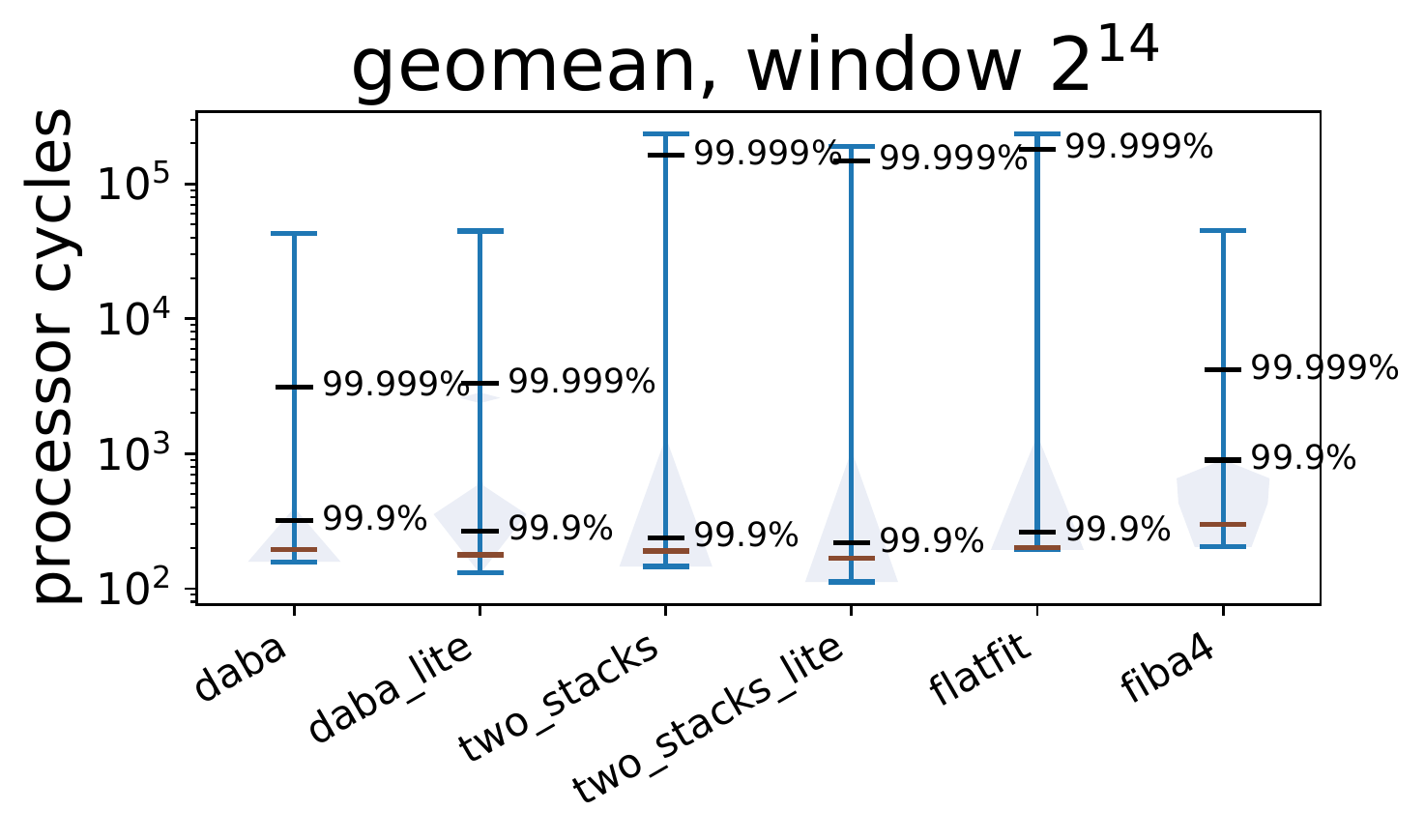}
\includegraphics[width=0.33\textwidth]{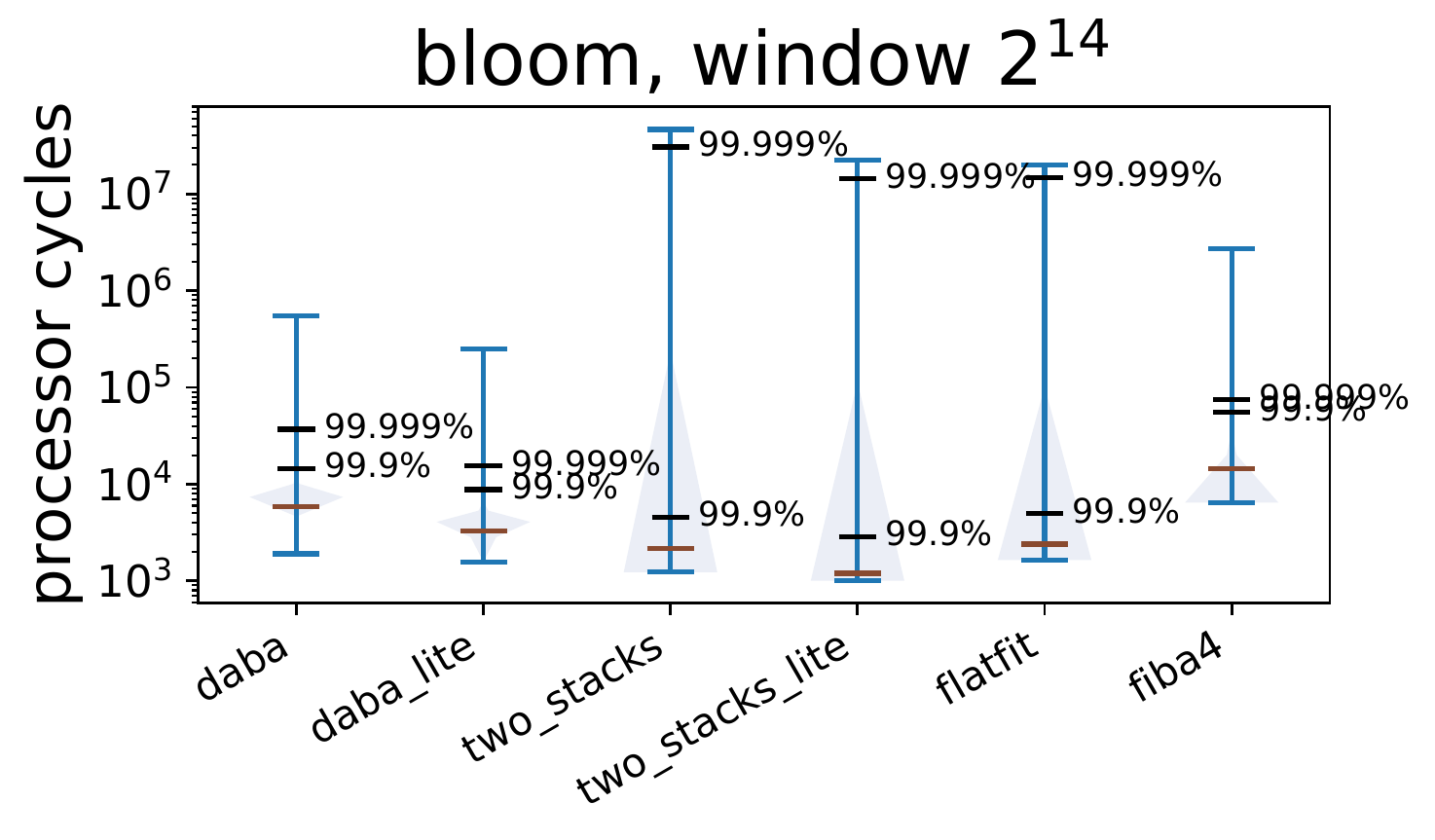}
\caption{Latency, shown as violin plots, for static count-based windows with synthetic data.}
\label{fig_latency}
\end{figure*}

\begin{figure*}[!t]
\center
\includegraphics[width=0.33\textwidth]{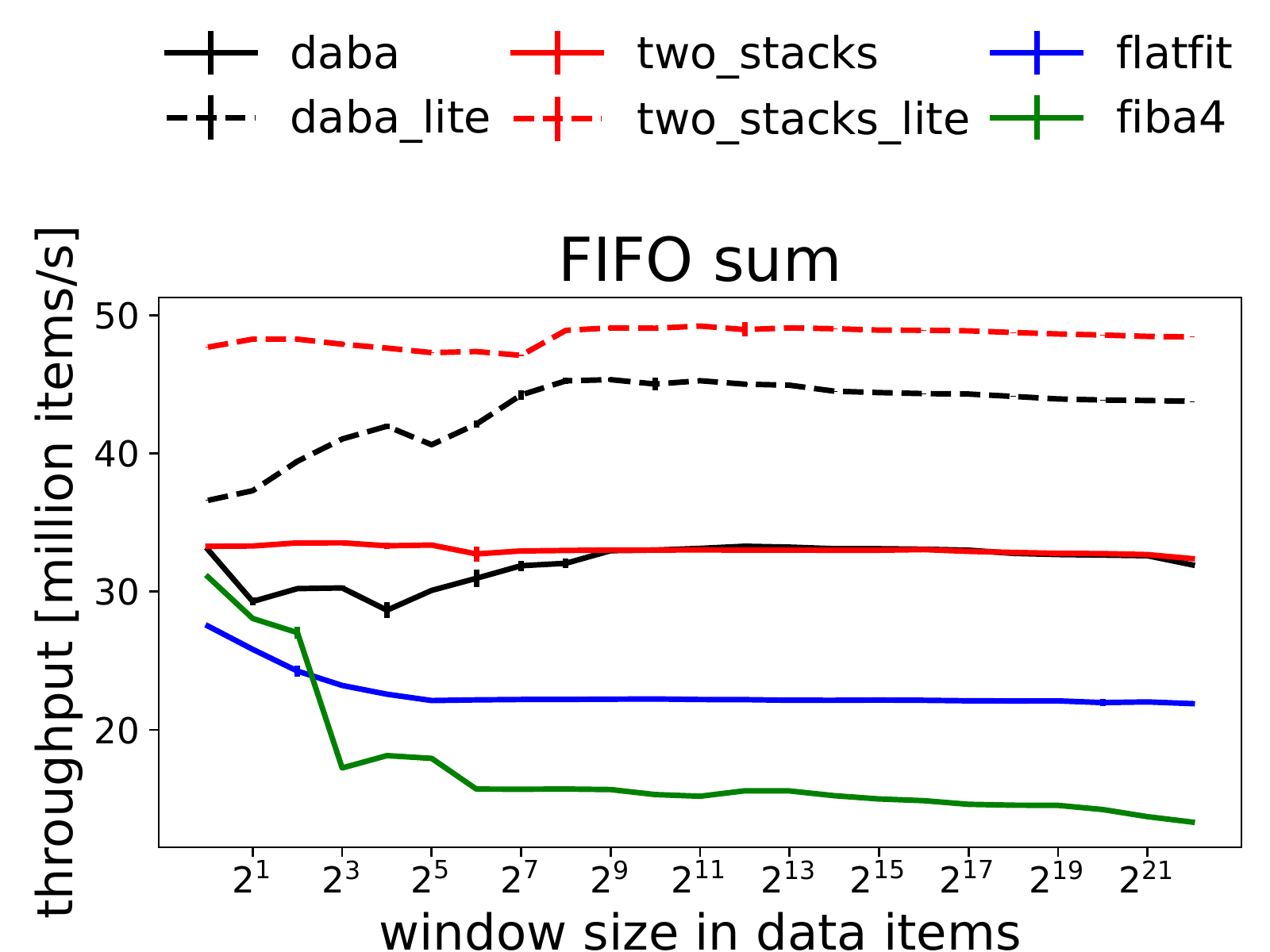}
\includegraphics[width=0.33\textwidth]{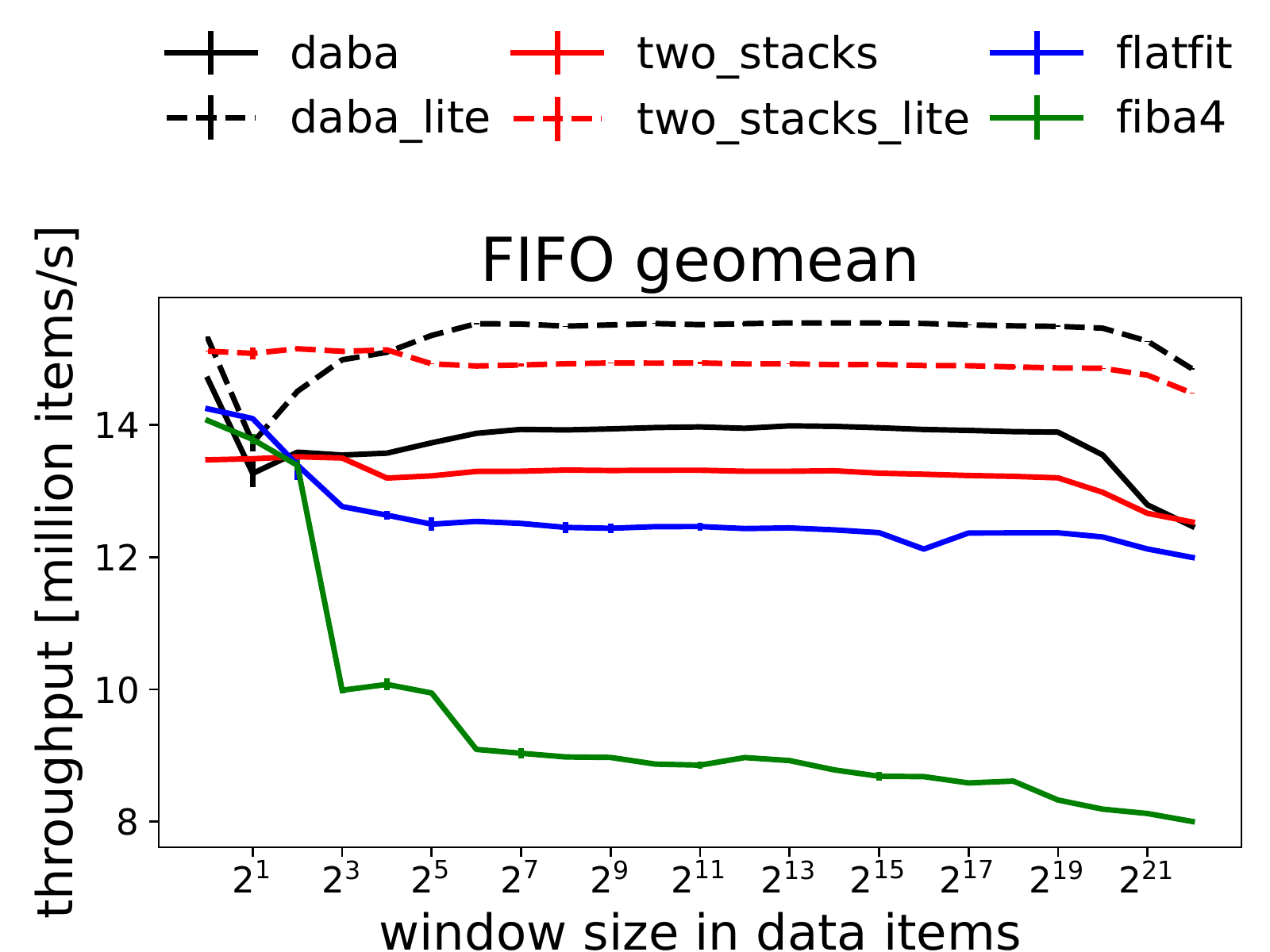}
\includegraphics[width=0.33\textwidth]{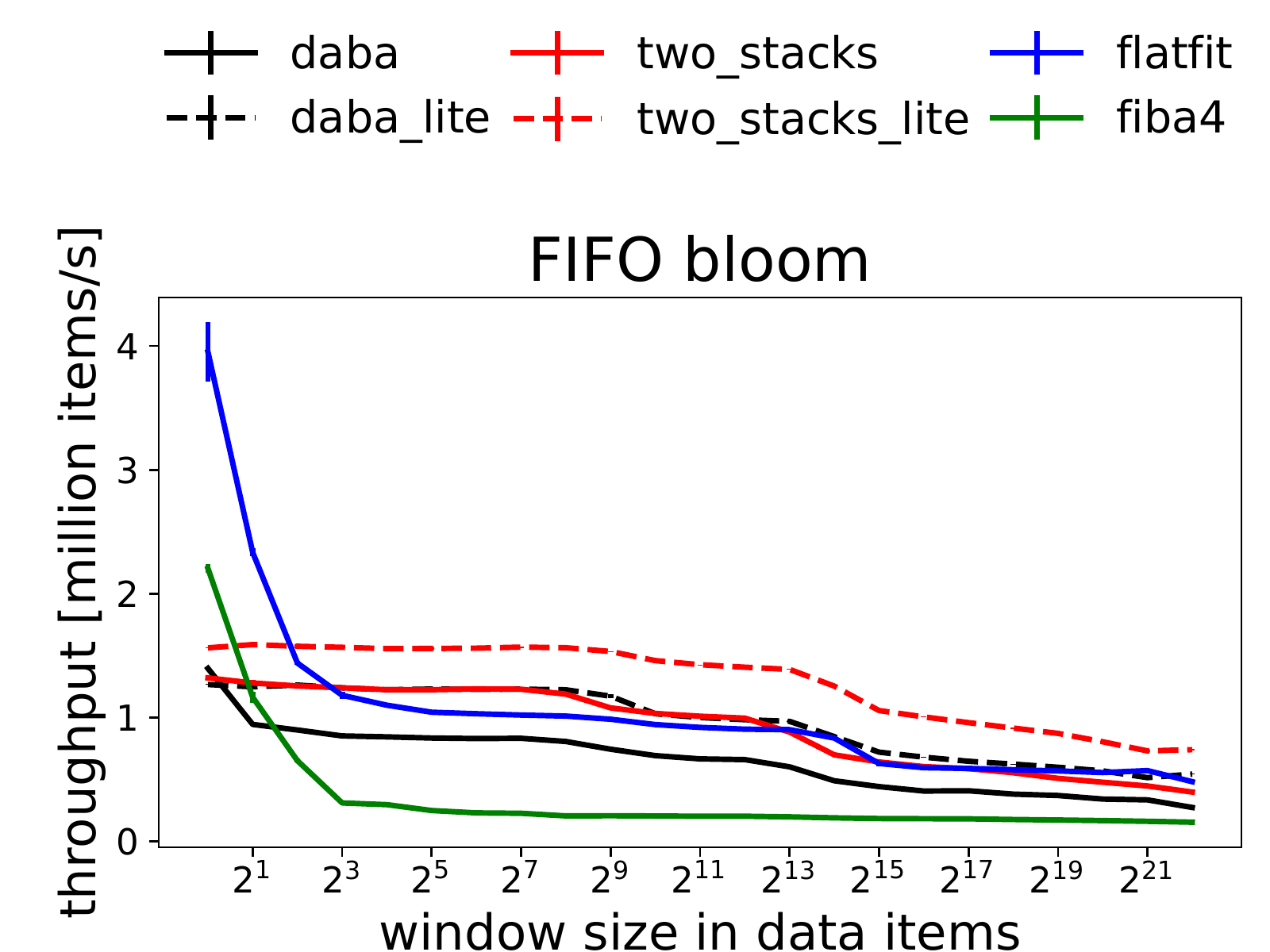}
\caption{Throughput for static count-based windows with synthetic data.}
\label{fig_throughput}
\end{figure*}

The purpose of our experimental evaluation is to test whether DABA's 
worst-case constant algorithmic complexity yields low latency and 
high throughput in practice, and to test whether DABA Lite is always 
more efficient than DABA.

Our experiments use six different SWAGs: Two-Stacks, Two-Stacks Lite, and
FlatFIT~\cite{shein_chrysanthis_labrinidis_2017} are all amortized $O(1)$,
worst-case $O(n)$ algorithms designed for FIFO data. As originally published,
FlatFIT does not support dynamic windows. We use a modified version that
resizes FlatFIT's circular buffer using the standard array doubling/shrinking
technique without disturbing FlatFIT's internal pointer structure.  This
results in an additional amortized $O(1)$ time per operation but supports
dynamic windows and guarantees that the memory footprint is within a constant
factor of the window size.  We also adapted the published FlatFIT algorithm to
our SWAG framework. DABA and DABA Lite are both worst-case $O(1)$ algorithms
designed for FIFO data.  FiBA~\cite{tangwongsan_hirzel_schneider_2019} is
designed for out-of-order data and reduces to amortized $O(1)$, worst-case
$O(\log n)$ in the FIFO case. All of our experiments with FiBA use a min-arity
of 4.

We chose three representative aggregation operators to span the execution
cost spectrum. The operator \typeName{sum} is the sum of all items in the
window, and it represents aggregation operations so cheap that the traversal
and changes to the underlying data structure should dominate performance. The
operator \typeName{bloom} applies a Bloom filter to all items in the window,
and it represents aggregation operations where the operator itself dominates
performance. The operator is so expensive that minimizing calls to it matters
more than changes to the underlying data structure. Finally, \typeName{geomean}
computes the geometric mean of all items in the window. It represents a
middle ground of operator cost.

We implemented all algorithms in C++11, using the g++ compiler version 7.5.0
with optimization level~\lstinline{-O3}.  Our system runs Red Hat 7.3, with
Linux kernel version 3.10.0. The processor is an Intel Platinum 8168 at 2.7
GHz. All implementations, experiments, and post-processing scripts used in 
this section are available from the open-source project Sliding Window Aggregators\footnote{Available at \url{https://github.com/IBM/sliding-window-aggregators}. Our experiments use the C++ 
implementations and benchmarks, as well as the Python scripts from commit \lstinline{41ee775}.}.

\subsection{Static Windows}

The experiments in Figs~\ref{fig_latency} and~\ref{fig_throughput} use a
static count-based window with synthetic data. In each experiment, we first
\procName{insert} $n$ data items, where $n$ is the size of the window. The
timed part of the experiment consists of rounds of \procName{evict},
\procName{insert}, and \procName{query}. For the latency experiments, we record
all times for 10 million rounds with a fixed window size of $2^{14}$ data items.
For the throughput experiments, we time how long it takes to complete 200
million rounds, and we vary $n$ from 1 to $2^{22}$.

The practical reason to choose a worst-case $O(1)$ aggregator is to minimize
latency. Both Two-Stacks and Two-Stacks Lite in Fig~\ref{fig_latency} tend to
have lower minimum latency than both DABA and DABA Lite. But, true to their
linear worst-case, Two-Stacks and Two-Stacks Lite regularly suffer from an
order-of-magnitude higher latency.  This trend becomes
more pronounced as the cost of the aggregation function increases.  Unlike the
other aggregators, FiBA is tree based. As maintaining the tree is more up-front
work, it tends to have high minimum and median latency. But, also being tree
based, its worst-case behavior is bounded by $O(\log n)$; it has lower
worst-case latency than the worst-case $O(n)$ aggregators. FlatFIT is not a
tree-based structure, but the access pattern during queries ends up having
similar properties: successive indirect accesses to different parts of the
window. Each query pushes indices onto a
stack, and then pops indices from the
stack to indirectly access the window.\footnote{Our implementation performs an
optimization where the same stack is reused across queries. This is safe
because the stack is always empty at the end of a query. For dynamic windows,
the number of indices involved can be non-constant. Avoiding the
recreation of the stack and reusing the same memory makes about a 20\%
difference in throughput, but does not change FlatFIT's overall comparative
performance.} This is a large amount of work, and when the aggregation
operation is cheap, this work dominates performance and yields a high latency
floor. 

All of the aggregators are able to maintain close to constant behavior in
Fig~\ref{fig_throughput}, although there are large differences between them.
Surprisingly, DABA's throughput is more competitive with Two-Stacks than in
prior
work~\cite{tangwongsan_hirzel_schneider_2015,tangwongsan_hirzel_schneider_2017}.
We attribute this difference to a more modern compiler with more aggressive
inlining and dead-code elimination. Both DABA Lite and Two-Stacks Lite always
outperform the corresponding non-Lite versions. FlatFIT becomes more competitive with
expensive operations as its indirect accesses become less important compared to
the total number of calls to the aggregation operation.

\begin{figure*}[!t]
\center
\includegraphics[width=0.33\textwidth]{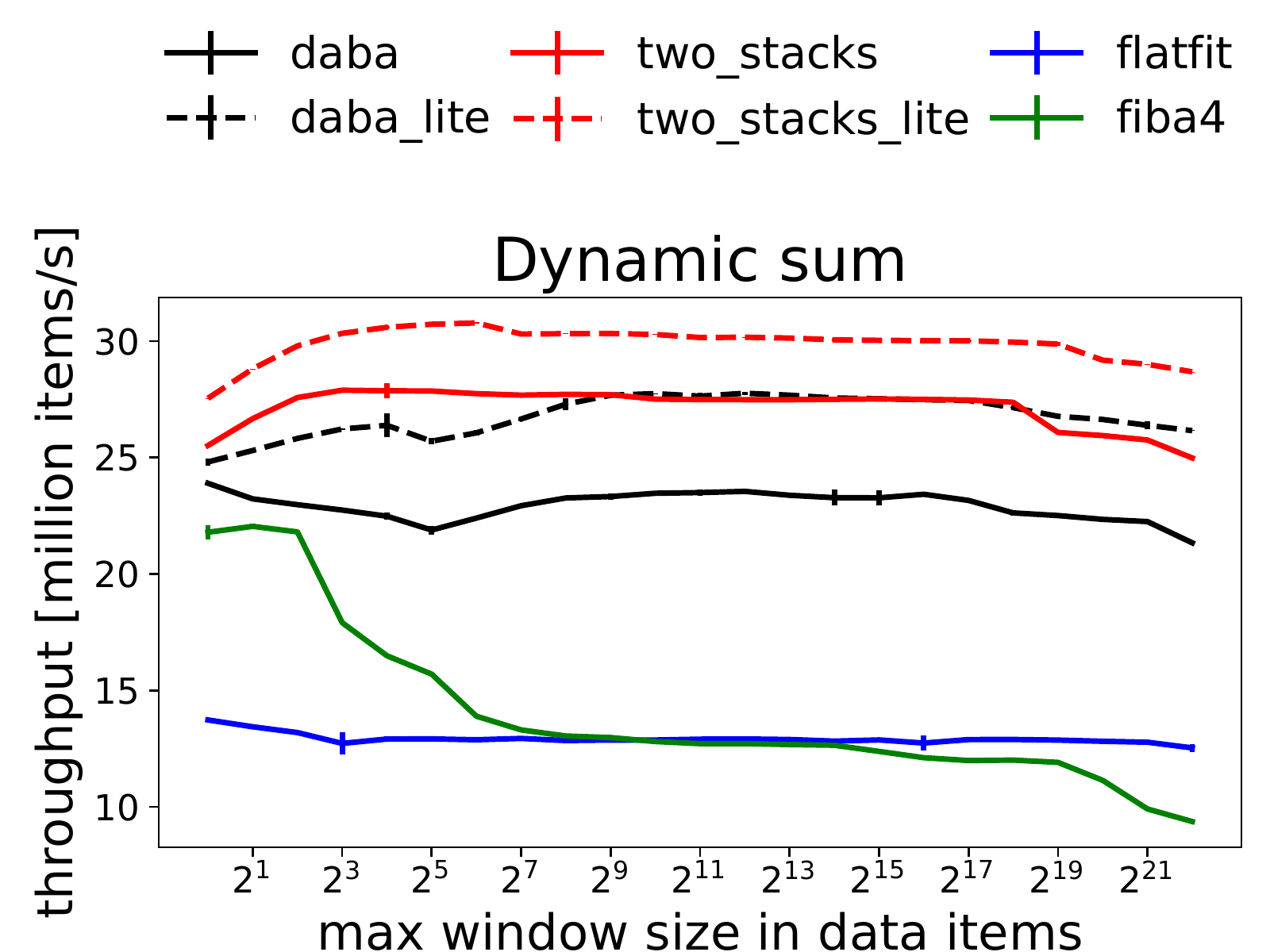}
\includegraphics[width=0.33\textwidth]{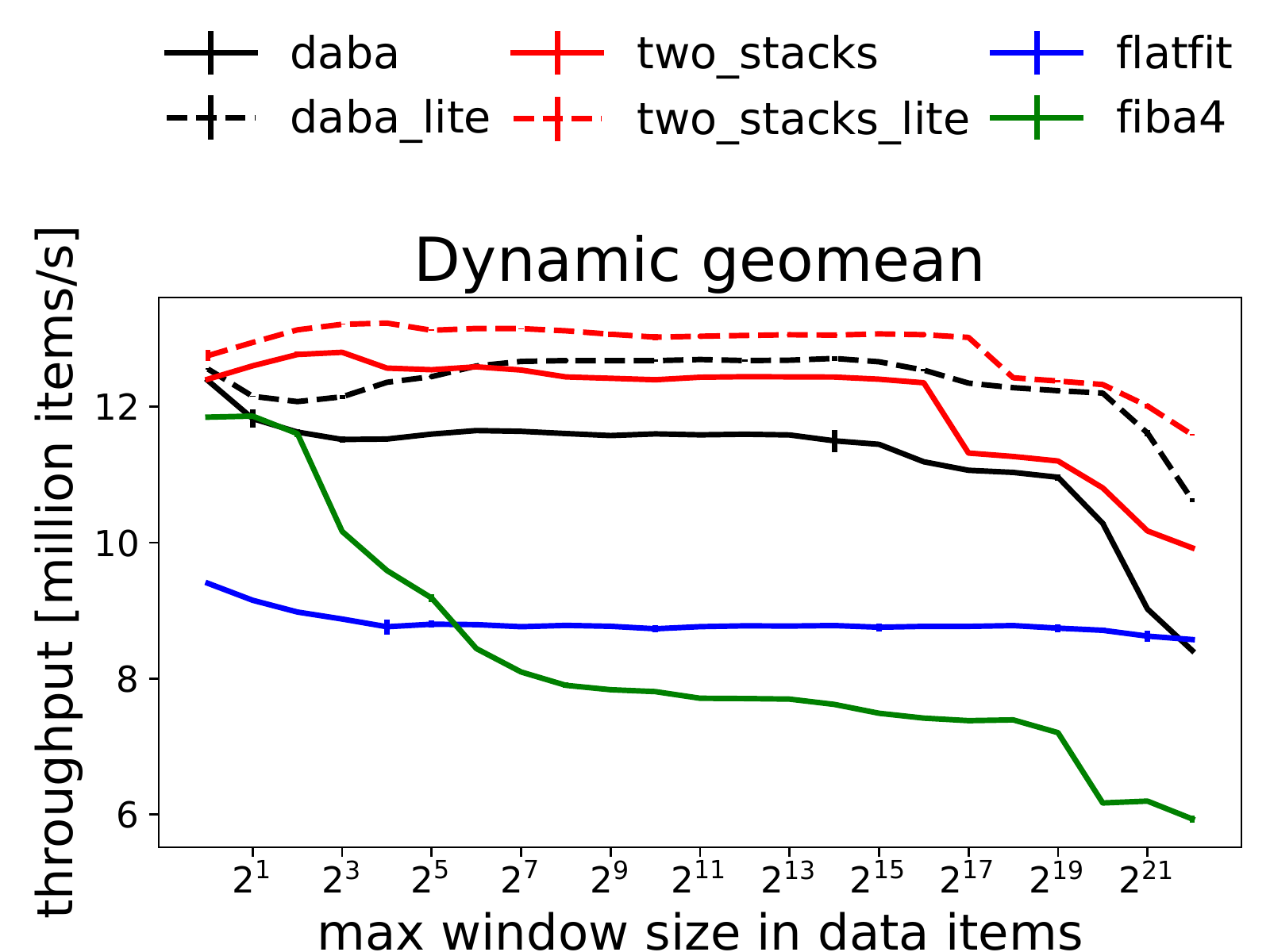}
\includegraphics[width=0.33\textwidth]{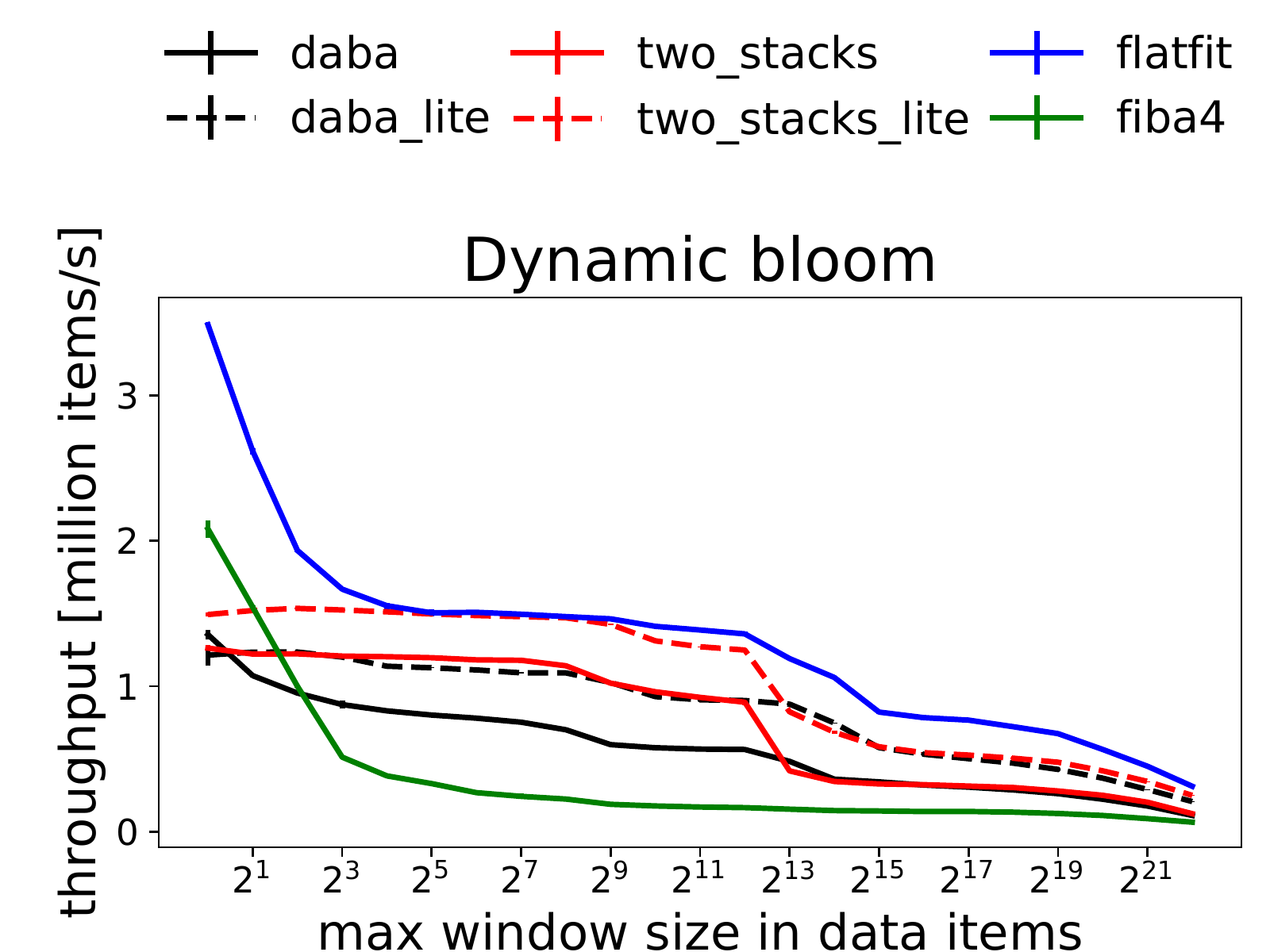}
\caption{Throughput for dynamic count-based windows with synthetic data.}
\label{fig_dynamic}
\end{figure*}

\subsection{Dynamic Windows}

The experiments in Fig.~\ref{fig_dynamic} use a dynamic count-based window
with synthetic data. The experiments time a fill-and-drain pattern for a
total of 200 million data items. It performs \procName{insert} and
\procName{query} until reaching the window size $n$, and then calls
\procName{evict} until the window is down to 0, then repeats. We vary $n$ from 
1 to $2^{22}$. 

The throughput trends are largely the same as with static windows, which is the
point of these experiments: even with dynamically changing window sizes, the
fundamental properties of these streaming aggregation algorithms remain mostly
the same. The one major difference is FlatFIT, whose throughput is consistently
the best for \typeName{bloom}, which is the most expensive aggregation operation. This experimental design
happens to be close to a best-case for FlatFIT, as it does not call the
aggregation operation on evictions. FlatFIT only calls the aggregation
operation on queries. The other algorithms require calling the aggregation
operation on evictions in order to maintain their various properties of their
partial aggregates. But, since the experiment performs no queries when it drains
the window, such work is ``wasted'' in this case.

\subsection{Real Data}

\newcommand*{\twWidth}[0]{\tau}
The experiments in Fig.~\ref{fig_data} use dynamic event-based windows based
on real data. We use the dataset from the DEBS 2012 Grand
Challenge~\cite{debs2012_gc}, which recorded data from manufacturing equipment
at approximately 100 Hz. We removed about 1.5\% of the 32.3 million events to
enforce in-order data to make it suitable for FIFO aggregation algorithms.  Our
experiments maintain an event-time-based window of $\twWidth{}$ seconds, which means that
the actual number of data items in that window will fluctuate over time. We do not
start measuring until the window has evicted its first data item. In the
throughput experiments, we vary $\twWidth$ from 10 milliseconds to 6 hours.  In the
latency experiments, we choose a window of 10 minutes. For both experiments, we
use an aggregation operation inspired by Query 2 of the DEBS 2012 Grand
Challenge: relative variation.

\begin{figure}[t]
\center
\includegraphics[width=0.33\textwidth]{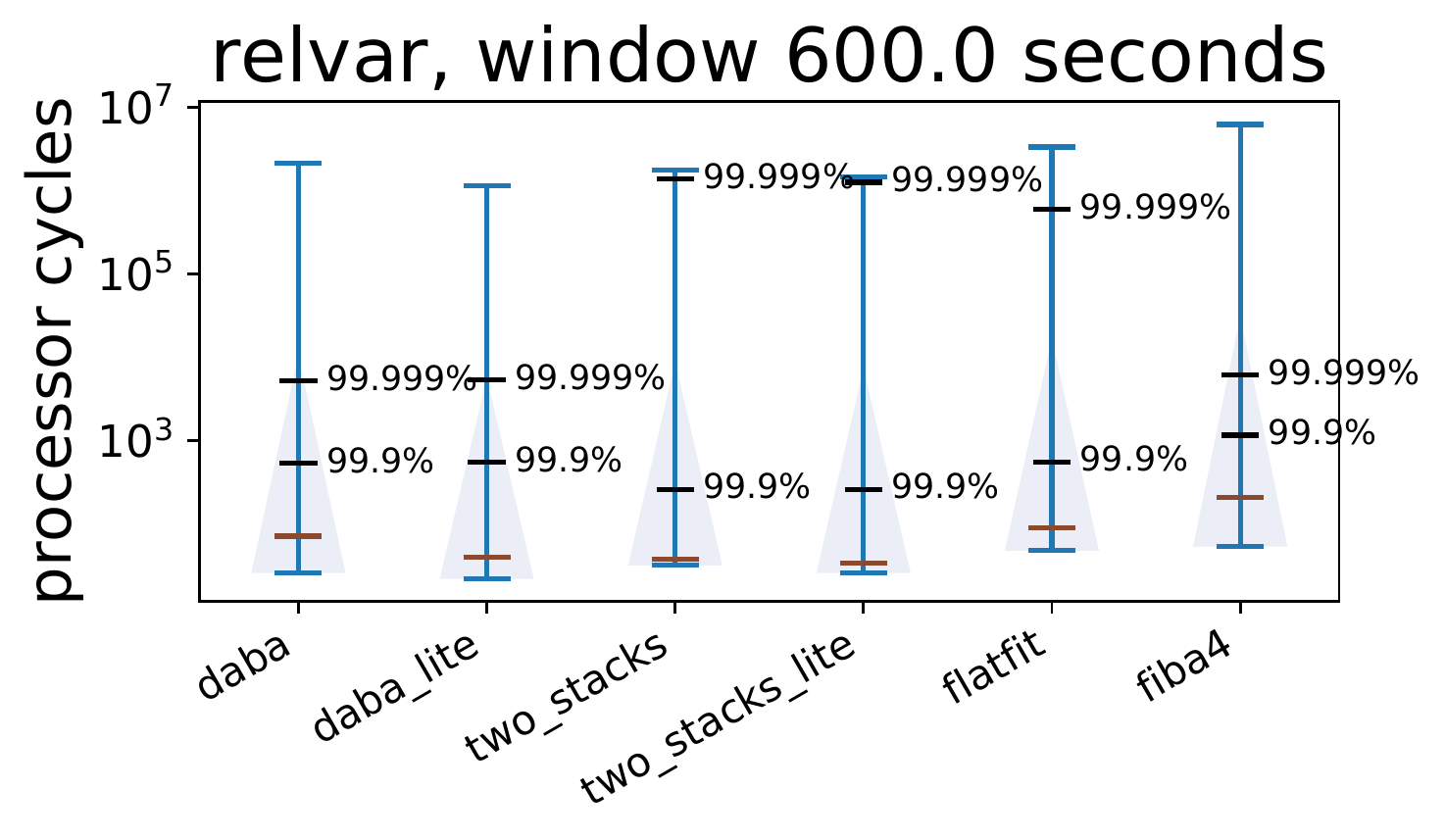} \\
\includegraphics[width=0.33\textwidth]{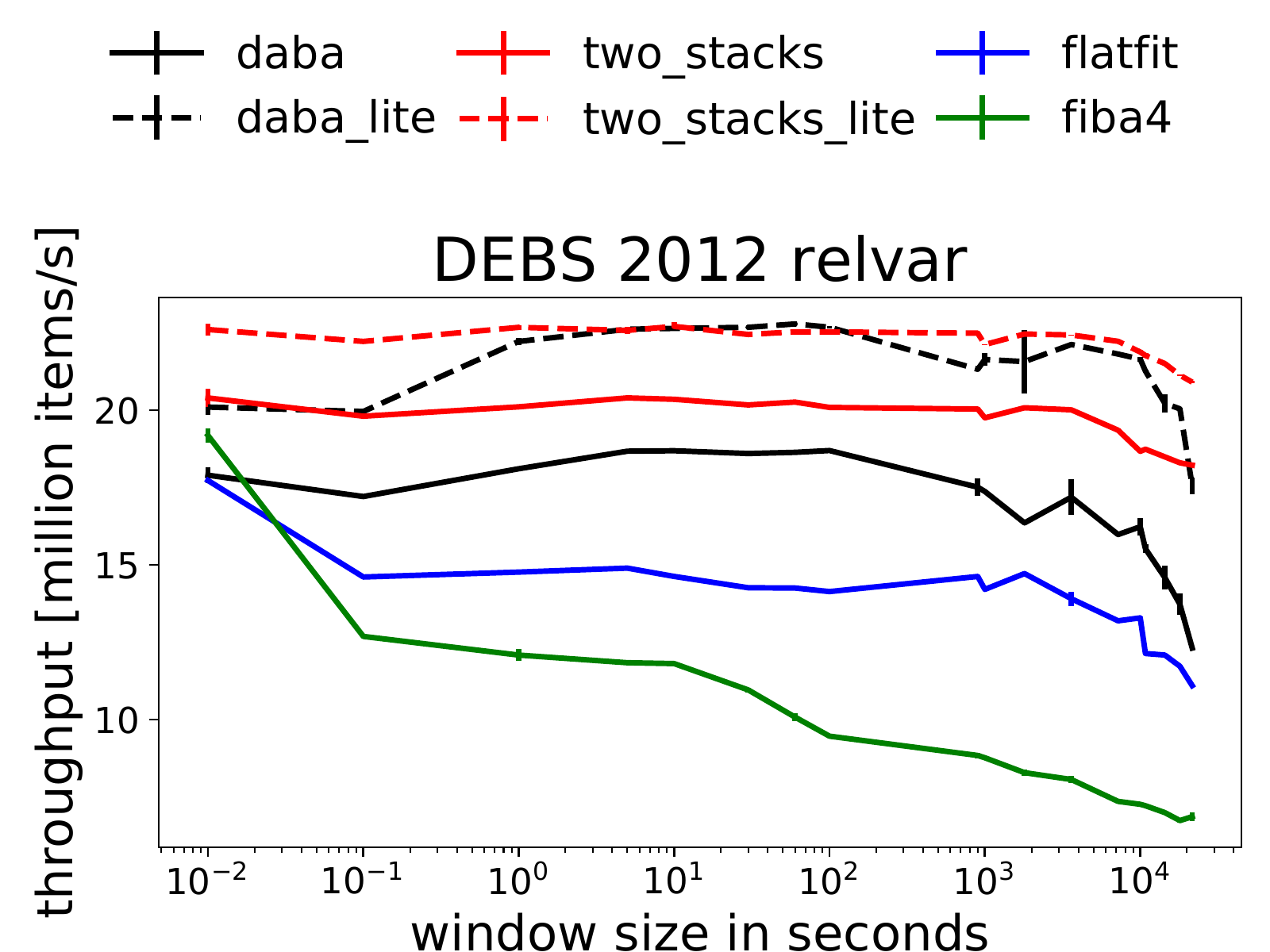}
\caption{Throughput and latency for dynamic event-based windows with the manufacturing 
         equipment data set.}
\label{fig_data}
\end{figure}

Because FiBA was designed for out-of-order data, it natively has a concept of
timestamps. But the other algorithms are FIFO aggregators and do not natively
support timestamps. These sets of experiments use modified versions of all of
the other aggregators that add support to query the youngest and oldest
timestamps in the window. We use those queries to maintain the event-time 
window of $\twWidth{}$ seconds.

The results are consistent with the previous experiments, with two exceptions.
First, DABA, DABA Lite, Two-Stacks, Two-Stacks Lite, and FlatFIT experience
similar maximum latency, unlike with the static experiments. This latency
similarity is caused by rare bulk evictions: rounds where more than 100 items
are evicted experienced latencies more than 100$\times$ greater than the previous
round. Since maintaining the same window is a shared property across all
experiments, all algorithms have similar maximum latency. Second, the
throughput of all algorithms experiences some degradation after $10^3$ seconds.
This is because the actual window size is becoming a large enough fraction of
the total data that rare events are not amortized.

\subsection{Discussion}

Our experiments demonstrate several consistent behaviors. The throughput of the
Lite variants of both DABA and Two-Stacks are always significantly better than
the original version, but the overall difference in latency is less dramatic.
Two-Stacks Lite tends to have the best throughput. Since we lacked
access to an implementation of FlatFIT by its authors, we
re-implemented it based on their paper and extended it to support
variable-sized windows. Our implementation of
FlatFIT tends to have a high latency floor, but a median that is close to that
floor. However, its maximum latency is consistent with being worst-case $O(n)$,
and its throughput is only competitive with expensive operators or when the
experiment happens to align with its design.

The theoretical analysis established that both DABA and DABA Lite are
worst-case constant time algorithms. But low latency and competitive
throughputs are sensitive to what the actual constant is. Our experiments
demonstrate that DABA and DABA Lite are able to realize low latency and
competitive throughput across a large range of $n$ with both static and
dynamic windows and with real data.


\section{Related Work}\label{sec_related}

This section discusses the literature on sliding window aggregation
algorithms with an emphasis on in-order streams and associative
aggregation operators~$\otimes$. As before, let $n$ be the window
size.

\subsection{Solutions to the Same Problem}\label{sec_related_sameproblem}

Section~\ref{sec_problemdef} formalized the problem statement as an
abstract data type called SWAG for first-in-first-out sliding window
aggregation. This section discusses concrete algorithms that implement
the abstract data type.  Section~\ref{sec_problemdef} phrased
associative aggregation operators as monoids. While some monoids are
invertible or commutative, that is not true for all monoids, and thus,
this section only lists algorithms that work without such additional
algebraic properties. This section presents algorithms in
chronological order by publication date.

\emph{Recalculate-from-scratch} implements SWAG by maintaining a FIFO
queue of all data items and calculating the aggregation of the entire
queue for each query. Each query requires $O(n)$ invocations
of~$\otimes$.  The queue takes up space for $n$ data stream values.

The \emph{B-Int} algorithm from 2004~\cite{arasu_widom_2004}
implements SWAG using base intervals, which are similar to a balanced
binary tree. The time complexity is $O(\log n)$ invocations
of~$\otimes$ and the space complexity is around $2n$ partial
aggregates: $n$ for leaves plus \mbox{$\sim n$} for internal nodes.

The \emph{Two-Stacks} idea was mentioned in a Stack Overflow post from
2011~\cite{adamax_2011}, which described the idea for one aggregation
operator, minimum. Even though Two-Stacks is a SWAG algorithm, it was
not immediately noticed as such by the academic community.  As
discussed in Section~\ref{sec:twostacks}, the time complexity is
amortized $O(1)$ invocations of~$\otimes$ with a worst-case of~$O(n)$,
and the space is $2n$ partial aggregates.

The \emph{Reactive Aggregator} from 2015 implements SWAG using a
perfect binary tree. This algorithm uses a data structure called
FlatFAT, which stands for flat fixed-sized aggregator and represents a
perfect binary tree without storing explicit pointers and without
needing any rebalancing. The time complexity is amortized $O(\log n)$
with a worst-case of~$O(n)$ invocations of~$\otimes$.  If $n$ is a
power of two, FlatFAT requires space for $\sim 2n$ partial aggregates:
$n$ leaves plus $n$ interior nodes. If $n$ is slightly above a power
of two, the space can be up to $\sim 4n$.

The \emph{DABA} algorithm was first published in
2015~\cite{tangwongsan_hirzel_schneider_2015}. DABA was inspired by Okasaki's
purely functional queues and deques~\cite{okasaki_1995}. However, the two differ
substantially: DABA is not a purely functional data structure, and Okasaki's
data structures do not implement sliding window aggregation. As discussed in
Section~\ref{sec:daba}, DABA requires worst-case $O(1)$ invocations of~$\otimes$
and the space is $2n$ partial aggregates.

The \emph{FlatFIT} algorithm from 2017 implements SWAG via a flat and
fast index traverser~\cite{shein_chrysanthis_labrinidis_2017}.  The
time complexity is amortized $O(1)$ invocations of~$\otimes$ with a
worst-case of~$O(n)$. The algorithm stores $n$ partial aggregates as
well as $n$ pointers, which are indices into the window for stitching
together the partial aggregates of subranges. The algorithm requires
an additional stack of indices for pointer updates, and the authors
report the total space requirements as up to~$2.5n$.

The \emph{Hammer Slide} paper from 2018~\cite{theodorakis_et_al_2018}
starts from Two-Stacks and optimizes it further. The time complexity
remains amortized $O(1)$ invocations of~$\otimes$ with a worst-case
of~$O(n)$. One of the optimizations from Hammer Slide is to only store
$n+1$ partial aggregates by observing what is needed for the front
stack and the back stack. The Two-Stacks Lite algorithm in
Section~\ref{sec:twostackslite} takes inspiration from Hammer Slide.

The \emph{AMTA} algorithm from
2019~\cite{villalba_berral_carrera_2019} implements SWAG via an
amortized monoid tree aggregator.  AMTA adds sophisticated tree
representations that optimize FIFO insert and evict. Its amortized
algorithmic time complexity is $O(1)$ invocations of~$\otimes$, with a
worst-case of $O(\log n)$. Like other tree-based SWAGs, AMTA requires
$\sim 2n$ space for $n$ leaves and $\sim n$ inner nodes. KVS-AMTA is
an out-of-memory variant that externalizes most of this space into a
key-value store.

The \emph{FiBA} algorithm from
2019~\cite{tangwongsan_hirzel_schneider_2019} implements SWAG via a
finger B-tree aggregator. FiBA uses finger pointers, position-aware
partial aggregates, and a suitable rebalancing strategy to optimize
insert and evict near the start and end of the window. For the FIFO
case, its amortized algorithmic time complexity is $O(1)$ invocations
of~$\otimes$, with a worst-case of $O(\log n)$. Its space complexity
depends on the arity of the B-tree. Since the minimum arity of B-trees
is more than binary, B-trees store fewer than $\sim 2n$ partial
aggregates.

The \emph{DABA Lite} algorithm has not been published before, making
it an original contribution of this paper. As discussed in
Section~\ref{sec:dabalite}, the time complexity is worst-case $O(1)$
invocations of~$\otimes$ and the space is $n+2$ partial aggregates.

\subsection{Complementary Techniques}\label{sec_related_complementary}

While Section~\ref{sec_problemdef} states the core problem for
sliding-window aggregation, there are often additional requirements.
This section discusses techniques for augmenting algorithms that
implement the SWAG abstract data type (including DABA and DABA Lite)
to solve a broader set of problems.

\emph{Coarse-grained sliding} reduces the effective window size $n$ by
storing only a single partial aggregate for values that will be
evicted together. A state-of-the-art algorithm for coarse-grained
sliding is Scotty~\cite{traub_et_al_2018}. Reducing $n$ reduces the
time complexity of any algorithms whose time complexity depends
upon~$n$. Being worst-case $O(1)$, the time complexity of DABA and
DABA Lite does not depend on~$n$.  Reducing $n$ also reduces the space
complexity, which is somewhere between $n$ and $\sim 4n$ for all SWAG
algorithms from Section~\ref{sec_related_sameproblem}.

\emph{Bounded disorder handling} tolerates out-of-order arrivals of
data stream items as long as the disorder is not too large. Srivastava
and Widom described how to handle bounded disorder by buffering
incoming data items~\cite{srivastava_widom_2004}. Later, when data
items are released from the buffer, they are ordered by their nominal
timestamps. That makes it possible to use inorder SWAG algorithms from
Section~\ref{sec_related_sameproblem}.

\emph{Partition parallelism} is a way to parallelize stateful
streaming applications as long as the computation for each partition
key is independent from the computation for the other
keys~\cite{schneider_et_al_2015}. Sliding-window aggregation is often
used in a way that satisfies this requirement, by aggregating
separately within each key. In that case, parallelization can just
maintain separate instances of a given SWAG. For this to work well, it
is best not to conservatively preallocate too much memory, lest the
data structures for rare keys take up too much space.

Given various algorithms and techniques, how can we pick and combine
the right ones for a given problem? A recent paper by Traub et
al.~\cite{traub_et_al_2019} presents decision trees for dispatching to
the right combination given the stream order, window kinds,
aggregation operators, window sharing, etc. We argue that DABA Lite
should be used for the inorder case with associative aggregation
operators.

\subsection{Solutions to Other Problems}\label{sec_related_otherproblems}

Of course there are also problems around sliding-window aggregations
where it does not suffice to just combine a SWAG algorithm from
Section~\ref{sec_related_sameproblem} with a complementary technique
from Section~\ref{sec_related_complementary}. This section highlight a
few such problems with solutions; for more details
see~\cite{hirzel_schneider_tangwongsan_2017}.

\emph{Window sharing} serves sliding window aggregation que\-ries for
multiple window sizes from a single data structure. Not all data
structures are suitable for this. SWAG algorithms from
Section~\ref{sec_related_sameproblem} that support window sharing
include \mbox{B-Int}~\cite{arasu_widom_2004},
FlatFIT~\cite{shein_chrysanthis_labrinidis_2017}, and
FiBA~\cite{tangwongsan_hirzel_schneider_2019}.
The SlideSlide algorithm implements SWAG for fixed-sized
windows~\cite{theodorakis_pietzuch_pirk_2020}.

\emph{Unbounded disorder handling} tolerates out-of-order arrivals
that are arbitrarily late, incorporating them into the data structure
whenever they arrive. Truviso accomplishes this for the case where
multiple input streams have drifted arbitrarily far from each other,
as long as each of the input streams is internally
in-order~\cite{krishnamurthy_et_al_2010}. FiBA supports general
out-of-order sliding window aggregation without restrictions on the
degree of disorder~\cite{tangwongsan_hirzel_schneider_2019}.  The
algorithmic complexity of FiBA matches the theoretical lower bound for
this problem.

When it comes to \emph{aggregation operators}, some algorithms are
more restrictive than our problem statement from
Section~\ref{sec_problemdef}. For instance, subtract-on-evict is a
simple algorithm that only works when subtraction is well-defined, in
other words, when the $\otimes$ operator is invertible.  Similarly,
SlickDeque~\cite{shein_chrysanthis_labrinidis_2018} only works for
aggregation operators that are either invertible or that satisfy the
property that \mbox{$x\otimes{}y\in\{x,y\}$}. On the other hand, there
are also some aggregation operators for which our problem statement
from Section~\ref{sec_problemdef} is a poor fit. One of the most
prominent ones is median, or more generally, percentile
aggregation. An efficient solution for sliding-window median and
percentiles is an order statistics tree~\cite{hirzel_et_al_2016}.



\section{Conclusion}\label{sec_conclusion}

This paper is a journal version of our earlier conference
paper~\cite{tangwongsan_hirzel_schneider_2017} about DABA,
the first algorithm for in-order sliding window aggregation in
worst-case constant time. Besides providing a more comprehensive
description of DABA, this paper also introduces a new
algorithm called DABA Lite that improves over DABA. Where DABA
requires space to store $2n$ partial aggregates, DABA Lite only stores
$n+2$ partial aggregates. Whereas DABA requires on average 2.5
invocations of the underlying monoid per insert and 1.5 per evict,
DABA Lite requires on average only 2 invocations per insert and 1 per
evict. The worst-case time complexity is constant just like for DABA.

DABA and DABA Lite have several desirable properties. They only
require an associative monoid (no need for commutativity nor
invertibility).  They support dynamically-sized windows, where the
window size can fluctuate throughout the execution, for instance, due to
a variable interarrival rate of stream data items.  They are built on
a simple flat data structure, thus avoiding memory-copy or allocation
churn, as well as avoiding excessive pointer chasing. Our experiments
demonstrate that DABA Lite performs well compared to other
sliding-window aggregation algorithms.


\bibliographystyle{spmpsci}      
\bibliography{local}
\balance

\end{document}